\documentclass[11pt]{article}
\usepackage{fullpage}

\usepackage{url}
\usepackage[hidelinks]{hyperref}
\usepackage[utf8]{inputenc}
\usepackage[small]{caption}
\usepackage{graphicx}
\usepackage{amsmath}
\usepackage{mathtools}
\usepackage{booktabs}
\usepackage{comment}
\usepackage{amsthm}
\usepackage{amssymb}
\usepackage{bm}

\usepackage{cleveref}
\usepackage{color}
\usepackage{xspace}
\usepackage{tikz}
\usepackage{tabularx}
\usepackage{pbox}
\usepackage{framed}
\usepackage[shortlabels]{enumitem}
\usepackage{nicefrac}

\usepackage{natbib}

\usepackage{algorithm}
\usepackage[noend]{algorithmic}

\usepackage{xcolor}
\definecolor{Gred}{RGB}{219, 50, 54}
\definecolor{Ggreen}{RGB}{60, 186, 84}
\definecolor{Gblue}{RGB}{72, 133, 237}
\definecolor{Gyellow}{RGB}{247, 178, 16}
\definecolor{ToCgreen}{RGB}{0, 128, 0}
\definecolor{myGold}{RGB}{231,141,20}
\definecolor{myBlue}{rgb}{0.19,0.41,.65}
\definecolor{myPurple}{RGB}{175,0,124}

\usepackage[colorinlistoftodos,prependcaption,textsize=scriptsize]{todonotes}
\newcommand{\todoauthorsetup}[4]{%
	\expandafter\def\csname #1\endcsname##1{\todo[linecolor=#3,backgroundcolor=#3!25,bordercolor=#3]{#2: ##1}}%
}
\todoauthorsetup{pasin}{Pasin}{Gblue}{3}
\todoauthorsetup{adam}{Adam}{myPurple}{3}
\todoauthorsetup{ravi}{Ravi}{Ggreen}{3}
\todoauthorsetup{pritish}{Pritish}{orange}{3}

\newtheorem{theorem}{Theorem}
\newtheorem{corollary}{Corollary}

\newtheorem{lemma}{Lemma}

\theoremstyle{definition}
\newtheorem{definition}{Definition}

\newtheorem{observation}{Observation}

\newcommand{\bzero}{\mathbf{0}}
\newcommand{\bone}{\mathbf{1}}

\newcommand{\cX}{\mathcal{X}}
\newcommand{\cH}{\mathcal{H}}
\newcommand{\cF}{\mathcal{F}}

\DeclareMathOperator{\poly}{poly}

\newcommand{\oW}{\overline{W}}

\newcommand{\Paren}[1]{\left(#1\right)}

\newcommand{\etdprob}{\textsc{Extended-Threshold-Dimension}\xspace}
\newcommand{\gapetdprob}[2]{\textsc{Gap-Extended-Threshold-Dimension}$\left(#1,#2\right)$\xspace}
\newcommand{\tdprob}{\textsc{Threshold-Dimension}\xspace}

\newcommand{\exthd}[1]{\text{ExThD}\Paren{#1}}
\newcommand{\thd}[1]{\text{ThD}\Paren{#1}}
\newcommand{\sli}[1]{\text{SLI}\Paren{#1}}
\newcommand{\li}[1]{\text{LI}\Paren{#1}}
\newcommand{\thdclosure}[1]{\thd{\overline{#1}}}
\newcommand{\exactbicliqueprob}{\textsc{Maximum-Edge-Biclique}\xspace}
\newcommand{\exactsemiladderprob}{\textsc{Semi-Ladder-Index}\xspace}
\newcommand{\ladderprob}{\textsc{Ladder-Index}\xspace}
\newcommand{\gapbicliqueprob}[2]{\textsc{Gap-Balanced-Biclique}$(#1,#2)$\xspace}
\newcommand{\tgapbicliqueprob}{\textsc{Gap-Balanced-Biclique}\xspace}

\newcommand{\gapclosuretdprob}[2]{\textsc{Gap-(Closure-)Threshold-Dimension}$(#1,#2)$\xspace}
\newcommand{\gapsliprob}[2]{\textsc{Gap-(Semi-)Ladder-Index}$(#1,#2)$\xspace}
\newcommand{\tsetsplit}{\textsc{2-2-Set Splitting}\xspace}
\newcommand{\baltsetsplit}{\textsc{Balanced 2-2-Set Splitting}\xspace}

\newcommand{\oT}{\overline{T}}
\newcommand{\cHbase}{\cH_{\mathrm{base}}}
\newcommand{\supp}{\mathrm{supp}}
\newcommand{\argmin}{\mathrm{argmin}}

\allowdisplaybreaks

\title{On the Computational Complexity of\\ (Extended) Threshold Dimension and (Semi-)Ladder Index}
\author{Pasin Manurangsi \\ Google Research}
\date{\today}

\begin{document}

\maketitle

\begin{abstract}
We study the complexity of computing the \emph{Threshold dimension} of a hypothesis class and its variant, the \emph{Extended threshold dimension}. For the latter, we prove that it is both NP-hard and co-NP-hard, which (partially) answers an open question of Dmitriev et al. (SODA 2026). Furthermore, by relating the problem to a variant of Maximum Balanced Biclique, we prove strong hardness of approximation for both dimensions, including in the parameterized setting.

As an intermediate result, we also prove hardness (of approximation) results for computing the \emph{ladder index} and the \emph{semi-ladder index} (Fabianski et al., STACS 2019), which have recently been used in the design of fixed-parameter tractable algorithms.
\end{abstract}

\section{Introduction}

For a finite space $\cX$, a hypothesis class $\cH \subseteq \{0, 1\}^{\cX}$ is a set of  hypotheses $h: \cX \to \{0, 1\}$. Roughly speaking, its \emph{Threshold dimension} is the number of thresholds that can be formed from restrictions of $\cH$. This can be formalized as follows.

\begin{definition}[Threshold Dimension~\cite{shelah_classification_1978,hodges_shorter_1997}] \label{def:thd}
The \emph{Threshold dimension} of a hypothesis class $\cH \subseteq \{0, 1\}^{\cX}$, denoted by $\thd{\cH}$, is defined as the largest non-negative integer $d$ such that there exists $x_1, \dots, x_d \in \cX$ and $t_0, \dots, t_d \in \cH$ where $t_i(x_j) = 1$ if and only if $j \leq i$.

Such $x_1, \dots, x_d, t_0, \dots, t_d$ is referred to as the \emph{witness} of $\thd{\cH}$.
\end{definition}

Threshold dimension is closely related to the \emph{Littlestone dimension}, which characterizes the optimal mistake bound in online learning \cite{littlestone_learning_1988}. In particular, both dimensions are within an exponential factor of each other. More recently, the Threshold dimension has also been used for characterization of PAC learning with (approximate) differential privacy \cite{alon_private_2019}.

Dmitriev et al.~\cite{DmitrievFHS26} initiated a study of online learning \emph{with replays} where, in each round, the adversary can choose to reveal either the correct label or the label resulting from an output hypothesis from a previous round. Here mistakes are only counted in the former case. Remarkably, they show that the mistake bound can be characterized by a variant of Threshold dimension, called the \emph{Extended threshold dimension}. To define this, we recall the definition of closure and $f$-representation.

\begin{definition}[Closure]
For any hypothesis class $\cF \subseteq \{0, 1\}^{\cX}$, we write $\bigwedge_{f \in \cF} f$ to denote the hypothesis $g: \cX \to \{0, 1\}$ where $g(x) = \bigwedge_{f \in \cF} f(x)$ for all $x \in \cX$.

For any hypothesis class $\cH \subseteq \{0, 1\}^{\cX}$, its \emph{closure} $\overline{\cH}$ is the class $\{\bigwedge_{f \in \cF} f \mid \emptyset \ne \cF \subseteq \cH\}$.
\end{definition}

\begin{definition}[$f$-representation]
For any $f, h: \cX \to \{0, 1\}$, we write $f \oplus h$ to denote the hypothesis $g: \cX \to \{0, 1\}$ where $g(x) = f(x) \oplus h(x)$ for all $x \in \cX$.

For any $f: \cX \to \{0, 1\}$ and $\cH \subseteq \{0, 1\}^\cX$, the \emph{$f$-representation} of $\cH$, denoted by $\cH^{f}$, is the class $\{f \oplus h \mid h \in \cH\}$. We refer to $f$ as the \emph{shift} of $\cH^f$.
\end{definition}

The Extended threshold dimension can then be defined as follows:

\begin{definition}[Extended threshold Dimension~\cite{DmitrievFHS26}]
For any $\cH \subseteq \{0, 1\}^\cX$, its \emph{Extended threshold dimension} is
the minimum Threshold dimension over all possible $f$-representations of $\cH$ after taking the closure:
$$\exthd{\cH} = \min_{f: \cX \to \{0, 1\}} \thdclosure{\cH^f}.$$
\end{definition}

\subsection{Our Results}

While the computational complexity of other dimensions, such as VC and Littlestone dimensions, have been thoroughly investigated~\cite{Schaefer99,Schaefer00,MosselU02,PapadimitriouY96,FrancesL98,ManurangsiR17,Manurangsi23,NEURIPS2025_5f780c54}, the computational complexity of the Threshold dimension and its variants remains largely unexplored. In this work, we initiate the study on this topic and prove several hardness results.

\paragraph{Threshold Dimension.}
To formally study the computational hardness of computing Threshold dimension, we define the following decision problem\footnote{Throughout this work, we assume that the hypothesis class is encoded as a binary matrix $\{0, 1\}^{\cH \times \cX}$.}:
\begin{center}
\fbox{
\begin{minipage}{0.95\textwidth}
\textbf{Problem:} \tdprob \\
\textbf{Input:} A finite space $\mathcal{X}$, a hypothesis class $\mathcal{H} \subseteq \{0, 1\}^\mathcal{X}$, and a target integer $k \ge 0$. \\
\textbf{Question:} Is $\thd{\cH} \ge k$? %(i.e., does there exist an orientation shift $f \in \{0, 1\}^\mathcal{X}$ such that $\text{ThD}(\overline{\mathcal{H}^f}) \le k$?)
\end{minipage}
}
\end{center}

Note that this problem is clearly in NP, as the witness (in \Cref{def:thd}) can be efficiently verified. 
Our first result is to show the NP-hardness of this problem:
\begin{theorem} \label{thm:td-np-hard}
\tdprob is NP-complete.
\end{theorem}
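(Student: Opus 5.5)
We reduce from \textsc{Clique}, which is NP-hard. Membership in NP was noted before the statement, so only hardness remains. Given a graph $G=(V,E)$ and a target $k$, we build an instance with levels $1,\dots,k$. The points are $\cX=\{x_{v,j} : v\in V, j\in[k]\}$. The hypotheses are $t_{u,i}$ for $u\in V$, $i\in[k]$, together with the all-zero hypothesis $t_\bot$. We set
\[
t_{u,i}(x_{v,j}) = \begin{cases} 1 & \text{if } j=i \text{ and } u=v,\\ 1 & \text{if } j<i \text{ and } \{u,v\}\in E,\\ 0 & \text{otherwise.}\end{cases}
\]
So, ordered by level, the point–hypothesis matrix is block lower triangular. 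The diagonal blocks are identity matrices and the strictly lower blocks are copies of the adjacency matrix of $G$ (no self-loops). The target is $k$, and the construction is clearly polynomial.

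\emph{Completeness.} Let $\{v_1,\dots,v_k\}$ be a clique. Take $x_j=x_{v_j,j}$, $t_0=t_\bot$ and $t_i=t_{v_i,i}$. Then $t_i(x_j)=1$ exactly when $j\le i$: the case $j=i$ uses $u=v$, the case $j<i$ uses the edge $\{v_i,v_j\}$, and every entry with $j>i$ is $0$ by construction. Hence $\thd{\cH}\ge k$.

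\emph{Soundness.} Take a witness $x_1,\dots,x_d,t_0,\dots,t_d$ with $d\ge k$, and write $\ell(\cdot)$ for the level and $\nu(\cdot)$ for the vertex of a point or hypothesis. Every nonzero entry satisfies $\ell(x)\le \ell(t)$. The argument proceeds in three steps.
\begin{enumerate}
\item Show that $\ell(x_1)\le\ell(x_2)\le\dots\le\ell(x_d)$ can be assumed. From $t_i(x_j)=0$ for $j>i$ and $t_i(x_i)=1$, we get $\ell(x_i)\le\ell(t_i)$, and a chain of comparisons should force the levels to be roughly monotone.
\item Show that two consecutive witness points cannot share a level, so the levels of the $x_j$'s strictly increase. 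Within a single level the matrix is the identity, and the identity has no $2\times 2$ triangular pattern $t_i(x_{i-1})=t_i(x_i)=1$.
\item Conclude that all $\nu(x_j)$ are distinct, since $t_i(x_j)=1$ with $j<i$ requires an edge and $G$ has no self-loops, and that they pairwise form edges. This gives a clique of size $d\ge k$.
\end{enumerate}

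\emph{Main obstacle.} Step~1 is where I expect the real difficulty. A hypothesis $t_i$ may cover $x_i$ through an off-diagonal edge rather than the diagonal, so the clean correspondence $t_i=t_{\nu(x_i),\ell(x_i)}$ can fail. My first attempt would be a pigeonhole or extraction argument: because there are only $k$ levels and each level contributes at most one consecutive witness pair, a long witness should yield $\Omega(k)$ points with strictly increasing levels and pairwise adjacent vertices. If this loses a constant factor, I would pad the construction, for example by using $2k$ levels or requiring $d\ge 2k-1$, or by replacing the diagonal identity blocks with a gadget that forces $\nu(t_i)=\nu(x_i)$. In the worst case I would instead reduce from \textsc{Balanced Biclique} on the bipartite incidence structure, where the $j<i$ entries encode adjacency between two disjoint sides and distinctness is easier to enforce.
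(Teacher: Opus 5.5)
Your reduction is unsound, so the problem goes beyond Step~1.

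\textbf{Counterexample.} Take the \textsc{Clique} instance $(G,k)$ with $k\ge 3$, where $G$ is the half graph on $u_1,\dots,u_k,v_1,\dots,v_k$ with $\{u_i,v_j\}\in E$ iff $j\le i$. This $G$ is bipartite, so it has no triangle and certainly no $k$-clique. Yet in your instance, $x_j=x_{v_j,1}$, $t_i=t_{u_i,2}$ for $i\ge 1$, and $t_0=t_\bot$ form a witness. The points sit at level $1<2$, so $t_{u_i,2}(x_{v_j,1})=1$ iff $\{u_i,v_j\}\in E$ iff $j\le i$. Hence $\thd{\cH}\ge k$. With a larger half graph the dimension is unbounded while the clique number stays $2$. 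So no extra levels or constant-factor slack can rescue it.

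\textbf{Which steps fail.} Step~2 is false: consecutive witness points can share a level, because the hypotheses separating them sit at a higher level, where the block is the adjacency matrix of $G$ rather than the identity. Step~3 conflates $\nu(t_i)$ with $\nu(x_i)$. An entry $t_i(x_j)=1$ certifies an edge between $\nu(t_i)$ and $\nu(x_j)$. What you can extract is therefore a bipartite (ladder or biclique) pattern between hypothesis-vertices and point-vertices, never a clique, unless some gadget forces $\nu(t_i)=\nu(x_i)$. That gadget is exactly what you left unspecified.

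\textbf{The missing idea.} A threshold witness in constructions like yours certifies only biclique-type structure, so the source problem needs a NO case that excludes large bicliques. The paper reduces from a variant of Maximum Edge Biclique. Its YES case is a $k$-clique in $G$, and its NO case says every biclique of $B[G^o]$ has at most $k^2-1$ edges (Lemma~\ref{lem:np-exactbiclique}). This is proved via a dedicated 3-SAT reduction, since standard biclique hardness is not of this form.

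\textbf{The paper's construction.} It takes $t=2k^2$ copies of the vertex set, ordered lexicographically, and joins $a^{(i)}_j$ to $b^{(i')}_{j'}$ iff $(i,j)>_{\mathrm{lex}}(i',j')$ and ($j=j'$ or $\{v_j,v_{j'}\}\in E$). A $k$-clique gives a ladder of order $kt$. Conversely, a semi-ladder of order $kt$ contains a $(kt/2)\times(kt/2)$ biclique. Projecting it to $V$ and applying AM--GM across copies yields a biclique in $B[G^o]$ with more than $k^2-1$ edges.

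The repetition absorbs the factor-$2$ loss between semi-ladders and bicliques. Your Balanced Biclique fallback would face the same loss, and a plain biclique in the YES case would not even supply a ladder.

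Finally, setting $\cX=R$ and $\cH=\{h_u\}_{u\in L}\cup\{\bone\}$ maps a ladder of order $k$ to $\thd{\cH}\ge k$, and maps $\thdclosure{\cH}\ge k$ back to a semi-ladder of order $k$. Together with $\thd{\cH}\le\thdclosure{\cH}$, this finishes the proof.
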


In certain scenarios, it might be sufficient to \emph{approximately} compute the threshold dimension. Unfortunately, we show strong hardness of approximation results, under the Gap Exponential Time Hypothesis (Gap-ETH)\footnote{Gap-ETH~\cite{Dinur16,ManurangsiR17-gap-eth} postulates that, for some constant $\delta > 0$, there is no $2^{o(n)}$-time algorithm that can distinguish between a satisfiable 3CNF formula and one which is not even $(1 - \delta)$-satisfiable. Here $n$ denotes the number of variables. (This is a strengthening of ETH~\cite{ImpagliazzoP01,ImpagliazzoPZ01} which asserts this for $\delta = 0$.)}.

\begin{theorem} \label{thm:td-inapprox}
Assuming Gap-ETH, there is no polynomial-time $|\cX|^{o(1)}$-approximation algorithm or $|\cH|^{o(1)}$-approximation algorithm for Threshold dimension.
\end{theorem}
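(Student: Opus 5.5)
We reduce from Maximum Balanced Biclique, whose Gap-ETH hardness of approximation is known: for every $\varepsilon>0$ it is hard to distinguish a bipartite graph on $n$ vertices per side containing a $K_{k,k}$ from one with no $K_{\ell,\ell}$ with $\ell = k/n^{o(1)}$. Parameterized versions of this hardness also exist, e.g.\ via Gap-ETH hardness of balanced biclique. The work is to turn a biclique into a threshold witness, and conversely to turn a threshold witness into a biclique of comparable size.

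\textbf{Construction.} Start from a bipartite graph $G=(A\cup B,E)$. Order the vertices as $A=\{a_1,\dots,a_n\}$ and $B=\{b_1,\dots,b_n\}$. Take $\cX = A$ and one hypothesis $h_b$ per $b\in B$. We want a biclique of size $k$ to produce a threshold witness of size about $k$, so we need the "staircase" shape: $t_i(x_j)=1$ iff $j\le i$. A plain biclique gives an all-ones block, not a staircase. To fix this, blow up each element. Let $\cX = A\times[n]$. For each $b\in B$ and each index $i\in[n]$, add a hypothesis
\[
h_{b,i}(a,j)=\mathbf{1}[(a,b)\in E \wedge j\le i].
\]
Add also the all-zero hypothesis to serve as $t_0$.

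\textbf{Completeness.} Suppose $\{a_{p_1},\dots,a_{p_k}\}\times\{b_{q_1},\dots,b_{q_k}\}$ is a biclique. Set $x_j=(a_{p_j},j)$ and $t_i=h_{b_{q_i},i}$. Then $t_i(x_j)=1$ iff $j\le i$, because every $a_{p_j}$ is adjacent to $b_{q_i}$. This gives $\thd{\cH}\ge k$.

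The index-only part creates a problem here: using a single $b$ with all indices already gives a threshold of length $n$, regardless of the graph. So the index coordinate must be coupled to the vertex itself. I would instead use $\cX=A$ and hypotheses indexed by pairs $(b,i)$ with
\[
h_{b,i}(a_j)=\mathbf{1}[(a_j,b)\in E \wedge j\le i],
\]
using the fixed order on $A$. Now a threshold witness needs the $x$'s in increasing index order, and each $t_i$ must contain $x_1,\dots,x_i$ in its neighborhood.

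\textbf{Soundness (main obstacle).} Given a threshold witness of length $d$, the sets $\{x_1,\dots,x_{d/2}\}$ and $\{b(t_{d/2}),\dots,b(t_d)\}$ are fully adjacent, since $t_i(x_j)=1$ for all $j\le d/2\le i$. This yields a $K_{d/2,m}$, where $m$ is the number of distinct $b$'s among $t_{d/2},\dots,t_d$.

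The danger is that a single $b$ reused with many indices inflates $d$ without any large biclique. Since $t_i$ and $t_{i'}$ differ on $x_{i'}$, reuse of the same $b$ is possible. So I need an unbalanced-to-balanced argument: a $K_{s,1}$ must not yield $d\gg s$. Here it does not, because the neighborhood of $b$ contains only $\deg(b)$ of the $x$'s.

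I would bound this with a Ramsey- or Dilworth-type step. Either many distinct $b$'s appear among the $t_i$, or some $b$ repeats on many indices. I expect to need a stronger gadget to rule out the second case: for instance, replace $G$ by a graph product or disperser amplification where high-degree vertices already imply large bicliques. Alternatively, reduce from the biclique variant the abstract alludes to, namely one where every vertex has a small neighborhood in the NO case. This soundness analysis is where I expect the real difficulty.

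\textbf{Gap accounting.} Once soundness holds up to polynomial loss, the gap $k$ versus $k/n^{o(1)}$ transfers to $|\cX|^{o(1)}$ and $|\cH|^{o(1)}$ inapproximability, since $|\cX|,|\cH|\le \poly(n)$.
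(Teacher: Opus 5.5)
Your final construction has no soundness, and no Ramsey- or Dilworth-type step can repair it. Take $\cX = A$ and $\cH=\{h_{b,i}\}\cup\{\bzero\}$. Fix any $b\in B$ with neighbors $a_{p_1},\ldots,a_{p_D}$, $p_1<\cdots<p_D$, and set $x_j=a_{p_j}$, $t_0=\bzero$, $t_i=h_{b,p_i}$. Then $t_i(x_j)=1$ iff $j\le i$. Conversely, $t_d$ must be some $h_{b,i}$ equal to $1$ on all of $x_1,\ldots,x_d$, so $d\le\deg(b)$. Hence $\thd{\cH}=\max_{b\in B}\deg(b)$, and the reduction merely computes a maximum degree.

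Your remark that reusing one $b$ is harmless ``because the neighborhood of $b$ contains only $\deg(b)$ of the $x$'s'' is exactly where it breaks. Every YES instance has some $b$ of degree at least $k$. If every $b$ had degree below $k$ in all NO instances, the biclique problem would be solvable in polynomial time. So any hard family contains NO instances that your map sends to $\thd{\cH}\ge k$. For the same reason your suggested repairs cannot exist. A biclique variant whose NO instances have only small neighborhoods, or an amplification after which high degree forces a large balanced biclique, would be trivially decidable. And no unbalanced-to-balanced argument turns a star $K_{1,D}$ into a $K_{2,2}$. Separately, ``soundness up to polynomial loss'' would destroy a $k$ versus $k/n^{o(1)}$ gap; you can afford only an $n^{o(1)}$ (e.g.\ constant) loss.

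The missing idea is to get the staircase not from an index attached to each hypothesis, but from a total order on a \emph{single} vertex set used on both sides. Then every hypothesis corresponds to one vertex and cannot be reused. This needs a more structured source problem than a general bipartite $K_{k,k}$ instance, whose two sides need not interleave in any fixed order.

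The paper uses \textsc{Gap-Balanced-Biclique}$(k_1,k_2)$. The input is a graph $G=(V,E)$ on $v_1,\ldots,v_n$; the YES case is a $k_1$-clique in $G$, and the NO case is that $B[G^o]$ has no $K_{k_2,k_2}$. The Gap-ETH hardness of \Cref{thm:biclique-inapprox}, with $k_2=k_1/n^{o(1)}$, is of this form.

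The construction (\Cref{lem:biclique-to-sli,lem:red-sli-to-closuretd}) is as follows. Join $a_j$ to $b_{j'}$ iff $j>j'$ and $(v_j,v_{j'})\in E$. Let $\cX=\{b_1,\ldots,b_n\}$, and let $\cH$ consist of the neighborhood indicators $h_{a_j}$ together with $\bone$.

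\begin{itemize}
\item Completeness: a clique $v_{i_1},\ldots,v_{i_{k_1}}$ with $i_1<\cdots<i_{k_1}$ gives the witness $x_j=b_{i_j}$, $t_r=h_{a_{i_{r+1}}}$ for $r<k_1$, and $t_{k_1}=\bone$.
\item Soundness: in a witness of length $2k_2$, the hypotheses $t_0,\ldots,t_{2k_2-1}$ are distinct and differ from $\bone$ (since $t_r(x_{r+1})=0$). Hence they come from distinct vertices, and $t_{k_2},\ldots,t_{2k_2-1}$ against $x_1,\ldots,x_{k_2}$ form a $K_{k_2,k_2}$ in $B[G^o]$.
\end{itemize}

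The loss is only a factor $2$. Since $|\cX|=n$ and $|\cH|\le n+1$, this yields the $|\cX|^{o(1)}$ and $|\cH|^{o(1)}$ hardness.
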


Note that there is a trivial linear-time $\min\{|\cX|, |\cH|\}$-approximation algorithm for $\thd{\cH}$, since it is always upper bounded by $\min\{|\cX|, |\cH|\}$. It remains an interesting question whether we can improve the inapproximability ratio to match this algorithm. (See \Cref{sec:conclusion_and_open_questions} for discussion.)

Another possible relaxation is through parameterized algorithms. Recall that, for a parameter $k$, fixed-parameter tractable (FPT) algorithms are those that run in $T(k) \cdot N^{O(1)}$ time where $T$ can be any function and $N$ denotes the problem size\footnote{We refer interested readers to \cite{DowneyF13} for further background on parameterized complexity.}. For our result, we parameterize by the optimum and we say that an algorithm is an $\alpha$-approximation\footnote{We refer interested readers to \cite{FeldmannSLM20} for a survey on FPT approximation algorithms and hardness results.} if it can distinguish between $\thd{\cH} \geq k$ and $\thd{\cH} \leq k / \alpha$. While it is trivial to achieve $O(k)$-approximation, we show that significantly improving upon this is unlikely, even for FPT algorithms:

\begin{theorem} \label{thm:td-fpt-inapprox}
Assuming Gap-ETH, there is no FPT $o(k)$-approximation algorithm for Threshold dimension.
\end{theorem}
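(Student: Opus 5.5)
We reduce from the parameterized Gap-Balanced-Biclique problem, for which Gap-ETH-hardness of FPT approximation is known: assuming Gap-ETH, no FPT algorithm, given a bipartite graph $G=(A,B,E)$ and parameter $k$, can distinguish whether $G$ contains a $K_{k,k}$ or contains no $K_{\delta k,\delta k}$, for every $\delta = \delta(k) = o(1)$. (This is the $o(k)$-FPT-inapproximability of Maximum Balanced Biclique under Gap-ETH, due to Chalermsook et al.) The goal is a polynomial-time, parameter-preserving reduction $G \mapsto (\cX,\cH)$ with $\thd{\cH} \geq \Omega(k)$ in the completeness case and $\thd{\cH} \le O(\delta k)+O(1)$ in the soundness case. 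This rules out an FPT $o(k)$-approximation, since the new parameter is $\Theta(k)$.

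\textbf{The reduction.} A threshold witness of size $d$ is a ``half graph'': $t_i(x_j)=1$ iff $j\le i$. The step ruling out a large half graph must use the non-edges of $G$, since bicliques are dense. I would therefore realize the half graph block-wise, in a ``half-graph product'' of $G$ with an index set.
\begin{itemize}[nosep]
\item Take $\cX = A \times [k]$.
\item For every $b\in B$ and $i\in\{0,\dots,k\}$, introduce $h_{b,i}$ with $h_{b,i}(a,j)=1$ iff $j< i$, or ($j=i$ and $(a,b)\in E$).
\end{itemize}
Here $j<i$ gives all ones, $j>i$ gives all zeros, and on the diagonal block $j=i$ the function follows the adjacency of $b$.

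\textbf{Completeness.} Suppose $G$ has a biclique $S\times T$ with $|S|=|T|=k$. Order $S=\{s_1,\dots,s_k\}$ and $T=\{u_0,\dots,u_k\}$ (allowing one repetition if needed). Choose $x_j=(s_j,j)$ and $t_i=h_{u_i,i}$. For $j\le i$ we get $t_i(x_j)=1$: either $j<i$, or $j=i$ and $s_i u_i$ is an edge. For $j>i$ we get $0$. This gives $\thd{\cH}\ge k$.

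\textbf{Soundness, the main obstacle.} Given a large witness $x_1,\dots,x_d,t_0,\dots,t_d$ in $\cH$, I will extract a large biclique in $G$.
\begin{enumerate}[nosep]
\item Write $x_j=(a_j,\ell_j)$ and $t_i=h_{b_i,c_i}$. Since each $t_i$ is monotone across blocks, the $\ell$-values split the witness into consecutive runs.
\item Because $t_i$ is constant off its own block, the witness points whose block $\ell_j$ differs from $c_i$ are classified in a way that is monotone in $\ell_j$. So a threshold pattern of length $r$ confined to blocks other than the ones the hypotheses sit on spans at most about $r$ distinct blocks in a fixed order.
\item A pigeonhole/Ramsey-style argument then gives one of two cases: many witness points and hypotheses share the same diagonal block, or the witness uses many blocks.
\item In the first case, the diagonal block carries a half graph inside $G$ itself. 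But $G$ is arbitrary, so this does not by itself yield a biclique.
\end{enumerate}

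This is the obstacle: the plain product may contain large half graphs without large bicliques. I would first fix it by replacing $G$ with a graph from the hard instances in which half graphs of size $m$ force bicliques of size $\Omega(m)$. The most direct route is to prove an intermediate lemma: the reduction in the paper's proof of Theorem~\ref{thm:td-inapprox} (the Gap-ETH reduction, presumably through the semi-ladder index) already preserves the parameter up to constant factors. Theorem~\ref{thm:td-fpt-inapprox} would then follow by rerunning that reduction with $k$ as the parameter and checking that the size blow-up is polynomial and the gap is $k$ versus $o(k)$.

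The secondary case, many blocks each carrying $O(1)$ of the witness, is handled by the diagonal argument above. The block index then behaves like the dimension of an ordinary threshold class, and the biclique-freeness of $G$ caps the number of blocks at $O(\delta k)$.
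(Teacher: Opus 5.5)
Your proposal has a genuine gap. The block construction is unsound for every $G$, and the fallback only defers to a reduction you do not state.

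\textbf{The construction fails.} The cross-block pattern of $h_{b,i}$ ($1$ for $j<i$, $0$ for $j>i$) is already a threshold of length about $k$, and it does not depend on $G$. Fix any $a\in A$, $b\in B$ and let $x_j=(a,j)$.
\begin{itemize}
\item If $(a,b)\in E$, then $t_i=h_{b,i}$ for $0\le i\le k$ gives $t_i(x_j)=1$ iff $j\le i$, so $\thd{\cH}\ge k$.
\item If $(a,b)\notin E$, then $t_i=h_{b,i+1}$ for $0\le i\le k-1$ gives $\thd{\cH}\ge k-1$.
\end{itemize}
So your ``secondary case'' claim, that biclique-freeness of $G$ caps the number of blocks at $O(\delta k)$, is false, and no case analysis can repair this reduction.

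\textbf{The difficulty is on the other side.} A threshold witness of order $m$ always contains an all-ones quadrant: $x_1,\dots,x_{\lceil m/2\rceil}$ against $t_{\lceil m/2\rceil},\dots,t_m$. Hence, as long as every $1$-entry of a hypothesis comes from an edge of $G$, soundness is automatic. Your step~4 worry, that a half graph in $G$ need not give a biclique, is unfounded. The real difficulty is completeness: creating the $0$-entries above the diagonal without adding $G$-independent structure. Your last paragraph points to ``the reduction in the proof of \Cref{thm:td-inapprox}'' but never states it or checks it, so the proposal contains no working reduction.

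\textbf{The missing idea.} The paper does not start from the generic bipartite $K_{k,k}$ form you quote. It uses the special form \gapbicliqueprob{k_1}{k_2}: in the YES case $G$ contains a $k_1$-clique, and in the NO case $B[G^o]$ has no balanced biclique of size $k_2$. The instances of Chalermsook et al.\ satisfy this form. The zeros come from a fixed order $v_1,\dots,v_n$ of $V$. Combining \Cref{lem:biclique-to-sli} and \Cref{lem:red-sli-to-closuretd}, take $\cX=V$ and $\cH=\{h_{v_j}\}_{j}\cup\{\bone\}$, where $h_{v_j}(v_{j'})=1$ iff $j>j'$ and $v_jv_{j'}\in E$.
\begin{itemize}
\item Completeness: a clique $v_{i_1},\dots,v_{i_{k_1}}$ with $i_1<\dots<i_{k_1}$ gives the witness $x_r=v_{i_r}$, $t_{r-1}=h_{v_{i_r}}$, $t_{k_1}=\bone$.
\item Soundness: a witness of order $2k_2$, even in $\overline{\cH}$, yields a semi-ladder. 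Its lower-left quadrant is a $K_{k_2,k_2}$ in $B[G^o]$.
\end{itemize}
The clique form is what lets a single order on one vertex set interleave the two sides of the biclique perfectly. From the generic bipartite form, a fixed order may put one whole side of the hidden $K_{k,k}$ before the other, so the biclique contributes no zeros. The reduction runs in polynomial time, keeps the parameter at $k_1$, and has gap $k_1$ versus $2k_2=o(k_1)$. That is exactly what the theorem needs.
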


\paragraph{Extended threshold Dimension.}
Similarly, we can define the decision problem for Extended threshold dimension as follows:
\begin{center}
\fbox{
\begin{minipage}{0.95\textwidth}
\textbf{Problem:} \etdprob \\
\textbf{Input:} A finite space $\mathcal{X}$, a hypothesis class $\mathcal{H} \subseteq \{0, 1\}^\mathcal{X}$, and a target integer $k \ge 0$. \\
\textbf{Question:} Is $\text{ExThD}(\mathcal{H}) \geq k$? %(i.e., does there exist an orientation shift $f \in \{0, 1\}^\mathcal{X}$ such that $\text{ThD}(\overline{\mathcal{H}^f}) \le k$?)
\end{minipage}
}
\end{center}

Note that, unlike \tdprob, it is a priori unclear if \etdprob belongs to NP. In particular, it is only straightforward to see that the problem belongs to $\Pi_2$: $\exthd{\cH} \geq k$ iff for all $f$, there exists a witness (see \Cref{lem:closure_Threshold} below) that $\thdclosure{\cH^f} \geq k$. Indeed, we show that it is unlikely to be in the class NP, as it is both NP-hard and co-NP-hard.

\begin{theorem} \label{thm:etd-np-hard}
\etdprob is NP-hard.
\end{theorem}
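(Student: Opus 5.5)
The plan is to reduce from the problem of computing the Threshold dimension of the \emph{closure}, $\thdclosure{\cH}$, by a ``doubling'' gadget that makes the shift $f$ irrelevant. Given $\cH \subseteq \{0,1\}^{\cX}$, let $\cX' = \cX \times \{0,1\}$. Each $h \in \cH$ maps to $h' \in \{0,1\}^{\cX'}$ with $h'(x,0) = h(x)$ and $h'(x,1) = 1 - h(x)$. Let $\cH' = \{h' \mid h \in \cH\}$. The key claim is
$$\exthd{\cH'} = \thdclosure{\cH}.$$
The reduction is clearly polynomial, so NP-hardness of \etdprob follows from NP-hardness of deciding $\thdclosure{\cH} \ge k$.

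\emph{Upper bound.} Take the shift $f(x,0)=0$ and $f(x,1)=1$. Then $f \oplus h'$ satisfies $(f\oplus h')(x,0) = (f\oplus h')(x,1) = h(x)$, so $\cH'^f$ is just $\cH$ with every point duplicated. Taking ANDs commutes with this duplication, so $\overline{\cH'^f}$ is $\overline{\cH}$ with duplicated points. Duplicating points does not change the Threshold dimension, since a witness can never use two identical points, which would need two distinct thresholds to separate them. Hence $\exthd{\cH'} \le \thdclosure{\cH}$.

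\emph{Lower bound.} Fix an arbitrary shift $f$. For each $x$, one of the two coordinates $(x,b)$ satisfies $f(x,b) = b$. At that coordinate $(f \oplus h')(x,b) = h(x)$ for every $h$. Let $S = \{(x,b_x)\}$ be this set of coordinates. Projection onto $S$ commutes with coordinatewise AND, so the projection of $\overline{\cH'^f}$ onto $S$ contains (in fact equals) a copy of $\overline{\cH}$. Any witness for $\thdclosure{\cH}$ on points $x_1,\dots,x_d$ therefore lifts to a witness on $(x_1,b_{x_1}),\dots,(x_d,b_{x_d})$ in $\overline{\cH'^f}$. Hence $\thdclosure{\cH'^f} \ge \thdclosure{\cH}$ for every $f$, which gives the claim.

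\emph{Main obstacle.} What remains is NP-hardness of deciding $\thdclosure{\cH} \ge k$; I expect this to be the real work, because taking the closure can create many new thresholds. I would first try to show that the instances produced in the proof of \Cref{thm:td-np-hard} (presumably from a biclique or semi-ladder-type problem) have $\thdclosure{\cH} = \thd{\cH}$, or at least that a large closure threshold forces a large structure in the source instance. In a threshold witness $t_0,\dots,t_d$ from the closure, each $t_i$ is an AND of hypotheses that are all $1$ on $x_1,\dots,x_i$. If $t_i(x_{i+1})=0$, some single constituent $g_i$ of $t_i$ vanishes on $x_{i+1}$. The family $\{g_i\}$ therefore forms a \emph{semi-ladder}: $g_i(x_j)=1$ for $j\le i$ and $g_i(x_{i+1})=0$. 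So $\thdclosure{\cH}$ is at least $\thd{\cH}$ and at most the semi-ladder index of $\cH$. It therefore suffices to build hard instances where semi-ladders and thresholds coincide in size, for example by designing the reduction from \exactbicliqueprob/\tgapbicliqueprob so that any semi-ladder of length $k$ yields a biclique of the required size, and conversely a biclique yields a genuine threshold.
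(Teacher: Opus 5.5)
The gap is in your lower bound, and it breaks the doubling gadget. The claim that for every $x$ some $b$ satisfies $f(x,b)=b$ is false. Nothing prevents $f(x,0)=1$ and $f(x,1)=0$, and then both coordinates of $x$ carry $1-h(x)$ for every $h$. Making this choice for all $x$ turns $(\cH')^f$ into $\cH^{\bone}$ with every point duplicated, so $\exthd{\cH'}\le\thdclosure{\cH^{\bone}}$, which can be far below $\thdclosure{\cH}$. For example, let $\cH$ consist of $\bone$ and all co-singletons on $[n]$. Then $\thdclosure{\cH}=n$, but $\cH^{\bone}$ consists of $\bzero$ and all singletons, and its closure has Threshold dimension $1$.

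More conceptually, $\cH'$ is just the shift by $f_0$ (with $f_0(x,0)=0$, $f_0(x,1)=1$) of the twin-duplicated class $\{h_2 \mid h\in\cH\}$, where $h_2(x,b)=h(x)$. Since $(\mathcal{K}^g)^f=\mathcal{K}^{g\oplus f}$, the Extended threshold dimension is invariant under shifting the input class. Duplicating points does not change it either, so in fact $\exthd{\cH'}=\exthd{\cH}$. Your map is therefore essentially the identity on \etdprob and cannot transfer hardness from $\thdclosure{\cdot}$.

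The missing ingredient is a gadget that penalizes \emph{every} shift. The paper (\Cref{lem:tdclosure-to-etd}) takes $t=|\cX|\ge k_1$ disjoint copies of $\cX$, a copy $h^{(i)}$ of each $h\in\cH$ that is $0$ outside the $i$-th copy, and adds $\bzero$ and all singletons. If $\|f\|\ge k_1$, then $f\oplus\bzero$ and the $f\oplus\bone_x$ for $x\in\supp(f)$ restrict to $\bone$ and co-singletons on $\supp(f)$, giving a closure threshold of length $\|f\|$. If $\|f\|<k_1\le t$, some copy is untouched by $f$, and the unshifted copy of $\cH$ there keeps its threshold. Conversely, no hypothesis is $1$ on two copies, so a closure threshold of length at least $2$ lives in a single copy. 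After removing constant and twin coordinates, its generators can be taken from that copy of $\cH$, so $\exthd{\cH'}\le\thdclosure{\cH'}\le\thdclosure{\cH}$. This is only needed in the promise form (YES: $\thd{\cH}\ge k_1$; NO: $\thdclosure{\cH}<k_2$).

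Your ``main obstacle'' paragraph does point in the paper's direction: ladders versus semi-ladders, via \Cref{lem:red-sli-to-closuretd} (with the all-one hypothesis added) and \Cref{thm:gapsli-np-hard} (a strengthened biclique problem with $t=2k^2$ lexicographic repetitions to absorb the AM--GM loss). As written, though, it is only a plan, and it would feed into a reduction that does not work.
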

\begin{theorem} \label{thm:etd-co-np-hard}
\etdprob is co-NP-hard.
\end{theorem}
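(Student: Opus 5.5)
We prove co-NP-hardness by a Karp reduction from an NP-complete problem to the \emph{complement} of \etdprob. That complement asks whether there is a shift $f:\cX\to\{0,1\}$ with $\thdclosure{\cH^f}\le k-1$. The existential quantifier over $f$ is the only source of nondeterminism we exploit, so we let $f$ encode an assignment. Concretely, we reduce from \tsetsplit: given a universe $U$ and a family of $4$-element sets, decide whether some $2$-colouring of $U$ leaves no set monochromatic (in the appropriate $2$-$2$ sense). We identify each element $u\in U$ with a point $x_u\in\cX$, and read the colour of $u$ off $f(x_u)$. We will take $k$ to be a small fixed constant, say $k=3$ after padding, so that the reduction only needs to control small thresholds in the closure.

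\textbf{Step 1 (structural lemma).} First characterise when $\thdclosure{\cG}\ge k$ for a fixed class $\cG$. Since the closure is closed under $\wedge$, a witness can be assumed to be nested: replacing $t_i$ by $t_i\wedge t_{i+1}\wedge\dots\wedge t_d$ preserves the threshold pattern. So $\thdclosure{\cG}\ge k$ iff there are points $x_1,\dots,x_k$ and, for each $i$, hypotheses of $\cG$ whose conjunction is $1$ on $x_1,\dots,x_i$ and $0$ on $x_{i+1},\dots,x_k$. This is presumably the content of \Cref{lem:closure_Threshold}. Equivalently, for each $j$ there is a single $g\in\cG$ that is $1$ on $x_1,\dots,x_{j-1}$ and $0$ on $x_j$, and some $g$ is $1$ on all of them. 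This condition is local: it involves only $k$ points and at most $k+1$ hypotheses.

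\textbf{Step 2 (gadgets).} For each set $S=\{a,b,c,d\}$ we add a constant number of hypotheses supported on $x_a,x_b,x_c,x_d$ (zero elsewhere). They are chosen so that, after XOR with $f$, the local condition of Step 1 holds on points of $S$ exactly when $f$ makes $S$ monochromatic, that is, when the violating pattern occurs. The intended mechanism is that a violated colouring of $S$ turns the shifted gadget rows into a nested chain $1\!\cdots\!1,\ 1\!\cdots\!10,\dots$ on the points of $S$. A valid colouring should instead break every such chain, for example by forcing two of the shifted rows to be incomparable on the relevant points. We also add auxiliary points and hypotheses, such as an all-ones row and dummy coordinates, so that the baseline threshold dimension is exactly $k-1$ for every $f$. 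This keeps $f$ from ``escaping'' by making the closure trivially small.

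\textbf{Step 3 (soundness and completeness).} If $U$ has a valid colouring $f$, we show that no $k$ points admit the Step 1 pattern in $\overline{\cH^f}$; hence $\exthd{\cH}\le k-1$. Conversely, for every $f$ that violates some set $S$, the gadget of $S$ yields a threshold of size $k$. Together these give: the instance is a yes-instance of \tsetsplit iff $\exthd{\cH}<k$, which establishes co-NP-hardness.

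\textbf{Main obstacle.} The hard part is the completeness direction of Step 3. Conjunctions may mix rows from different gadgets, and the shift $f$ acts on every row simultaneously. Shifted rows of gadgets sharing elements can therefore combine under $\wedge$ into unintended thresholds, even when every set is properly split. To prevent this, we will first try to give every gadget row a private ``tag'' coordinate on which only that gadget is $1$. After shifting, any conjunction involving two different gadgets is then forced to $0$ on the tags, so it cannot extend a chain. We would then verify that the tags themselves cannot be oriented by $f$ so as to create new thresholds. If this fails, we fall back on duplicating each element point into several copies to dilute cross-gadget interactions.
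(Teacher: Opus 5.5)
Your proposal has a genuine gap: it never constructs the reduction. The gadgets of Step~2 and the auxiliary rows meant to pin the baseline closure threshold dimension to $k-1$ for every $f$ are described only by the properties they should have. Step~3 restates what must be proved. The obstacle you name yourself, unintended thresholds built from rows of different gadgets, is left open. The tag fix does not address it. By \Cref{lem:closure_Threshold}, a witness for $\thdclosure{\cH^f}\ge k$ depends only on the values of $k+1$ rows of $\cH^f$ at the $k$ chosen points. When those points are element points, the tag coordinates play no role. So a row or conjunction being $0$ on the tags does not stop it from taking part in a threshold on element points.

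More seriously, the design you commit to cannot satisfy completeness: constant $k\ge 3$, $f(x_u)$ equal to the colour of $u$, and gadget rows vanishing outside their $4$-set. Suppose $h$ is a gadget row of $S$ and $y\in S$ has $h(y)=f(y)=1$. Then $f\oplus h$ is $0$ at $y$ and agrees with $f$ outside $S$. Now take pairwise disjoint sets $S_1,\dots,S_m$ with such ``dips'' $y_j\in S_j$ coming from rows $h_j$. Set $g_j=f\oplus h_j$ for $j\le m$, and $g_{m+1}=f\oplus h_0$ for any row $h_0$ vanishing on all $y_j$ (e.g.\ a gadget row of a further disjoint set). These satisfy \Cref{lem:closure_Threshold} on the points $y_1,\dots,y_m$, so $\thdclosure{\cH^f}\ge m$.

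Dips cannot be avoided. For the gadget of $S$ to produce a threshold of size $k\ge 3$ on points of $S$ when $f\equiv 1$ on $S$ (a violation), \Cref{lem:closure_Threshold} needs generators $f\oplus h_i$ that vanish at $x_i$. Hence there must be gadget rows with $h_i(x_i)=1$ at three distinct points of $S$. Any three points of $S$ meet the two $1$-points of every $2$-$2$ split. So on a satisfiable instance made of many disjoint sets, every valid colouring $f$ has unbounded $\thdclosure{\cH^f}$. Every other $f$ gives at least $k$ by your own soundness claim. Thus a YES instance is mapped to $\exthd{\cH}\ge k$.

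The paper's proof uses this effect as its mechanism instead of fighting it, with $k$ growing like $M/2$. It reduces from \baltsetsplit and takes $\cX=[M]$ and $\cH=\cHbase\cup\cHbase^{\bone}$, where $\cHbase$ consists of $\bzero$, all singletons, the $\bone_{S_i}$, and the $\bone_{S_i\setminus\{x\}}$.
\begin{itemize}
\item The singletons give $\thdclosure{\cH^f}\ge\|f\|$, and closure under complement lets one assume $\|f\|\ge M/2$, so a good $f$ must be balanced.
\item A set with $|S_i\cap\supp(f)|\ne 2$ lets $\bone_{S_i}$ and $\bone_{S_i\setminus\{x\}}$ extend the singleton chain to length $M/2+2$.
\item Completeness is a weight argument: every row of $\cH^{\bone_T}$ has weight at most $M/2+1$, while the top generator of a size-$d$ witness has weight at least $d$.
\end{itemize}
No cross-gadget analysis is needed.
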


The above establishes computational barrier for computing Extended threshold dimension, which (partially) answers the question of \cite{DmitrievFHS26}. Similar to above, we also provide hardness of approximation results for Extended threshold dimensions.

\begin{theorem} \label{thm:etd-inapprox}
Assuming Gap-ETH, there is no polynomial-time $|\cX|^{o(1)}$-approximation algorithm or $|\cH|^{o(1)}$-approximation algorithm for Extended threshold dimension.
\end{theorem}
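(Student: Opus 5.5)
We plan to derive \Cref{thm:etd-inapprox} from the same Gap-ETH hardness source behind \Cref{thm:td-inapprox}: a gap version of (a variant of) Maximum Balanced Biclique, or equivalently the gap semi-ladder/ladder index problems mentioned in the abstract. The reduction maps a biclique instance to a class $\cH$ such that a large biclique forces $\exthd{\cH}$ to be large, while the absence of even moderate bicliques forces $\exthd{\cH}$ to be small. The core difficulty is the minimum over shifts $f$ and the closure, so we need robust bounds on $\thdclosure{\cH^f}$ that hold for \emph{every} $f$.

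\textbf{Completeness.} Suppose the instance has a biclique of size $k$. We embed in $\cH$ a ladder-like structure: points $x_1,\dots,x_k$ and hypotheses $h_0,\dots,h_k$ with $h_i(x_j)=1$ iff $j\le i$. We then show that this structure survives every shift $f$ after taking the closure. Fix $f$. On the points $x_1,\dots,x_k$, shifting only flips columns, so the restricted matrix becomes a "staircase with some columns complemented." We would show that the closure $\overline{\cH^f}$ still contains a threshold of length $\Omega(k)$. The idea is to take the columns where $f=0$ (or those where $f=1$, whichever set is larger, of size at least $k/2$).
\begin{itemize}
\item For columns with $f=0$, the original thresholds restricted to them are already thresholds.
\item For columns with $f=1$, the complemented staircase is a reversed threshold. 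Reversing the ordering of points gives a threshold again, since the pattern $j>i$ is a threshold under reindexing.
\end{itemize}
In both cases the closure only adds hypotheses, so it cannot decrease the Threshold dimension. Hence $\exthd{\cH}\ge k/2$.

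\textbf{Soundness.} Suppose every biclique in the instance has size at most $s$. We exhibit a specific shift, most likely $f\equiv 0$, and bound $\thdclosure{\cH}$. The plan has two steps.
\begin{enumerate}
\item Show that a threshold witness of length $d$ in $\overline{\cH}$ yields a semi-ladder of length $\Omega(d)$ in $\cH$. Each $t_i\in\overline{\cH}$ is an AND of members of $\cH$. Since $t_i(x_{i+1})=0$, some member $h_i$ of that AND has $h_i(x_{i+1})=0$. Since $t_i(x_j)=1$ for $j\le i$, that same member also has $h_i(x_j)=1$ for all $j\le i$. This is exactly a semi-ladder in $\cH$, and it is presumably the content of \Cref{lem:closure_Threshold}-type arguments.
\item Apply the earlier soundness for the semi-ladder index: a semi-ladder of length $d$ implies a biclique of size roughly $d/2$ or some polynomially related size.
\end{enumerate}
Together these give $\exthd{\cH}\le\thdclosure{\cH}\le O(s)$, or at worst a polynomial in $s$.

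\textbf{Conclusion and main obstacle.} Since the Gap-ETH hardness for the biclique variant has gap $N^{o(1)}$-to-$N^{\Omega(1)}$-type ratios, with $|\cX|$ and $|\cH|$ polynomial in $N$, constant factor losses preserve the $|\cX|^{o(1)}$ and $|\cH|^{o(1)}$ inapproximability. The main obstacle is the completeness step. An arbitrary $f$ interacts with the extra gadget hypotheses, and closure intersections might create spurious structure; that direction is harmless for completeness but must not break the reduction's soundness bound, which only used $f\equiv0$. If the column-splitting argument loses too much, we would add duplicated copies of points (complementary pairs $x, \bar x$) so that every shift sees an unshifted copy of the staircase.
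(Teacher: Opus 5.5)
Your argument is correct in substance. At the step specific to Extended threshold dimension, though, it takes a different route from the paper.

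\textbf{The paper's route.} It chains \Cref{lem:biclique-to-sli}, \Cref{lem:red-sli-to-closuretd} and the padding reduction of \Cref{lem:tdclosure-to-etd}. That padding uses $|\cX|$ disjoint copies of $\cX$ plus $\bzero$ and all singletons, with a case split on $\|f\|$. It converts $\thd{\cH}\ge k_1$ versus $\thdclosure{\cH}<k_2$ into $\exthd{\cH'}\ge k_1$ versus $\exthd{\cH'}<k_2$ with no loss.

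\textbf{Your route.} You skip the padding and use the general sandwich $\lceil \thd{\cH}/2\rceil \le \exthd{\cH} \le \thdclosure{\cH}$. The upper bound comes from the shift $f=\bzero$. The lower bound is your column-splitting argument, which is valid for every $f$: on the witness points with $f=0$ the staircase survives verbatim, and on those with $f=1$ it is complemented and becomes a threshold after reversing the order of the points. Applied directly to the class of \Cref{lem:red-sli-to-closuretd} built from the graph of \Cref{lem:biclique-to-sli}, this gives $\exthd{\cH}\ge k/2$ in the YES case versus $\exthd{\cH}<2k/n^{g(n)}$ in the NO case, with $|\cX|,|\cH|=O(n)$. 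The factor $4$ is absorbed because $g=o(1)$ is arbitrary. Your route is shorter and avoids the quadratic blow-up. The paper needs a lossless reduction anyway for \Cref{thm:etd-np-hard}, where there is no gap to absorb a factor of $2$, and simply reuses it here.

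\textbf{Points to tighten.} First, the YES case must provide a staircase, not merely a biclique. A balanced biclique of size $k$ in an arbitrary bipartite graph yields no threshold at all. The source therefore has to be the special form \gapbicliqueprob{k}{k/n^{g(n)}} of \Cref{thm:biclique-inapprox}, whose YES case is a $k$-clique in $G$; the index ordering of \Cref{lem:biclique-to-sli} turns that clique into a ladder. You should name this source and the class $\cH=\{h_u\}_{u\in L'}\cup\{\bone\}$ explicitly, rather than ``embedding a ladder-like structure.''

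Second, the soundness loss must be linear; a bound polynomial in $s$ would not preserve the gap in general. Your step 2 does give the linear bound, via $\sli{G'}<2k_2$.

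Third, the ``main obstacle'' you describe is not one. Soundness uses only the single shift $f=\bzero$, and your halving argument already handles every $f$ with a factor-$2$ loss. No complementary-pair gadget is needed.
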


\begin{theorem} \label{thm:etd-fpt-inapprox}
Assuming Gap-ETH, there is no FPT $o(k)$-approximation algorithm for Extended threshold dimension.
\end{theorem}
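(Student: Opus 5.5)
The plan is to reduce from a parameterized gap version of balanced biclique. By Chalermsook et al., Gap-ETH rules out any FPT $o(k)$-approximation for \textsc{Maximum Balanced Biclique}: no FPT algorithm can distinguish a bipartite graph containing $K_{k,k}$ from one with no $K_{k/r,k/r}$, for any $r = \omega(1)$. I would reuse the reduction built for \Cref{thm:etd-inapprox}, which already passes through the (semi-)ladder index, and check that it has three properties. First, it runs in $\poly(N)$ time. Second, it maps parameter $k$ to a parameter $k' = \Theta(k)$. Third, it preserves the gap up to constant factors. An FPT $o(k')$-approximation for $\exthd{\cdot}$ composed with this reduction would then be an FPT $o(k)$-approximation for balanced biclique, contradicting Gap-ETH.

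\textbf{Construction.} Given the bipartite graph $G = (A \cup B, E)$, I take $\cX$ to be $A$ (or a padded copy of it) and associate to each $b \in B$ the hypothesis given by its neighbourhood indicator. I then add a ladder-forcing gadget: auxiliary points and hypotheses arranged so that a $K_{k,k}$ in $G$ can be completed to a ladder of length $\Theta(k)$ between points and hypotheses. To control the minimum over shifts $f$, I add for each point a twin point carrying the complementary values. Then, whatever $f$ is, for every pair one of the two copies appears unshifted. This should keep the ladder present in $\cH^f$, and hence in $\overline{\cH^f}$.

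\textbf{Completeness.} In the yes case, for every $f$ I would exhibit points $x_1,\dots,x_{k'}$ and hypotheses $t_0,\dots,t_{k'} \in \overline{\cH^f}$ with $t_i(x_j)=1$ iff $j\le i$. The $t_i$ are taken as ANDs of shifted gadget hypotheses, cut down by the biclique hypotheses so that the staircase pattern appears. This shows $\exthd{\cH}\ge k'$.

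\textbf{Soundness.} In the no case I would fix the trivial shift $f\equiv 0$ (or whichever shift the twin construction makes canonical) and upper-bound $\thdclosure{\cH}$. Suppose a witness $x_1,\dots,x_d,t_0,\dots,t_d$ exists in the closure. Then the points $x_1,\dots,x_{\lfloor d/2\rfloor}$ and the closure elements $t_{\lceil d/2\rceil},\dots,t_d$ form an all-ones combinatorial rectangle. Each $t_i$ is an AND of original hypotheses, and every constituent of $t_i$ is $1$ on $x_1,\dots,x_i$. Moreover $t_i(x_{i+1})=0$, so some constituent $h_i$ of $t_i$ is $0$ on $x_{i+1}$, while $h_i \geq t_i$ pointwise. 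Since $t_j(x_{i+1}) = 1$ for $j > i$, the chosen hypotheses $h_i$ are pairwise distinct. This yields $\Omega(d)$ distinct original hypotheses that are all $1$ on $\Omega(d)$ common points. After discarding gadget elements, which contribute only $O(1)$ by design, this gives a biclique of size $\Omega(d)$ in $G$. Hence $d = O(k/r)$.

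\textbf{Main obstacle.} I expect the hardest step to be completeness uniformly over all shifts $f$: the closure under AND must recover the full staircase no matter how $f$ flips coordinates. This is where the twin/complement gadget has to be designed carefully. I would first check that the gadget used for \Cref{thm:etd-np-hard} already has this property with only a constant-factor loss in the dimension. The remaining work is checking that all size blow-ups are polynomial and all parameter changes are linear, which is routine.
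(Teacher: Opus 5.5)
Your skeleton is the right one: compose gap-preserving reductions with a linear change of parameter, starting from Chalermsook et al. Your rectangle argument in the soundness step is essentially \Cref{lem:closure_Threshold} plus the semi-ladder-to-biclique step. The genuine gap is the step that creates a threshold in the YES case. A $K_{k,k}$ in a generic bipartite graph has $1$'s on and above the diagonal, whereas a ladder needs $0$'s there. You leave the ``ladder-forcing gadget'' unspecified, and the natural ways to carve a staircase out of a biclique by ANDing with added hypotheses (e.g.\ prefix indicators along an order of the points) are themselves thresholds of large order in every NO instance. So ``gadget elements contribute only $O(1)$'' is exactly what fails.

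The missing idea is to use the hardness in the special form \gapbicliqueprob{k_1}{k_2}, which Chalermsook et al.'s construction satisfies. In the YES case a non-bipartite graph $G$ on $V=\{v_1,\dots,v_n\}$ has a $k_1$-clique; in the NO case $B[G^o]$ has no balanced biclique of size $k_2$. Both sides of the new bipartite graph are copies of $V$, with $(a_j,b_{j'})$ an edge iff $j>j'$ and $(v_j,v_{j'})\in E$ (\Cref{lem:biclique-to-sli}). The shared vertex order supplies the staircase for free: a clique sorted by index is a ladder of order $k_1$. Since the new graph is a subgraph of $B[G^o]$, a semi-ladder of order $2k_2$ gives a balanced biclique of size $k_2$. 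Neighbourhood hypotheses plus $\bone$ then turn ladder and semi-ladder index into $\thd{\cH}$ and $\thdclosure{\cH}$ (\Cref{lem:red-sli-to-closuretd}).

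Your handling of shifts is also unjustified as written. The claim that for every pair one copy ``appears unshifted'' is false: if $f(x)=1$ and $f(\bar x)=0$, both copies carry $1-h(x)$. The complemented-twin gadget can be rescued by a different argument. Every point is available in at least one orientation, so half of a threshold witness is available in a common orientation. Moreover, $\thd{\cdot}$ (unlike its closure) is invariant under complementing all hypotheses once the points are listed in reverse order, which gives $\exthd{\cdot}\ge\lceil\thd{\cH}/2\rceil$. Soundness must then use the shift that is $1$ exactly on the twin points, which makes them genuine twins so that \Cref{obs:twin-removal} applies. With $f\equiv 0$, a closure threshold on the complemented copies is a semi-ladder of the bipartite complement of your graph, which you do not control.

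The paper instead uses \Cref{lem:tdclosure-to-etd}: $|\cX|$ disjoint blocks, each $h^{(i)}$ supported on block $i$, plus $\bzero$ and all singletons. For any $f$, either $\|f\|\ge k_1$ and the shifted singletons act as co-singletons on $\supp(f)$, or some block is untouched by $f$ and carries an unshifted copy of $\cH$. Soundness follows from $\thdclosure{\cH'}\le\thdclosure{\cH}$ with no loss. Composing \Cref{lem:biclique-to-sli}, \Cref{lem:red-sli-to-closuretd} and \Cref{lem:tdclosure-to-etd} with \Cref{thm:biclique-fpt-inapprox} keeps the parameter $k$ and loses only a factor $2$ in the gap.
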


\paragraph{Ladder and Semi-Ladder Indices.}
Perhaps interestingly, our hardness results are shown via viewing the problems from a graph-theoretic perspective. In particular, it turns out that Threshold dimension of a class $\cH$ and its closure $\overline{\cH}$ are closely related to the notion of Ladder and Semi-Ladder indices from the parameterized algorithm literature \cite{FabianskiPST19-semiladder}, which we define below.

\begin{definition}[Ladder Index~\cite{FabianskiPST19-semiladder}]
A \emph{ladder index} of a bipartite graph $G = (L, R, E)$, denoted by $\li{G}$, is the largest non-negative integer $\ell$ such that there exists $a_1, \dots, a_\ell \in L$ and $b_1, \dots ,b_\ell \in R$ such that, for all $i, j \in [\ell]$, $(a_i, b_j) \in E$ if and only if $i > j$.

We refer to such $a_1, \dots, a_\ell, b_1, \dots, b_\ell$ as a \emph{ladder of order $\ell$ of $G$}. 
\end{definition}

\begin{definition}[Semi-Ladder Index~\cite{FabianskiPST19-semiladder}]
A \emph{semi-ladder index} of a bipartite graph $G = (L, R, E)$, denoted by $\sli{G}$, is the largest non-negative integer $\ell$ such that there exists $a_1, \dots, a_\ell \in L$ and $b_1, \dots ,b_\ell \in R$ that satisfy the following:
\begin{itemize}
\item $(a_i, b_i) \notin E$ for all $i \in [\ell]$, and,
\item $(a_i, b_j) \in E$ for all $i, j \in [\ell]$ such that $i > j$.
\end{itemize}
We refer to such $a_1, \dots, a_\ell, b_1, \dots, b_\ell$ as a \emph{semi-ladder of order $\ell$ of $G$}. 
\end{definition}

We note that, for any bipartite graph $G$, $\sli{G} \geq \li{G}$. This is simply because any ladder of order $\ell$ is also a semi-ladder of order $\ell$. However, the inverse does not hold since there might be an edge between $(a_i, b_j)$ for some $i < j$ in a semi-ladder. 

We can now define the associated computational problems as follows.

\begin{center}
\fbox{
\begin{minipage}{0.95\textwidth}
\textbf{Problem:} \exactsemiladderprob \\
\textbf{Input:} A bipartite graph $G = (L, R, E)$ and a target integer $k \geq 0$. \\
\textbf{Question:} Is $\sli{G} \geq k$?
\end{minipage}
}
\end{center}

\begin{center}
\fbox{
\begin{minipage}{0.95\textwidth}
\textbf{Problem:} \ladderprob \\
\textbf{Input:} A bipartite graph $G = (L, R, E)$ and a target integer $k \geq 0$. \\
\textbf{Question:} Is $\li{G} \geq k$?
\end{minipage}
}
\end{center}

It is obvious that these problems are in NP. We show that they are NP-hard:

\begin{theorem} \label{thm:indices-np-hard-intro}
Both \exactsemiladderprob and \ladderprob are NP-complete.
\end{theorem}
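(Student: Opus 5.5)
Membership in NP is immediate: a ladder or semi-ladder of order $k$ can be guessed and checked in polynomial time. The work is hardness, and I would reduce from a biclique problem, since the paper already flags a connection to Maximum Balanced Biclique. The reduction should work for both indices at once, by making the constructed graph's semi-ladder index collapse to its ladder index. Concretely, I would start from \textsc{Balanced Biclique}, which is NP-complete: given a bipartite graph $H=(U,V,F)$ and $k$, decide whether there are $A\subseteq U$, $B\subseteq V$ with $|A|=|B|=k$ and $A\times B\subseteq F$.

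The construction I would try is a ``layered product'' of $H$ with the ladder pattern. Set $L = U\times[k]$ and $R = V\times[k]$, and join $(u,i)$ to $(v,j)$ iff $i>j$ and $(u,v)\in F$. Pairs with $i\le j$ never get an edge.

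\emph{Completeness.} Suppose $A=\{u_1,\dots,u_k\}$ and $B=\{v_1,\dots,v_k\}$ form a $k\times k$ biclique. Take $a_i=(u_i,i)$ and $b_j=(v_j,j)$. Then $(a_i,b_j)$ is an edge exactly when $i>j$, so this is a ladder of order $k$. It is therefore also a semi-ladder of order $k$.

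\emph{Soundness.} Suppose there is a semi-ladder $a_1,\dots,a_m,b_1,\dots,b_m$ of order $m$, with $a_i=(u_i,\lambda_i)$ and $b_j=(v_j,\mu_j)$. The key steps are as follows.
\begin{enumerate}
\item Every pair $i>j$ is an edge, so $\lambda_i>\mu_j$ whenever $i>j$. From $\lambda_i>\mu_{i-1}$ and $\lambda_{i+1}>\mu_i$ alone we cannot chain the layer labels together, so monotonicity of $\lambda$ is not automatic.
\item What we do get is that all pairs $(u_i,v_j)$ with $i>j$ lie in $F$. This is a ``half biclique'' (triangular) pattern in $H$, not a full one.
\item To recover a full biclique, restrict to the upper half of the ladder, $i\in\{\lceil m/2\rceil+1,\dots,m\}$, and the lower half, $j\le\lfloor m/2\rfloor$. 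Every such pair has $i>j$, so these give a biclique of size about $m/2\times m/2$ in $H$.
\end{enumerate}

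Step 3 only yields a factor-$2$ gap, which is not an exact reduction. I expect this soundness step to be the main obstacle, and I would address it in one of two ways. The first is to reduce from a gap version, \gapbicliqueprob{k}{k/2-1} or a variant, which is presumably what the paper uses for its inapproximability results. The second is to make it exact by padding $H$: add a gadget $V_0$ of $k$ new right vertices adjacent to all of $U$, and symmetrically $k$ new left vertices $U_0$ adjacent to all of $V$, then ask for order $2k$. The point of the padding is that one half of the ladder can be filled by gadget vertices, so a ladder of order $2k$ should force a genuine $k\times k$ biclique in the original $H$. I would check this by counting how many of the $a_i$ with large index can be gadget vertices. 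The cleanest fallback is the first option: invoke NP-hardness of the gap balanced biclique problem with ratio strictly above $2$, for which NP-hardness follows from known results. That immediately gives NP-hardness of both \ladderprob and \exactsemiladderprob.
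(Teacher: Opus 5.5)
\textbf{The construction fails at soundness step 3, and the failure is worse than a factor of $2$.}

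The vertices $a_i=(u_i,\lambda_i)$ are distinct in $L$, but their projections $u_i$ need not be distinct in $U$, because every vertex of $H$ has a copy in every layer. The two halves of the semi-ladder therefore give a biclique in the \emph{layered} graph, not an $\approx m/2\times m/2$ biclique in $H$. Concretely, let $(u,v)\in F$ be any single edge and set $a_i=(u,i)$, $b_j=(v,j)$ for $i,j\in[k]$. Then $(a_i,b_j)$ is an edge iff $i>j$, so this is a ladder of order $k$. Every $H$ with at least one edge is thus mapped to a YES instance, and no choice of source problem can turn this construction into a reduction.

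Neither proposed repair helps.
\begin{itemize}
\item Maximum Balanced Biclique is \emph{not} known to be NP-hard to approximate at all, let alone within a factor above $2$. The paper says this explicitly. Such gaps are available only under assumptions like Gap-ETH, which is why the paper uses them only for its inapproximability theorems.
\item The padding inherits the same collapse. Even setting that aside, the gadget vertices are adjacent to entire sides, so they can fill one half of an order-$2k$ semi-ladder. For example, take $b_1,\dots,b_k$ to be copies of $V_0$ and $a_1,\dots,a_k$ arbitrary copies of vertices of $U$. Then $a_{k+1},\dots,a_{2k},b_{k+1},\dots,b_{2k}$ need only form a triangular pattern in $H$, and the factor-$2$ loss reappears.
\end{itemize}

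\textbf{What is missing: amplification by many copies, with counting that accounts for multiplicities, and a source problem suited to it.}

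The paper reduces from a promise problem with these cases:
\begin{itemize}
\item YES: $G$ has a $k$-clique.
\item NO: every biclique of $B[G^o]$, balanced or not, has at most $k^2-1$ edges. This is NP-hard via 3-SAT, with vertices being partial assignments to pairs of variables.
\end{itemize}
It takes $t=2k^2$ copies of $V$ on each side. It joins $a^{(i)}_j$ to $b^{(i')}_{j'}$ iff $(i,j)$ is lexicographically larger than $(i',j')$ and either $j=j'$ or $v_jv_{j'}\in E$. The target order is $\ell=kt$.

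Both sides share one vertex order, and $B[G^o]$ contains the diagonal edges. Hence a clique yields a ladder of order $k$ \emph{inside each copy}, and the target can be $k$ times the number of copies. In your layering, by contrast, a ladder whose diagonal pairs all project to edges of $H$ satisfies $\lambda_1\le\mu_1<\lambda_2\le\mu_2<\cdots$. So a biclique gives order at most the number of layers, which a single edge already attains; adding layers cannot fix this.

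In the paper's soundness argument, the halves of a semi-ladder give a biclique $A'\times B'$ with $i^A_{\min}\ge i^B_{\max}$. Counting copies gives $|A'|\le (t-i^A_{\min}+1)|A|$ and $|B'|\le i^B_{\max}|B|$ for the projections $A,B$. By AM--GM, $(\ell/2)^2\le \big((t+1)/2\big)^2|A||B|$, that is, $|A||B|\ge k^2t^2/(t+1)^2>k^2-1$. The halving loss is absorbed by the many copies. Because only the product $|A||B|$ is controlled, the NO case must bound edge counts of unbalanced bicliques, which plain Balanced Biclique does not provide.
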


Finally, similar to above, we also provide hardness of approximation results for these indices.

\begin{theorem} \label{thm:ladder-inapprox}
Assuming Gap-ETH, there is no polynomial-time $n^{o(1)}$-approximation algorithm for Ladder index or Semi-ladder index, where $n$ denotes the number of vertices in the input graph.
\end{theorem}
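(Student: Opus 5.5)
We reduce from a gap version of Maximum Balanced Biclique, which under Gap-ETH admits no polynomial-time $n^{o(1)}$-approximation (Manurangsi, via Gap-ETH): given a bipartite graph $H=(U,V,E_H)$ on $n$ vertices, it is hard to distinguish the case where $H$ contains a $K_{k,k}$ from the case where it contains no $K_{k',k'}$ with $k' = k/n^{o(1)}$. It will be more convenient to use the bipartite complement $\overline{H}$, in which a biclique of $H$ becomes an ``independent'' pair $A\subseteq U$, $B\subseteq V$ with no edges between them.

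\textbf{Construction.} From $H$ we build $G=(L,R,E)$ as follows. Take $k$ ordered copies $U\times[k]$ on the left and $V\times[k]$ on the right. Place an edge between $(u,i)$ and $(v,j)$ if $i>j$, regardless of $u,v$. For $i\le j$, place an edge exactly when $(u,v)\notin E_H$.

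\textbf{Completeness.} Suppose $H$ has a biclique $\{u_1,\dots,u_k\}\times\{v_1,\dots,v_k\}$. Set $a_i=(u_i,i)$ and $b_j=(v_j,j)$. For $i>j$ the pair is an edge by construction. For $i\le j$ we have $(u_i,v_j)\in E_H$, so the pair is a non-edge. Hence these vertices form a ladder of order $k$, and therefore also a semi-ladder. So $\li{G}\ge k$ and $\sli{G}\ge k$.

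\textbf{Soundness.} Since $\sli{G}\ge\li{G}$, it suffices to bound $\sli{G}$. Take any semi-ladder $a_1,\dots,a_\ell,b_1,\dots,b_\ell$ and write $a_i=(u_i,c_i)$ and $b_i=(v_i,d_i)$.
\begin{enumerate}
\item Since $(a_i,b_i)\notin E$, the construction forces $c_i\le d_i$ and $(u_i,v_i)\in E_H$.
\item For $i>j$ we need $(a_i,b_j)\in E$. Fix a copy index $c$ and restrict to $I=\{i: c_i=c\}$. For $i>j$ in $I$, either $d_j<c$, or $d_j\ge c$ and then $(u_i,v_j)\notin E_H$.
\end{enumerate}

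To obtain a biclique we need $(u_i,v_j)\in E_H$ across a large set of pairs. The semi-ladder conditions above instead give mostly non-edges. This means the polarity is wrong, so I would flip it: for $i\le j$, place an edge exactly when $(u,v)\in E_H$. Completeness is then run on $\overline{H}$, i.e.\ we reduce from biclique in the bipartite complement. Equivalently, we start from a Gap-Biclique instance whose hard instances also control bicliques in the complement, which is the route I expect the paper to take via its variant of Balanced Biclique.

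\textbf{Main obstacle: the Ramsey-type extraction.} From a semi-ladder of order $\ell$ we must extract a $K_{\ell',\ell'}$ in $H$ with $\ell'\ge \ell/n^{o(1)}$. I would try the following steps.
\begin{enumerate}
\item Pigeonhole on the copy indices: since there are $k$ copies, some index class is large, or else the indices spread out.
\item Split the sequence into a first half and a second half. All pairs with $i$ in the second half and $j$ in the first half must be edges. Within a single copy these edges come from $H$ itself.
\item This yields sets $\{u_i\}_{i>\ell/2}$ and $\{v_j\}_{j\le\ell/2}$ that are fully adjacent, i.e.\ a biclique of size $\ell/2$, provided the vertices are distinct. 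Distinctness is ensured by the diagonal non-edge conditions.
\end{enumerate}
Handling the copy indices so that the cross pairs genuinely come from $H$, rather than from the unconditional $i>j$ rule, is the delicate part. I would address it by using a single copy, i.e.\ making all cross edges determined by $H$, which brings us back to the halving argument.

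Finally, $n_G=O(kn)\le n^2$, so a ratio of $n^{o(1)}$ in $H$ translates to a ratio of $n_G^{o(1)}$ in $G$, which gives the theorem.
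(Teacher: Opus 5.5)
Your proposal does not reach a working reduction: each of the two constructions you consider fails on one side.

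\textbf{The $k$-copy construction.} The unconditional edges for copy indices $i>j$ make soundness vacuous. Take any single edge $(u,v)\in E_H$ and set $a_i=(u,i)$, $b_j=(v,j)$ for $i,j\in[k]$. Then $(a_i,b_j)\in E$ for $i>j$ by the unconditional rule, and $(a_i,b_j)\notin E$ for $i\le j$ because $(u,v)\in E_H$. So $\li{G}\ge k$ for every $H$ with at least one edge. With your flipped polarity, use a non-edge of $H$ instead.

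The ``wrong polarity'' you observe is a symptom of this: the copy indices, not $H$, are producing the ladder. For the same reason, the diagonal non-edges make the $a_i$'s distinct as vertices of $G$, but their projections to $H$ need not be distinct.

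\textbf{The single-copy fallback.} If all edges are determined by $H$, then $G$ is just $H$ or its bipartite complement, and completeness fails. A ladder needs both the non-edges for $i\le j$ and the edges for $i>j$, and a $K_{k,k}$ supplies only one of these patterns. For example, if $H$ is a $K_{k,k}$ plus isolated vertices, both $H$ and its bipartite complement have ladder index at most $2$. So the halving argument you end with is the right soundness step, but you have no construction on which completeness and soundness hold together.

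\textbf{The missing idea.} Reduce not from generic Balanced Biclique but from the structured variant \gapbicliqueprob{k_1}{k_2}. Its YES case is a $k_1$-clique in a non-bipartite graph $G=(V,E)$, and its NO case says $B[G^o]$ has no balanced biclique of size $k_2$. The Gap-ETH hardness of \cite{Manurangsi17} has this form with $k_2=k_1/n^{o(1)}$; not all known biclique hardness results do.

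Fix an order $v_1,\dots,v_n$ and let $G'$ have one copy $a_j$, $b_j$ of each $v_j$ on each side, with $(a_j,b_{j'})\in E'$ iff $j>j'$ and $(v_j,v_{j'})\in E$.
\begin{itemize}
\item \emph{Completeness:} the vertex order supplies every non-edge with $i\le j$, and the clique supplies every edge with $i>j$. Hence a $k_1$-clique listed in increasing index order is a ladder of order $k_1$.
\item \emph{Soundness:} $G'$ is a subgraph of $B[G^o]$ under the injective map $a_j\mapsto(v_j,1)$, $b_j\mapsto(v_j,2)$. Your halving argument applied to a semi-ladder of order $2k_2$ therefore gives a $K_{k_2,k_2}$ in $B[G^o]$.
\end{itemize}
This loses only a factor $2$ and produces a graph on $2n$ vertices. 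Since $\sli{G'}\ge\li{G'}$, it yields the $n^{o(1)}$ hardness for both indices.
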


\begin{theorem} \label{thm:ladder-fpt-inapprox}
Assuming Gap-ETH, there is no FPT $o(k)$-approximation algorithm for Ladder index or Semi-ladder index.
\end{theorem}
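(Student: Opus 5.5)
The plan is to reduce from a gap version of Maximum Balanced Biclique, \gapbicliqueprob{k}{\delta k}. Under Gap-ETH, no FPT algorithm can distinguish a bipartite graph that contains a $K_{k,k}$ from one that contains no $K_{\delta k,\delta k}$, for every function $\delta = \delta(k) = o(1)$. This follows from the known Gap-ETH hardness of $k$-Biclique in the FPT setting (Chalermsook et al.; Lin), so we take it as the starting point. The reduction must map a biclique instance $H=(U,V,E_H)$ to a bipartite graph $G$ with three properties: (i) a $K_{k,k}$ in $H$ yields a ladder of order $\Omega(k)$ in $G$; (ii) any semi-ladder of order $\ell$ in $G$ yields a biclique $K_{\Omega(\ell),\Omega(\ell)}$ in $H$, or at least one of size $\Omega(\ell/\text{poly-small})$; (iii) $G$ is computable in FPT time, with the new parameter $O(k)$. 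Since $\li{G}\le \sli{G}$, one completeness statement (ladder) and one soundness statement (semi-ladder) handle both problems at once. An FPT $o(k)$-approximation for either index would then separate the two cases, contradicting the biclique hardness.

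The construction imposes the order with a staircase of copies. Take $\ell = k$ index layers. Set $L = U\times[\ell]$ and $R = V\times[\ell]$. Place an edge between $(u,i)$ and $(v,j)$ iff $i>j$ and $(u,v)\notin E_H$, i.e.\ use the bipartite complement of $H$ below the diagonal and no edges on or above it. We may need to flip this to $(u,v)\in E_H$ depending on which orientation makes soundness work. For completeness, take a biclique $\{u_1..u_k\}\times\{v_1..v_k\}$. Because the required non-edges come from the layering, ladders need a "pairwise-adjacent below the diagonal" pattern across distinct layers. So the edge rule should use $E_H$ itself: edges $(u,i)\sim(v,j)$ iff $i>j$ and $uv\in E_H$. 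Then $a_i=(u_i,i)$ and $b_j=(v_j,j)$ form a ladder of order $k$. The diagonal and upper part are non-edges automatically, since there are no edges with $i\le j$.

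Soundness is the main obstacle. Given a semi-ladder $a_1..a_\ell$, $b_1..b_\ell$ with every $a_i b_j$ for $i>j$ an edge of $G$, I project to $H$. Let $a_i=(u_i,\lambda_i)$ and $b_j=(v_j,\mu_j)$. The pairs with $i>\ell/2\ge j$ form a complete bipartite pattern, so $u_i v_j\in E_H$ for all such pairs. The difficulty is that the projected vertices may repeat, giving a small biclique, because many $a_i$ may share the same $u$ and differ only in layer. I plan two steps here. First, I use the layer constraint: $\lambda_i>\mu_j$ for all $i>j$, and non-edges on the diagonal, to bound how many $a_i$ share a vertex. Second, if that bound fails, I add a gadget that forbids repeats. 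Concretely, I would make the layers irrelevant and instead use distinct vertices in $H$, so that $G$ is $H$ itself augmented with a ladder-forcing structure. The alternative is to note that if $a_i$ and $a_{i'}$ with $i<i'$ share $u$, then $b_{i}$ is adjacent to $a_{i'}$ but not to $a_i$. That difference must come from the layer, which forces $\lambda_i\le\mu_i<\lambda_{i'}$. Hence the number of repeats of a single $u$ is at most $\ell$ layers $=k$. That bound is too weak, so I will instead take the number of layers small, say replace $[\ell]$ by copies only when necessary. If needed, I will first amplify $H$ by a tensor/product so that bicliques of size $\delta k$ remain absent while the repetition loss is a constant factor. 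Pigeonhole over the half-split then gives $\Omega(\ell/k)$ distinct vertices on each side, and with multiplicities bounded by a constant after amplification this gives $\Omega(\ell)$. Consequently $\sli{G}=o(k)$ in the NO case, which completes the argument.
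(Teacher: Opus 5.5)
\textbf{There is a genuine gap: the layered construction fails soundness outright.} In your $G$ with $L=U\times[\ell]$, $R=V\times[\ell]$ and $(u,i)\sim(v,j)$ iff $i>j$ and $uv\in E_H$, a \emph{single} edge $uv\in E_H$ already gives a ladder of order $\ell=k$: take $a_i=(u,i)$ and $b_j=(v,j)$ for $i,j\in[\ell]$. Every NO instance has edges, so $\li{G}\ge k$ in both cases, and your claimed $\sli{G}=o(k)$ is false.

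The patches you sketch cannot repair this. When the ladder comes from a biclique of $H$, every pair $u_iv_j$ is an edge of $H$. So every required non-edge $(a_i,b_j)$ with $i\le j$ must come from the layers, which forces $\lambda_1\le\mu_1<\lambda_2\le\mu_2<\cdots$. Completeness therefore needs at least $k$ layers. But with $m$ layers, every instance containing an edge has $\li{G}\ge m$. Tensoring $H$ changes nothing, since one edge still gives a ladder whose order equals the number of layers. Your pigeonhole bound of $\Omega(\ell/k)$ distinct vertices is only $\Omega(1)$ in the relevant range $\ell\le k$.

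\textbf{The missing idea} is to impose the order with the base vertices themselves, with no duplicated copies. Then soundness is free. In any semi-ladder the $a_i$ are pairwise distinct: if $a_i=a_{i'}$ with $i<i'$, then $b_i$ would be both adjacent and non-adjacent to it. The same holds for the $b_j$. So the second half of the $a$'s and the first half of the $b$'s span a balanced biclique on distinct vertices.

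To make completeness work with one fixed order, the paper does not start from generic bipartite biclique hardness. It starts from \textsc{Gap-Balanced-Biclique} in a special form: the YES case is a $k$-clique in a graph $G$, and the NO case excludes balanced bicliques of size $k_2$ in $B[G^o]$. The Chalermsook et al.\ construction has this form (\Cref{thm:biclique-fpt-inapprox}). Take $L',R'$ to be copies of $V=\{v_1,\dots,v_n\}$, with $(a_j,b_{j'})$ an edge iff $j>j'$ and $v_jv_{j'}\in E$. The clique sorted by index is then a ladder of order $k$, and a semi-ladder of order $2k_2$ yields a $K_{k_2,k_2}$ in $B[G^o]$ (\Cref{lem:biclique-to-sli}). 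Together with $\li{G}\le\sli{G}$, this gives the theorem.

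If you want to keep a generic bipartite starting point, you need another way to orient edges without copying vertices. One option is colour-coding: keep $uv\in E_H$ iff $c(u)>c(v)$, with $c$ ranging over pairs of $k$-perfect hash functions on $U$ and on $V$. Each resulting graph is a subgraph of $H$, and there are only FPT many of them.
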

\section{Preliminaries}

For any $S \subseteq \cX$, we use $\bone_S$ to denote the hypothesis that is the indicator of $S$, i.e. $\bone_S(x) = 1$ iff $x \in S$. For $x \in \cX$, we write $\bone_x$ and $\bone$ as abbreviations for $\bone_{\{x\}}$ and $\bone_{\cX}$, respectively. We refer to $\bone_{\{x\}}$ as a \emph{singleton} and $\bone_{\cX \setminus \{x\}}$ as a \emph{co-singleton}. For every $f: \cX \to \{0, 1\}$, we write $\supp(f)$ to denote its support, i.e. $f^{-1}(1)$, and use $\|f\|$ as a shorthand for $|\supp(f)|$.

Recall that a biclique is simply a (not necessarily balanced) complete bipartite graph. A balanced biclique is one whose two sides have the same number of vertices. The size of a balanced biclique is the number of vertices on each side.

\subsection{Promise Problems}

To prove hardness (of approximation) results, it is often useful to consider \emph{promise problems}. A promise problem $\Pi$ is a tuple of languages $(\Pi_{YES}, \Pi_{NO})$ such that $\Pi_{YES} \cap \Pi_{NO} = \emptyset$. A (deterministic) algorithm $A$ is said to solve a promise problem if $A(x) = 1$ for all $x \in \Pi_{YES}$, and $A(x) = 0$ for all $x \in \Pi_{NO}$. Note that there is no requirement for $x \notin \Pi_{YES} \cup \Pi_{NO}$. 

Reductions and NP-hardness of promise problems can be defined analogously to standard decision problems. (See~\cite{Goldreich06a} for more detailed discussion.)

\subsection{Useful Properties of Threshold Dimensions (and Its Variants)}

Below we list a few properties of Threshold dimension and its variants, which will be subsequently useful in our proofs.

\subsubsection{Structural Characterization of Thresholds in Closures}

We begin by establishing a lemma characterizing when the closure $\overline{\cH}$ has Threshold dimension $d$. We stress that the generators $g_1, \dots, g_{d + 1}$ in the lemma below are from the base hypothesis class $\cH$ (not the closure $\overline{\cH}$), which makes this lemma convenient for our subsequent proofs.

\begin{lemma} \label{lem:closure_Threshold}
Let $\cH \subseteq \{0,1\}^\cX$. We have $\thdclosure{\cH} \ge d$ if and only if there exist $x_1, \dots, x_d \in \cX$ and $g_1, \dots, g_d, g_{d + 1} \in \cH$ such that:
\begin{enumerate}
    \item $g_{d + 1}(x_j) = 1$ for all $j \in [d]$.
    \item For all $i \in [d]$, $g_i(x_i) = 0$, and $g_i(x_j) = 1$ for all $j < i$.
\end{enumerate}
We refer to $g_1, \dots, g_{d + 1}$ as the \emph{generators} for $\thdclosure{\cH} \geq d$.
\end{lemma}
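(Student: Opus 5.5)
We prove the two directions separately. Throughout, an element of $\overline{\cH}$ has the form $\bigwedge_{f \in \cF} f$ for some nonempty $\cF \subseteq \cH$. Such a conjunction equals $1$ at a point $x$ iff every $f \in \cF$ equals $1$ at $x$.

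\emph{($\Leftarrow$)} Suppose we are given $x_1,\dots,x_d$ and generators $g_1,\dots,g_{d+1}$ as in the lemma. We take the same points and set
\[
t_i = g_{d+1} \wedge \bigwedge_{k=i+1}^{d} g_k \quad \text{for } i = 0,\dots,d,
\]
so in particular $t_d = g_{d+1}$. Each $t_i$ is a nonempty conjunction and hence lies in $\overline{\cH}$. We check the threshold pattern.
\begin{itemize}
\item If $j \le i$: we have $g_{d+1}(x_j)=1$ by condition 1. For every $k > i \ge j$ we have $g_k(x_j)=1$ by condition 2. Hence $t_i(x_j)=1$.
\item If $j > i$: the index $k=j$ appears in the conjunction defining $t_i$, and $g_j(x_j)=0$. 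Hence $t_i(x_j)=0$.
\end{itemize}
So $x_1,\dots,x_d,t_0,\dots,t_d$ is a witness for $\thdclosure{\cH} \ge d$.

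\emph{($\Rightarrow$)} Suppose $x_1,\dots,x_d,t_0,\dots,t_d \in \overline{\cH}$ is a witness. Write $t_i = \bigwedge_{f\in\cF_i} f$ with $\emptyset \ne \cF_i \subseteq \cH$.
\begin{itemize}
\item Since $t_d(x_j)=1$ for all $j \in [d]$, every member of $\cF_d$ is $1$ on all of $x_1,\dots,x_d$. Let $g_{d+1}$ be any member of $\cF_d$; this uses nonemptiness of $\cF_d$ and gives condition 1.
\item Fix $i \in [d]$. Since $t_{i-1}(x_i)=0$, some $f \in \cF_{i-1}$ has $f(x_i)=0$; call it $g_i$. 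Since $t_{i-1}(x_j)=1$ for all $j \le i-1$, every member of $\cF_{i-1}$, in particular $g_i$, satisfies $g_i(x_j)=1$ for all $j<i$. This gives condition 2.
\end{itemize}

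There is no real obstacle here. The only points needing care are the index shift (generator $g_i$ is extracted from $t_{i-1}$) and the use of nonemptiness of $\cF_d$ to obtain $g_{d+1}$. The case $d=0$ is trivial in both directions as long as $\cH$ is nonempty: for $\Leftarrow$, set $t_0=g_1$; for $\Rightarrow$, take $g_1$ to be any element of $\cF_0$.
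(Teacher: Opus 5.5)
Your proof is correct and matches the paper's argument: in the forward direction you take $g_{d+1}$ from the conjunction representing $t_d$ and $g_i$ from the one representing $t_{i-1}$, and in the reverse direction you use the same suffix conjunctions $t_i = \bigwedge_{k=i+1}^{d+1} g_k$. Your explicit check of the threshold pattern and your remark on the case $d=0$ spell out what the paper calls ``simple to see,'' but the route is the same.
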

\begin{proof}
($\Rightarrow$) Assume $\thdclosure{\cH} \ge d$. There exist $x_1, \dots, x_d$ and $t_0, t_1, \dots, t_d \in \overline{\cH}$ with $t_i(x_j) = 1$ iff $j \le i$. 
Since $t_d(x_j) = 1$ for all $j \in [d]$, and $t_d = \bigwedge_{g \in A_d} g$ for some $\emptyset \neq A_d \subseteq \cH$, any $g_{d + 1} \in A_d$ must satisfy $g_{d + 1}(x_j) = 1$ for all $j \in [d]$.
For each $i \in [d]$, since $t_{i - 1}(x_i) = 0$ and $t_{i-1} = \bigwedge_{g \in A_{i-1}} g$ for some $\emptyset \neq A_{i-1} \subseteq \cH$, there is some $g_i \in A_{i-1}$ such that $g_i(x_i) = 0$. Furthermore, because $t_{i-1}(x_j) = 1$ for all $j < i$, every element in $A_{i-1}$, including $g_i$, must evaluate to $1$ on $x_1, \dots, x_{i-1}$. This yields the desired generators $g_1, \dots, g_{d + 1}$.

($\Leftarrow$) Assume such $x_1, \dots, x_d$ and $g_1, \dots, g_{d+1} \in \cH$ exist. Define $t_i =  \bigwedge_{k=i+1}^{d+1} g_k$ for $i \in \{0, \dots, d\}$. It is simple to see that $t_i(x_j) = 1$ iff $j \le i$, meaning that $\thdclosure{\cH} \ge d$.
\end{proof}

\subsubsection{Restrictions, Subclasses and Threshold Dimensions}

We start by defining the restriction of a subclass onto a subdomain:

\begin{definition}[Restriction]
For any $\cH \subseteq \{0, 1\}^{\cX}$ and $\cX' \subseteq \cX$, the restriction of $\cH$ onto $\cX'$ is defined as $\cH|_{\cX'} := \{h|_{\cX'} \mid h \in \cH\} \subseteq \{0,1\}^{\cX'}$.
\end{definition}

The following two observations regarding monotonicity of threshold dimension (for subclasses and restrictions) are obvious.

\begin{observation} \label{obs:superset}
For any $\cH' \subseteq \cH \subseteq \{0, 1\}^{\cX}$, we have $\thd{\cH} \geq \thd{\cH'}$.
\end{observation}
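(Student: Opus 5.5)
The plan is to argue directly from \Cref{def:thd}, showing that any witness for $\cH'$ is already a witness for $\cH$. Let $d = \thd{\cH'}$. By definition, there exist points $x_1, \dots, x_d \in \cX$ and hypotheses $t_0, \dots, t_d \in \cH'$ such that $t_i(x_j) = 1$ if and only if $j \le i$.

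Next, use the inclusion $\cH' \subseteq \cH$: each $t_i$ also belongs to $\cH$. The witness condition $t_i(x_j) = 1 \iff j \le i$ refers only to the values of these particular hypotheses at these particular points. It does not depend on which ambient class the hypotheses are regarded as belonging to. Hence $x_1, \dots, x_d, t_0, \dots, t_d$ is also a witness for the Threshold dimension of $\cH$.

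Since $\thd{\cH}$ is the largest $d$ for which a witness exists, it follows that $\thd{\cH} \ge d = \thd{\cH'}$.

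The degenerate case $d = 0$ needs only a brief check. A witness for $d=0$ consists of a single hypothesis $t_0$ and no points. Such a $t_0$ exists in $\cH$ whenever $\cH'$ is nonempty, because then $\cH \supseteq \cH' \neq \emptyset$. If $\cH'$ is empty, the inequality holds trivially under the convention that the Threshold dimension is a non-negative integer. This case is the only potential subtlety, and there is no real obstacle in the argument.
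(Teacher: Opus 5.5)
Your proof is correct and is exactly the immediate argument the paper has in mind; the paper calls this observation obvious and gives no proof, and the intended reason is that any witness $x_1, \dots, x_d, t_0, \dots, t_d$ for $\thd{\cH'}$ is also a witness for $\thd{\cH}$. Your separate treatment of $d = 0$ is unnecessary, since the general argument already covers it ($t_0 \in \cH' \subseteq \cH$). The empty-class remark only concerns a convention the paper leaves implicit.
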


\begin{observation} \label{obs:restriction-doesnot-decrease}
For any $\cH \subseteq \{0, 1\}^{\cX}$ and $\cX' \subseteq \cX$, we have $\thd{\cH} \geq \thd{\cH|_{\cX'}}$.
\end{observation}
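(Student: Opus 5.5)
The plan is to take a witness of $\thd{\cH|_{\cX'}}$ and lift it back to $\cH$. Let $d = \thd{\cH|_{\cX'}}$. By \Cref{def:thd}, there exist points $x_1, \dots, x_d \in \cX'$ and hypotheses $t'_0, \dots, t'_d \in \cH|_{\cX'}$ with $t'_i(x_j) = 1$ if and only if $j \le i$.

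By the definition of restriction, each $t'_i$ equals $t_i|_{\cX'}$ for some $t_i \in \cH$. If several hypotheses of $\cH$ share the same restriction, we simply pick any one of them.

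Since every $x_j$ lies in $\cX' \subseteq \cX$, we have $t_i(x_j) = t'_i(x_j)$ for all $i, j$. Hence $t_i(x_j) = 1$ if and only if $j \le i$. This means $x_1, \dots, x_d \in \cX$ together with $t_0, \dots, t_d \in \cH$ form a witness for $\thd{\cH} \ge d$, which is the claim.

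There is no real obstacle here. The only point to check is that the lifted hypotheses need not be distinct members of $\cH$, and indeed \Cref{def:thd} does not require distinctness. In any case, they are automatically distinct, because they already differ on the points $x_j$. The degenerate case $d = 0$ is trivial.
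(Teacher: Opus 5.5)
Your proof is correct. Lifting each restricted hypothesis $t'_i$ to a preimage $t_i \in \cH$, and noting that the witness points $x_j$ lie in $\cX'$, is exactly the argument the paper has in mind when it calls this observation obvious (the paper gives no written proof).
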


In some cases, the above inequality can become an equality. Below, we list a couple such cases. First is when we remove a ``constant'' element:

\begin{observation} \label{obs:constant-removal}
For any $\cH \subseteq \{0, 1\}^{\cX}$, an element $x \in \cX$ is said to be \emph{constant} if $h(x)$ has the same value for all $h \in \cH$. For such an element $x$, we have $\thd{\cH|_{\cX \setminus \{x\}}} = \thd{\cH}$, $\thdclosure{\cH|_{\cX \setminus \{x\}}} = \thdclosure{\cH}$ 
and $\exthd{\cH|_{\cX \setminus \{x\}}} = \exthd{\cH}$.
\end{observation}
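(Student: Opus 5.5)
The plan is to prove the three equalities in order, since the third builds on the second. Fix a constant element $x \in \cX$ with common value $c \in \{0,1\}$, and write $\cX' = \cX \setminus \{x\}$. In each case the inequality ``$\geq$'' (the full class dominates the restriction) follows from \Cref{obs:restriction-doesnot-decrease}, applied to $\cH$ or to $\overline{\cH}$. For the latter we also use that restriction commutes with closure: $\overline{\cH}|_{\cX'} = \overline{\cH|_{\cX'}}$, because conjunction is taken pointwise. So the work is in the reverse direction.

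\textbf{First equality.} Take a witness $x_1, \dots, x_d, t_0, \dots, t_d$ for $\thd{\cH} \geq d$ and suppose $x = x_j$ for some $j$. Then $t_0(x_j) = 0$ and $t_d(x_j) = 1$ (there are $d+1 \geq 2$ hypotheses because $j \in [d]$ forces $d \geq 1$). This contradicts constancy. Hence the witness never uses $x$, and restricting each $t_i$ to $\cX'$ gives a witness for $\thd{\cH|_{\cX'}} \geq d$.

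\textbf{Second equality.} $x$ is still constant in $\overline{\cH}$, since a conjunction of hypotheses that all take value $c$ at $x$ also takes value $c$ there. Applying the first equality to $\overline{\cH}$ and using the commutation of restriction with closure gives $\thdclosure{\cH} = \thd{\overline{\cH}|_{\cX'}} = \thdclosure{\cH|_{\cX'}}$.

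\textbf{Third equality.} For any shift $f: \cX \to \{0,1\}$, $x$ is constant in $\cH^f$ with value $c \oplus f(x)$. Also $(\cH^f)|_{\cX'} = (\cH|_{\cX'})^{f|_{\cX'}}$. By the second equality, $\thdclosure{\cH^f} = \thdclosure{(\cH|_{\cX'})^{f|_{\cX'}}}$. As $f$ ranges over $\{0,1\}^{\cX}$, $f|_{\cX'}$ ranges over all of $\{0,1\}^{\cX'}$, each restricted shift arising from two choices of $f(x)$. So the two minima defining $\exthd{\cH}$ and $\exthd{\cH|_{\cX'}}$ are taken over the same multiset of values and coincide.

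The only step that needs care is the bookkeeping that restriction commutes with both closure and shifting. Each is a one-line pointwise check; the rest is routine.
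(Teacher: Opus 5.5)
Your proof is correct and takes essentially the same route as the paper. A threshold witness cannot use a constant coordinate, constancy of $x$ survives shifting and taking closures, and the minimum over shifts is therefore unchanged; you additionally make explicit that restriction commutes with closure and with shifting, and that $f \mapsto f|_{\cX \setminus \{x\}}$ is surjective, which the paper's terser argument leaves implicit.
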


\begin{proof}
To see that $\thd{\cH|_{\cX \setminus \{x\}}} = \thd{\cH}$, observe that the coordinates $x_1, \dots, x_d$ that witness $\thd{\cH} = d$ cannot be constant coordinates. Thus, removing this constant coordinate does not affect the threshold dimension. 

As for the remaining two equalities, since $x$ is constant in $\cH$, $x$ is also constant in $\cH^f$ for any $f: \cX \to \{0, 1\}$. In turn, this implies that $x$ is constant in $\overline{\cH^f}$. Thus, we have $\thdclosure{\cH^f} = \thdclosure{\cH^f|_{\cX \setminus \{x\}}}$. This indeed implies that $\exthd{\cH|_{\cX \setminus \{x\}}} = \exthd{\cH}$.
\end{proof}

Another equality case is when two elements are ``twins'':

\begin{observation} \label{obs:twin-removal}
For any $\cH \subseteq \{0, 1\}^{\cX}$, two distinct elements $x, x' \in \cX$ are \emph{twins} if $h(x) = h(x')$ for all $h \in \cH$. For any twins $x, x'$, $\thd{\cH|_{\cX \setminus \{x\}}} = \thd{\cH}$, and $\thdclosure{\cH|_{\cX \setminus \{x\}}} = \thdclosure{\cH}$.
\end{observation}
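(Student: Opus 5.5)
One inequality in each equality comes for free from \Cref{obs:restriction-doesnot-decrease}. For the closure, we also use that restriction commutes with closure, i.e. $\overline{\cH|_{\cX'}} = \overline{\cH}|_{\cX'}$, since conjunction is taken pointwise. This gives $\thd{\cH|_{\cX \setminus \{x\}}} \le \thd{\cH}$ and $\thdclosure{\cH|_{\cX \setminus \{x\}}} \le \thdclosure{\cH}$. It remains to show the reverse inequalities, and for these it suffices to show that a witness for $\cH$ can be chosen so that it avoids $x$.

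For the Threshold dimension, take a witness $x_1, \dots, x_d, t_0, \dots, t_d$ of $\thd{\cH} = d$. The first step is to note that the points $x_1,\dots,x_d$ have pairwise distinct columns, because $t_i(x_j)=1$ iff $j \le i$. In particular, if $x_j = x$ for some $j$, then $x'$ is not among the $x_i$'s. Otherwise $x' = x_k$ for some $k \ne j$, and the twin property would force $t_i(x_j) = t_i(x_k)$ for all $i$, contradicting distinctness of columns. So if $x$ appears in the witness, we replace it by $x'$. Since $h(x)=h(x')$ for every $h\in\cH$, every value $t_i(x_j)$ is unchanged, and the modified tuple is still a witness. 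It now lies in $\cX\setminus\{x\}$, and restricting the $t_i$ to this set gives $\thd{\cH|_{\cX\setminus\{x\}}}\ge d$.

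For the closure, I see two routes. The first observes that $x,x'$ remain twins in $\overline{\cH}$, because conjunctions preserve pointwise equality of the two columns. Then the previous paragraph applied to $\overline{\cH}$, together with $\overline{\cH}|_{\cX\setminus\{x\}} = \overline{\cH|_{\cX\setminus\{x\}}}$, finishes the argument. The second uses \Cref{lem:closure_Threshold} and performs the same swap on the points $x_1,\dots,x_d$ while keeping the generators $g_1,\dots,g_{d+1}$. The conditions in the lemma are again preserved because $g(x)=g(x')$ for every generator.

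The only step requiring care is the claim that $x$ and $x'$ cannot both appear among the $x_j$. It is handled by the distinct-columns argument above. Everything else is routine verification.
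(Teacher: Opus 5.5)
Your proof is correct and follows essentially the same approach as the paper: $x$ and $x'$ cannot both appear in a witness, any occurrence of $x$ can be swapped for $x'$, and the closure case follows because twins remain twins in $\overline{\cH}$. You also spell out two points the paper leaves implicit, namely the easy direction via \Cref{obs:restriction-doesnot-decrease} and the fact that restriction commutes with closure.
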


\begin{proof}
The first claim simply follows from the fact that $x, x'$ cannot be used together in a witness for Threshold dimension of $\cH$ and, if any witness uses $x$, it can be replaced by $x'$. The second claim then follows from the fact that $x, x'$ remain twins in the closure $\overline{\cH}$.
\end{proof}

%Finally, we also remark that the deficit in the dimension can also be bounded, e.g. by the number of elements removed.

%\begin{observation} \label{obs:nonconstant-removal}
%For any $\cH \subseteq \{0, 1\}^{\cX}$ and any $\cX' \subseteq \cX$, we have  $\thd{\cH|_{\cX'}} \geq \thd{\cH} - |\cX \setminus \cX'|$, %$\exthd{\cH|_{\cX \setminus \{x\}}} = \exthd{\cH}$ 
%and $\exthd{\cH|_{\cX'}} \geq \exthd{\cH} - |\cX \setminus \cX'|$.
%\end{observation}

%\begin{proof}
%Suppose $\thd{\cH} = d$, and let $t_0, \dots, t_d \in \cH$ and $x_1, \dots, x_d \in \cX$ be the witness. Let $d' = d - |\cX \setminus \cX'|$, and $i_1 < \dots < i_{d'}$ be indices such that $x_{i_1}, \dots, x_{i_{d'}} \notin \cX'$. It is simple to see that $t_0, t_{i_1}, \dots, t_{i_{d'}}$ and $x_{i_1}, \dots, x_{i_{d'}}$ is a witness that $\thd{\cH|_{\cX'}} \geq d'$ as desired.

%The bound on Extended threshold dimension follows immediately by applying the previous bound on $\cH^f$ for all $f: \cX \to \{0, 1\}$.
%\end{proof}

\section{NP-Hardness Results}

In this section, we will prove our NP-hardness results.

\subsection{Ladder Index and Semi-Ladder Index}

We begin with Ladder and Semi-Ladder Indices. For convenience, let us define the following (promise) problem. Note that this promise problem has the Ladder index in the YES case but the Semi-ladder index in the NO case.

\begin{center}
\fbox{
\begin{minipage}{0.95\textwidth}
\textbf{Problem:} \gapsliprob{k_1}{k_2} \\
\textbf{Input:} A bipartite graph $G = (L, R, E)$ and positive integers $k_1 \geq k_2$. \\
\textbf{YES Case:} $\li{G} \geq k_1$. \\
\textbf{NO Case:} $\sli{G} < k_2$. 
\end{minipage}
}
\end{center}

In this section, we will only use the exact (i.e. ``non-gap'') version of the problem where $k_1 = k_2 = \ell$. However, in the next section, we will also use the gap version as well.
Our main result here is that the exact version of this problem is NP-hard:
\begin{theorem} \label{thm:gapsli-np-hard}
\gapsliprob{\ell}{\ell} is NP-hard.
\end{theorem}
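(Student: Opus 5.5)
The plan is a polynomial-time reduction from \textsc{Balanced Biclique}: given a bipartite graph $H=(U,V,E_H)$ and an integer $k$, decide whether $H$ contains a $K_{k,k}$. This problem is NP-hard (Johnson). The guiding observation is the following. In any semi-ladder $a_1,\dots,a_{2k},b_1,\dots,b_{2k}$, the vertices $a_{k+1},\dots,a_{2k}$ and $b_1,\dots,b_k$ form a complete $K_{k,k}$, because every pair $(a_{k+i},b_j)$ has $k+i>j$. In a ladder of order $2k$, the off-diagonal blocks are rigid: the "upper-left" block is complete and the "lower-right" block is empty. The two diagonal blocks are ladders of order $k$. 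So I will build $G$ in which $H$ supplies the complete block and fixed gadgets supply the two diagonal ladders.

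\textbf{Construction.} Take fresh vertices $P=\{p_1,\dots,p_k\}$ on the left and $Q=\{q_1,\dots,q_k\}$ on the right, with no edges between $P$ and $Q$. Replace each $u\in U$ by $k$ copies $u^{(1)},\dots,u^{(k)}$ on the left and each $v\in V$ by copies $v^{(1)},\dots,v^{(k)}$ on the right. The edges of $G$ are:
\begin{itemize}
\item $u^{(i)}v^{(j)}$ exactly when $uv\in E_H$;
\item $u^{(i)}q_j$ exactly when $i>j$;
\item $p_iv^{(j)}$ exactly when $i>j$.
\end{itemize}
Set $\ell=2k$, or $\ell=2k+c$ if soundness needs slack; the chosen $\ell$ is then used in both promise cases of \gapsliprob{\ell}{\ell}.

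\textbf{Completeness.} Suppose $H$ contains a biclique $\{u_1,\dots,u_k\}\times\{v_1,\dots,v_k\}$. Take the left sequence $p_1,\dots,p_k,u_1^{(1)},\dots,u_k^{(k)}$ and the right sequence $v_1^{(1)},\dots,v_k^{(k)},q_1,\dots,q_k$. Checking the four blocks directly from the definition shows that these $4k$ vertices form a ladder of order $2k$:
\begin{itemize}
\item the $P$ versus $V$-copies block is a ladder;
\item the $U$-copies versus $V$-copies block is complete, since it is the biclique;
\item the $P$ versus $Q$ block is empty;
\item the $U$-copies versus $Q$ block is a ladder.
\end{itemize}
Hence $\li{G}\ge 2k$.

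\textbf{Soundness, and the expected obstacle.} This is the main step and the one I expect to be delicate. Suppose $\sli{G}\ge \ell$, and extract the complete block $\{a_{\lceil \ell/2\rceil+1},\dots\}\times\{b_1,\dots,b_{\lceil\ell/2\rceil}\}$. I then need to show that this block, after discarding few vertices, lies inside $U$-copies $\times$ $V$-copies and uses distinct original vertices, so that it projects to a $K_{k,k}$ in $H$.

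Gadget vertices are easy to handle:
\begin{itemize}
\item A vertex $p_i$ is adjacent to no $Q$ vertex, and its $V$-neighbours are only those with copy index below $i$. So $P$ vertices and $Q$ vertices cannot appear together in a complete block.
\item The index structure limits how many gadget vertices a semi-ladder can contain. A chain of vertices $p$ that are alternately forced adjacent and nonadjacent needs strictly increasing indices, which bounds their number by $k$.
\end{itemize}

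The genuinely tricky part is repeated copies $u^{(i)},u^{(i')}$ of the same vertex. Such copies differ only on $Q$, so a semi-ladder that separates them must use $Q$ vertices, and I expect the number of such repetitions to be charged against the at most $k$ vertices of $Q$ and of $P$.

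First I would try to prove the clean statement that every semi-ladder of order $\ell$ in $G$ yields a $K_{k,k}$ in $H$ with $\ell=2k$. If the counting loses additive terms, I would instead pad the gadgets, for example by taking $P,Q$ of size $k$ but requiring $\ell=2k$ while reducing from biclique instances with a slightly smaller target. A further option is to reduce from a structured NP-hard biclique variant, in which any biclique of size $k-c$ extends to one of size $k$, so that the additive loss is absorbed.
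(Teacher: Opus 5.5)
\textbf{There is a fatal gap in soundness, and it is not an additive-bookkeeping issue.} Your complete block only requires $uv\in E_H$ for the \emph{original} vertices. Nothing forces its $k$ left (or right) vertices to come from distinct originals, and copies of a single vertex are separated exactly by your gadgets.

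Concretely, take $k\ge 2$ and let $H$ be a single edge $uv$. Then $p_1,\dots,p_k,u^{(1)},\dots,u^{(k)}$ and $v^{(1)},\dots,v^{(k)},q_1,\dots,q_k$ form a ladder of order $2k$ in $G$. This is just your completeness pattern with all $u_i=u$ and all $v_j=v$. So $\li{G}\ge 2k$ whenever $E_H\neq\emptyset$, although $H$ has no $K_{k,k}$.

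Your planned charging of repeated copies against $Q$ yields nothing, because $|Q|=k$ is precisely enough to separate $k$ copies. The fallbacks fail too. Let $x_0,\dots,x_s,y_0,\dots,y_s$ be a semi-ladder of $H$ of order $s+1\ge 2$, and set $u=x_s$, $v=y_0$. Splice copies of $x_1,\dots,x_{s-1}$ between $p_k$ and $u^{(1)}$, and copies of $y_1,\dots,y_{s-1}$ between $v^{(k)}$ and $q_1$. This gives a semi-ladder of order $2k+s-1$ in $G$. Hence slack $\ell=2k+c$ only asks $H$ for a constant-order semi-ladder; a half-graph of order $c+2$ has one and contains no $K_{k,k}$ once $2k>c+2$. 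And since a single edge already suffices, an ``extendable biclique'' source problem does not help either.

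\textbf{The missing idea} is to let the ordering of the solution itself produce the staircase, so that copies are ordered rather than interchangeable.

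The paper reduces from \exactbicliqueprob, whose YES case is a $k$-clique in $G$, a symmetric object. It takes $t=2k^2$ copies of $V$ on each side and joins $a^{(i)}_j$ to $b^{(i')}_{j'}$ iff $(i,j)$ is lexicographically larger than $(i',j')$ and ($j=j'$ or $v_jv_{j'}\in E$). The sorted clique, repeated over all copies, is then a ladder of order $\ell=kt$.

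For soundness, the top half of a semi-ladder is a biclique $A'\times B'$ with $|A'|=|B'|=\ell/2$. Its projection $A\times B$ is a biclique of $B[G^o]$. Every edge goes from a left copy index to a right copy index that is no larger, so every copy index in $A'$ is at least every copy index in $B'$. AM--GM then gives $(\ell/2)^2\le((t+1)/2)^2|A||B|$, hence $|A||B|>k^2-1$.

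So the factor $2$ lost by halving the semi-ladder is absorbed by repetition, which inflates sizes only by a controlled factor because the copies are ordered. This is also why the NO case must bound the edge count of every, possibly unbalanced, biclique rather than merely exclude $K_{k,k}$.
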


Note that \Cref{thm:gapsli-np-hard} immediately implies \Cref{thm:indices-np-hard-intro} since $\sli{G} \geq \li{G}$.

To prove \Cref{thm:gapsli-np-hard}, we will reduce from yet another promise problem. To state this problem precisely, we need an additional notation: For every graph $G = (V, E)$, we let $B[G^o]$ denote the bipartite graph $H = (L, R, E_H)$ where
\begin{itemize}
\item $L, R$ are copies of $V$. Namely, let $L = V \times \{1\}$ and $R = V \times \{2\}$. %For each $v_j \in V$, we let $a_j$ and $b_j$ denote the copies of $v_j$ in $L$ and $R$, respectively.
\item There exists an edge $((u, 1), (v, 2)) \in E_H$ if and only if\footnote{This is the same as the \emph{bipartite double cover} of $G$, except that $B[G^o]$ contains edges $((v, 1), (v,2))$ for all $v \in V$.} $u = v$ or $(u, v) \in E$.
\end{itemize}

The promise problem we reduce from is the following, which is essentially the same as the standard \emph{Maximum Edge Biclique} problem on $B[G^o]$, except that the YES case is stronger.

\begin{center}
\fbox{
\begin{minipage}{0.95\textwidth}
\textbf{Problem:} \exactbicliqueprob \\
\textbf{Input:} A graph $G = (V, E)$ and a positive integer $k$. \\
\textbf{YES Case:} $G$ contains a $k$-clique. \\
\textbf{NO Case:} Any biclique in $B[G^o]$ has at most $k^2 - 1$ edges.
\end{minipage}
}
\end{center}

While the NP-hardness for this version of the problem does not follow from the classical reduction for the standard Maximum Edge Biclique problem (e.g. from \cite{Peeters03}), it is simple to see that it follows from the reduction for Densest $k$-Subgraph in \cite{Manurangsi17} (with slightly different parameters). This gives the following:  

\begin{lemma} \label{lem:np-exactbiclique}
\exactbicliqueprob is NP-hard.
\end{lemma}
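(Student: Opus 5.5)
We prove \Cref{lem:np-exactbiclique} by re-running the standard NP-hardness reduction for Densest $k$-Subgraph from \cite{Manurangsi17}, or rather its simpler ancestor: a reduction from \textsc{Clique} to a dense host graph. The plan is to start from a hard instance of \textsc{Clique} $(G_0, k_0)$ and produce $(G, k)$ with two properties. First, a $k_0$-clique in $G_0$ yields a $k$-clique in $G$. Second, if $G_0$ has no $k_0$-clique, then every biclique $S \times T$ in $B[G^o]$ has $|S|\cdot|T| \le k^2 - 1$.

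The key observation is that a biclique $S \times T$ in $B[G^o]$ corresponds to two vertex sets $S, T \subseteq V$ such that every $u \in S$ and every $v \in T$ satisfy either $u = v$ or $uv \in E$. In particular, $S \cap T$ is a clique in $G$ and $S \setminus T$ is complete to $T \setminus S$. To prevent large unbalanced bicliques, we first pass to a regularized hard instance. Concretely, we reduce from \textsc{Clique} on graphs where the promise is stronger in the NO case, for instance via a standard blow-up or product construction. We take $G$ such that every vertex has degree at most about $k-1$ beyond its intended clique structure. A cleaner option is to use a $k$-partite \textsc{Multicolored Clique} instance: $V = V_1 \cup \dots \cup V_k$ with each $V_i$ independent. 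Then any biclique $S \times T$ has $|S \cap V_i|$ and $|T \cap V_i|$ constrained: if $S \cap V_i$ contains two distinct vertices $u, u'$, then $T \cap V_i \subseteq \{u\} \cap \{u'\} = \emptyset$.

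Next comes the counting step. Let $s_i = |S \cap V_i|$ and $t_i = |T \cap V_i|$. The constraint above says that $s_i t_i \le 1$, and $s_i = t_i = 1$ forces them to be the same vertex. We then need $\sum s_i \cdot \sum t_i \le k^2 - 1$ unless there is a multicolored clique. Since $s_i$ and $t_i$ are individually unbounded, we must also cap part sizes. Otherwise $S = V_1$, $T = V_2 \cup \dots \cup V_k$ gives a big biclique whenever the graph between the parts is complete. To handle this, we replicate via the Manurangsi-style construction: vertices of $G$ are, e.g., $(i, \text{vertex})$ pairs, and we add additional structure so that the neighborhood of each vertex within other parts is sparse relative to $k$. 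Alternatively, we reduce from a \textsc{Clique} variant on graphs with bounded-density NO instances, as in \cite{Manurangsi17}, where the NO case gives that every subgraph on $k$ vertices misses an edge. We then verify via the identity above that a biclique with $k^2$ edges forces $|S| = |T| = k$ with $S = T$ a clique. This requires the diagonal $u = v$ to be the only source of non-$E$ adjacencies, which is exactly built into $B[G^o]$.

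The main obstacle is ruling out highly unbalanced bicliques, for example $|S| = 1$ and $|T| = k^2$ (a high-degree vertex). The first thing I would try is to enforce a degree bound: make every vertex in $G$ have degree less than $k^2 - 1$ while keeping the problem hard. \textsc{Clique} remains NP-hard on graphs where we can take $k$ large relative to the maximum degree by padding, e.g. adding a large disjoint clique of size $k - k_0$ joined completely to $G_0$ only if that does not create high degree. Given such a degree bound, the rest follows. Any biclique with $|S|, |T| \ge 2$ and $|S|\cdot|T| \ge k^2$ can be AM-GM–balanced: the common neighborhood structure forces $S \cup T$ to span a clique-like set of size at least $2\sqrt{|S||T|} - |S \cap T|$. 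A short case analysis then shows $S = T$ is a $k$-clique.
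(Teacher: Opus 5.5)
Your proposal has a genuine gap. It never pins down a reduction whose NO case controls \emph{every} biclique of $B[G^o]$, and neither of your concrete fixes (a degree bound, or padding by a joined clique) achieves this.

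The difficulty is the one you note and then drop. A biclique $S \times T$ in $B[G^o]$ only requires $S \cap T$ to be a clique and each $u \in S$ to be adjacent to each $v \in T \setminus \{u\}$. The sets $S \setminus T$ and $T \setminus S$ may be independent. So a biclique with $k^2$ edges is essentially a $K_{a,b}$ with $ab \ge k^2$ in $G$, which says nothing about cliques.

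\begin{itemize}
\item \emph{Degree bound.} $G = K_{k,k}$ has maximum degree $k < k^2 - 1$ and no triangle, yet $B[G^o]$ contains a biclique with $k^2$ edges. This refutes the claim that a degree bound, plus AM--GM ``balancing'' and a ``short case analysis'', forces $S = T$ to be a $k$-clique.
\item \emph{Padding.} Join a clique $K$ of size $k - k_0$ to all of $G_0$, with $n_0 = |V(G_0)|$. Then $S = K$, $T = K \cup V(G_0)$ is a biclique with $(k-k_0)(k-k_0+n_0) = k^2 + k(n_0 - 2k_0) - k_0(n_0 - k_0)$ edges. This exceeds $k^2$ whenever $n_0 > 2k_0$ and $k$ is large, which is precisely the regime you want.
\item \emph{Multicolored variant.} Your observation $s_i t_i \le 1$ is correct, but it says nothing when $S$ and $T$ occupy disjoint sets of parts. ``Adding structure so that neighborhoods are sparse'' is not a construction.
\end{itemize}

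The missing idea is a construction whose vertices carry labels that bound both sides of a biclique \emph{jointly}. The paper reduces directly from 3-SAT. Vertices are partial assignments to two variables, edges join consistent pairs that satisfy every clause lying within their (at most four) variables, and $k = \binom{n}{2}$. For a biclique $U \times W$, let $S_U, S_W$ be the sets of literal assignments used on each side.
\begin{itemize}
\item Cross-consistency gives $|S_U| + |S_W| \le 2n$.
\item Also $|U| \le \binom{|S_U|}{2}$ and $|W| \le \binom{|S_W|}{2}$.
\item Since $\binom{x}{2}\binom{y}{2}$ subject to $x + y \le 2n$ is uniquely maximized at $x = y = n$, the bound $|U|\,|W| \ge k^2$ forces $S_U = S_W$ to be a full assignment $\phi^*$, with $U = W = \binom{S_U}{2}$.
\item For each clause on $x_p, x_q, x_r$, the distinct vertices $\{(x_p,\phi^*(x_p)),(x_q,\phi^*(x_q))\} \in U$ and $\{(x_p,\phi^*(x_p)),(x_r,\phi^*(x_r))\} \in W$ must be adjacent, so $\phi^*$ satisfies the clause.
\end{itemize}
Your plan needs an argument of this kind, one that handles unbalanced bicliques through shared label structure rather than through degrees.
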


For completeness, we give the full proof of \Cref{lem:np-exactbiclique} in Appendix~\ref{app:biclique}. We are now ready to prove \Cref{thm:gapsli-np-hard}.

\begin{proof}[Proof of \Cref{thm:gapsli-np-hard}]
%The containment in NP follows immediately from the definition of semi-ladder index, since $a_1, \dots, a_\ell, b_1, \dots, b_\ell$ serve as the witness that can be checked in polynomial time.
%
%To prove NP-hardness of \exactsemiladderprob, 
We will reduce from \exactbicliqueprob. Given an instance $(G = (V, E), k)$ of \exactbicliqueprob where $V = \{v_1, \dots, v_n\}$, we create an instance $(G' = (L', R', E'), \ell)$ for \gapsliprob{\ell}{\ell} as follows.
\begin{itemize}
\item Let $t = 2k^2$ and $\ell = kt$.
\item Let each of $L'$ and $R'$ contain $t$ copies of $V$. We use $a^{(i)}_j$ (resp. $b^{(i)}_j$) to denote the $i$-th copy of $v_j \in V$ in $L$ (resp. $R$). Let $L^{(i)}$ (resp. $R^{(i)}$) denote the set $\{a^{(i)}_j \mid j \in [n]\}$ (resp. $\{b^{(i)}_j \mid j \in [n]\}$).
\item There is an edge between $(a^{(i)}_j, b^{(i')}_{j'})$ iff both of the following conditions hold:
\begin{itemize}
    \item $(i, j)$ is lexicographically (strictly) larger than $(i', j')$, and,
    \item  $j = j'$ or $(v_j, v_{j'}) \in E$.
\end{itemize}
\end{itemize}
The reduction runs in polynomial time. We next prove its completeness and soundness.

\paragraph{(Completeness)} Suppose that $v_{i_1}, \dots, v_{i_k}$ form a clique in $G$ where $i_1 < \cdots < i_k$. It is simple to see that $a^{(1)}_{i_1}, a^{(1)}_{i_2}, \dots, a^{(t)}_{i_k}$ and $b^{(1)}_{i_1}, b^{(1)}_{i_2}, \dots, b^{(t)}_{i_k}$ form a ladder of order $\ell = kt$ in $G'$.

\paragraph{(Soundness)} Suppose contrapositively that $G'$ contains a semi-ladder of order $\ell$: $u_1, \dots, u_\ell$, $w_1, \dots, w_\ell$. Let $A' = \{u_{\ell/2 + 1}, \dots, u_{\ell}\}$ and $B' = \{w_1, \dots, w_{\ell/2}\}$. By definition of a semi-ladder, $A', B'$ forms a biclique in $G'$. Then, let $A$ (resp. $B$) denote the set of vertices in $V$ such that at least one of its copies appear in $A'$ (resp. $B'$). By how $G'$ is constructed, we also have that $A \times \{1\}, B \times \{2\}$ form a biclique in $B[G^{o}]$. Now, let $i^A_{\min} := \min\{i \in [t] \mid A' \cap L^{(i)} \ne \emptyset\}$ and $i^B_{\max} := \max\{i \in [t] \mid B' \cap R^{(i)} \ne \emptyset\}$. Since $A', B'$ form a biclique, we must have that $i^A_{\min} \geq i^B_{\max}$. By definition, we also have $A' \subseteq \{a^{(i)}_j \mid v_j \in A, i \geq i^A_{\min}\}$ and $B' \subseteq \{b^{(i)}_j \mid v_j \in B, i \leq i^B_{\max}\}$. Thus,
\begin{align*}
\Paren{\ell/2} \cdot \Paren{\ell/2} = |A'| \cdot |B'| &\leq \Paren{\Paren{t - i^A_{\min} + 1} \cdot |A|} \cdot \Paren{\Paren{i^B_{\max}} \cdot |B|} \\
&\leq \Paren{\frac{t - i^A_{\min} + 1 + i^B_{\max}}{2}}^2 \cdot |A| \cdot |B| \\
&\leq \Paren{\frac{t + 1}{2}}^2 \cdot |A| \cdot |B|,
\end{align*}
where the second inequality uses the A.M.-G.M. inequality and the third follows from $i^A_{\min} \geq i^B_{\max}$.

Rearranging this, we have
\begin{align*}
|A| \cdot |B| \geq \frac{\ell^2}{(t + 1)^2} = \frac{k^2 t^2}{(t + 1)^2} = k^2 - \frac{(2t + 1)k^2}{(t + 1)^2} > k^2 - 1,
\end{align*}
where the last inequality is due to our choice of $t$. This means that $A \times \{1\}, B \times \{2\}$ form a biclique in $G$ with at least $k^2$ edges, as desired.
\end{proof}

\subsection{From (Semi-)Ladder Index to (Closure) Threshold Dimension}

It turns out that (Semi-)Ladder index is closely related to the Threshold dimension (of the closure). Namely, there is a simple reduction that turns a graph into a hypothesis class while turning the index to the dimension. To state this reduction, it is helpful to introduce another promise problem.

\begin{center}
\fbox{
\begin{minipage}{0.95\textwidth}
\textbf{Problem:} \gapclosuretdprob{k_1}{k_2} \\
\textbf{Input:} A finite space $\mathcal{X}$, a hypothesis class $\mathcal{H} \subseteq \{0, 1\}^\mathcal{X}$, and positive integers $k_1 \geq k_2$. \\
\textbf{YES Case:} $\thd{\cH} \geq k_1$ \\
\textbf{NO Case:} $\thdclosure{\cH} < k_2$ 
\end{minipage}
}
\end{center}

\begin{lemma} \label{lem:red-sli-to-closuretd}
%There is a polynomial-time reduction that takes in a bipartite graph $G = (L, R, E)$ and produces a hypothesis class $\cH \subseteq \{0, 1\}^{\cX}$ such that $\thdclosure{\cH} = \sli{G}$, and, 
There is a polynomial-time reduction from \gapsliprob{k_1}{k_2} to \gapclosuretdprob{k_1}{k_2}.
%Furthermore, if $G = (L, R, E)$ is the input instance and $\cH \in \{0, 1\}^{\cX}$  is the output instance, then $|\cH| \leq |L|+1, |\cX| \leq |R|$.
\end{lemma}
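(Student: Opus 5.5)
The reduction reads each left vertex of $G$ as a hypothesis on the right vertices and adds one all-ones hypothesis on top. Given $G = (L, R, E)$ with parameters $k_1 \geq k_2$, I would set $\cX = R$, $h_a = \bone_{N(a)}$ for each $a \in L$ (so $h_a(b) = 1$ iff $(a,b) \in E$), and $\cH = \{h_a \mid a \in L\} \cup \{\bone\}$, keeping $k_1, k_2$ unchanged. This clearly runs in polynomial time, with $|\cH| \leq |L| + 1$ and $|\cX| = |R|$. If distinct vertices give equal hypotheses, $\cH$ simply stores the set; the arguments below do not care.

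\emph{Completeness.} Suppose $a_1, \dots, a_{k_1}, b_1, \dots, b_{k_1}$ is a ladder of order $k_1$. Take $x_j = b_j$, take $t_i = h_{a_{i+1}}$ for $0 \leq i \leq k_1 - 1$, and take $t_{k_1} = \bone$. Then $t_i(x_j) = 1$ iff $(a_{i+1}, b_j) \in E$, which holds iff $i + 1 > j$, i.e.\ iff $j \leq i$. For $t_{k_1} = \bone$ the condition holds trivially. Hence $\thd{\cH} \geq k_1$.

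\emph{Soundness.} I argue contrapositively: assume $\thdclosure{\cH} \geq k_2$ and build a semi-ladder of order $k_2$.
\begin{enumerate}
\item Apply \Cref{lem:closure_Threshold} to get $x_1, \dots, x_{k_2} \in R$ and generators $g_1, \dots, g_{k_2+1} \in \cH$.
\item For $i \leq k_2$ we have $g_i(x_i) = 0$, so $g_i \neq \bone$. Hence $g_i = h_{a_i}$ for some $a_i \in L$.
\item Set $b_i = x_i$. Condition 2 of the lemma gives $(a_i, b_i) \notin E$, and $(a_i, b_j) \in E$ for all $j < i$.
\item The $a_i$ are distinct: for $i < i'$, $a_i$ is not adjacent to $b_i$ while $a_{i'}$ is. The $b_j$ are distinct: for $j < j'$, $a_{j'}$ is adjacent to $b_j$ but not to $b_{j'}$.
\end{enumerate}
This gives a semi-ladder of order $k_2$, so $\sli{G} \geq k_2$, contradicting the NO case.

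The generator $g_{k_2+1}$ is not used at all; it is exactly the role the added $\bone$ plays in completeness. The only points needing care are excluding $\bone$ among the first $k_2$ generators (step 2) and checking distinctness (step 4); neither is a real obstacle.
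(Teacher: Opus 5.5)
Your proposal is correct and follows the paper's proof essentially step for step: the same construction ($\cX = R$, one hypothesis $h_a$ per left vertex plus $\bone$), the same ladder-to-threshold witness for completeness, and the same soundness argument via \Cref{lem:closure_Threshold}, where $g_i(x_i) = 0$ rules out $\bone$ among the first generators. Your explicit indexing ($t_i = h_{a_{i+1}}$, $t_{k_1} = \bone$) and the distinctness check in step 4 spell out details the paper leaves implicit.
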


\begin{proof}
Let $\cX = R$, and let $\cH$ contain the following hypotheses:
\begin{itemize}
\item For every $u \in L$, the hypothesis $h_u$ where $h_u(v) = \bone[(u, v) \in E]$.
\item The all-one hypothesis $\bone$.
\end{itemize}
We next prove the completeness and soundness of the reduction.

\paragraph{(Completeness)} Suppose that $\li{G} \geq k_1$. That is, there exist $a_1, \dots, a_{k_1} \in L, b_1, \dots, b_{k_1} \in R$ that form a ladder in $G$. Consider the elements $b_1, \dots, b_{k_1} \in \cX$ and hypotheses $\bone, h_{a_1}, \dots, h_{a_{k_1}}$. It is simple to check that these form a witness that $\thd{\cH} \geq {k_1}$.

\paragraph{(Soundness)} Suppose that $\thdclosure{\cH} = d$. By \Cref{lem:closure_Threshold}, there exist $x_1, \dots, x_d \in \cX$ and $g_1, \dots, g_{d + 1} \in \cH$ such that $g_{d+1}(x_j) = 1$ for all $j \in [d]$, and $g_i(x_i) = 0$ and $g_i(x_j) = 1$ for all $i, j \in [d]$ with $j < i$. Since $g_i(x_i) = 0$ for all $i \in [d]$, we have $g_1, \dots, g_d \ne \bone$. Thus, for every $i \in [d]$, $g_i = h_{u_i}$ for some $u_i \in L$. This implies that $u_1, \dots, u_d, x_1, \dots, x_d$ form a semi-ladder of order $d$.
\end{proof}

We remark that, if we do not add the all-one hypothesis to $\cH$ in the above reduction, then we will instead have the inequalities $\li{G} - 1 \leq \thd{\cH} \leq \li{G}$ and $\sli{G} - 1 \leq \thdclosure{\cH} \leq \sli{G}$. While this is sufficient for hardness of approximation reductions (where we start from the gap version of \gapsliprob{k_1}{k_2}), it is insufficient for exact reductions.

Since $\thd{\cH} \leq \thdclosure{\cH}$, combining the above reduction (\Cref{lem:red-sli-to-closuretd}) with \Cref{thm:gapsli-np-hard} immediately yield the NP-hardness of Threshold dimension (\Cref{thm:td-np-hard}).

\subsection{From Closure Threshold Dimension to Extended Threshold Dimension}

We have also almost established the NP-hardness of Extended threshold dimension (\Cref{thm:etd-np-hard}), except for one crucial detail: $\exthd{\cH}$ can be smaller than $\thdclosure{\cH}$. We handle this in the next lemma, which shows that we can, in fact, use a simple reduction to ensure that $\exthd{\cH}$ is the same as $\thdclosure{\cH}$. To state the lemma, it is helpful to define yet another promise problem.

\begin{center}
\fbox{
\begin{minipage}{0.95\textwidth}
\textbf{Problem:} \gapetdprob{k_1}{k_2} \\
\textbf{Input:} A finite space $\mathcal{X}$, a hypothesis class $\mathcal{H} \subseteq \{0, 1\}^\mathcal{X}$, and positive integers $k_1 \geq k_2$. \\
\textbf{YES Case:} $\exthd{\cH} \geq k_1$ \\
\textbf{NO Case:} $\exthd{\cH} < k_2$ 
\end{minipage}
}
\end{center}

\begin{lemma} \label{lem:tdclosure-to-etd}
There is a polynomial-time reduction from \gapclosuretdprob{k_1}{k_2} to \gapetdprob{k_1}{k_2}.
\end{lemma}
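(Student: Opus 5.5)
The plan is to build from $(\cX, \cH, k_1, k_2)$ a new class $\cH'$ on a domain $\cX'$ with two properties. First, for the trivial shift $f \equiv 0$ the closure of $\cH'$ behaves exactly like the closure of $\cH$, so that $\thdclosure{\cH'} = \thdclosure{\cH}$; this handles the NO case. Second, for \emph{every} shift $f$, a threshold witness of $\cH$ can be transported into $\overline{\cH'^f}$, so that $\exthd{\cH'} \geq \thd{\cH}$; this handles the YES case. The NO direction is then immediate, because $\exthd{\cH'} \leq \thdclosure{\cH'^{\bzero}} = \thdclosure{\cH'}$. All of the work lies in making the class robust to shifts.

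For the construction I would first try doubling the domain with complements: $\cX' = \cX \times \{0,1\}$, and for each $h \in \cH$ a hypothesis $h'$ with $h'(x,0) = h(x)$ and $h'(x,1) = 1 - h(x)$. I would also add the all-one hypothesis $\bone$ (as in \Cref{lem:red-sli-to-closuretd}) so that the top generator $g_{d+1}$ of \Cref{lem:closure_Threshold} is always available.

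For the NO case, I would argue via \Cref{lem:closure_Threshold}. A threshold in $\overline{\cH'}$ uses points $(x_j, c_j)$. Points with $c_j = 0$ behave as in $\cH$, and points with $c_j = 1$ behave as in the complemented class. The goal is to show that a long generator sequence in $\cH'$ yields a semi-ladder-type structure in $\cH$ of comparable length. This step already requires care: mixing the two copies could create thresholds that are longer than anything in $\overline{\cH}$.

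For the YES case, take a witness $x_1,\dots,x_d$, $t_0,\dots,t_d$ of $\thd{\cH} \ge d$. For each $j$, the two copies of $x_j$ carry the values $t_i(x_j) \oplus f(x_j,0)$ and $\neg t_i(x_j) \oplus f(x_j,1)$. Whenever $f(x_j,0)=0$ or $f(x_j,1)=1$, some copy reproduces $t_i(x_j)$ exactly, and the threshold pattern is preserved on that coordinate. The bad coordinates are those with $f(x_j,0)=1$ and $f(x_j,1)=0$, where both copies see $\neg t_i(x_j)$. Here I would use the fact that complementing a threshold reverses it: $\neg t_i(x_j) = 1$ iff $j > i$. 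The plan is then to show that the closure (intersections of shifted hypotheses, together with the shifted $\bone$) can realize the needed pattern on these coordinates too. Concretely, I would verify via \Cref{lem:closure_Threshold} that one can choose generators $g_i$ from $\{t'_{i-1}\}$ and possibly the shifted all-one or complement hypotheses so that $g_i(x_i)=0$ and $g_i(x_j)=1$ for $j<i$.

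I expect this mixed case to be the main obstacle, where good and bad coordinates interleave. If the doubling gadget fails there, I would strengthen it with more copies per coordinate, or add singleton/co-singleton hypotheses whose shifts supply the missing patterns. Afterwards I would use \Cref{obs:constant-removal} and \Cref{obs:twin-removal} to check that the added hypotheses do not increase $\thdclosure{\cH'}$ beyond $\thdclosure{\cH}$.
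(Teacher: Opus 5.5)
Your NO direction uses the right outer frame: $\exthd{\cH'} \le \thdclosure{\cH'}$, so it suffices that $\thdclosure{\cH'} \le \thdclosure{\cH}$. However, the doubling gadget breaks both directions, and your fallback lacks the idea that makes the reduction work.

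\textbf{YES direction.} The gadget gives no protection against shifts. Choosing $f(x,0) = g(x)$ and $f(x,1) = 1 - g(x)$ for an arbitrary $g$ makes both copies of every coordinate carry $h \oplus g$. So $(\cH')^f$ is just a duplicated $\cH^g$ plus the single hypothesis $\bone \oplus f$. Your good/bad analysis already exposes the problem: the witness survives on good coordinates and is reversed on bad ones, but one generator sequence cannot be increasing on one set and decreasing on the other.

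A concrete failure: take the chain $\cH = \{\bone_{\{x_1,\dots,x_i\}} : 0 \le i \le 2m\}$, so $\thd{\cH} = 2m$, and $g = \bone_{\{x_1,\dots,x_m\}}$. Then $(\cH')^f$ consists of:
\begin{itemize}
\item duplicated suffixes of $\{x_1,\dots,x_m\}$;
\item duplicated prefixes of $\{x_{m+1},\dots,x_{2m}\}$;
\item the shifted all-one hypothesis, which is $1$ exactly on $\{x_{m+1},\dots,x_{2m}\}\times\{0\}$ and $\{x_1,\dots,x_m\}\times\{1\}$.
\end{itemize}
A short case analysis via \Cref{lem:closure_Threshold} gives $\thdclosure{(\cH')^f} \le m+1 < 2m$ for $m \ge 2$. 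So completeness fails.

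\textbf{NO direction.} The copy $\cX\times\{1\}$ carries the complemented class, whose closure consists of complements of unions and is unrelated to $\overline{\cH}$. For $\cH$ the singletons, $\thdclosure{\cH} = 1$. Yet $\cH'|_{\cX\times\{1\}}$ contains all co-singletons, so $\thdclosure{\cH'} \ge |\cX| - 1$. Adding $\bone$ is also unsafe for an exact reduction, since it can raise $\thdclosure{\cdot}$ by one.

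\textbf{The missing idea.} The gadget should make every shift either costly or harmless, rather than try to undo it. The paper uses $t = |\cX|$ disjoint blocks $\cX^{(1)},\dots,\cX^{(t)}$. It places the $i$-th copy $h^{(i)}$ of each $h \in \cH$ on block $i$, zero elsewhere, and adds $\bzero$ and all singletons. For any shift $f$ there are two cases:
\begin{itemize}
\item If $\|f\| \ge k_1$, the shifted singletons restricted to $\supp(f)$ are co-singletons, and $f = f \oplus \bzero$ serves as the top generator. Hence $\thdclosure{(\cH')^f} \ge \|f\|$.
\item Otherwise, since $t \ge \thd{\cH} \ge k_1$, some block is disjoint from $\supp(f)$ and carries an unshifted copy of $\cH$.
\end{itemize}

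For soundness, no hypothesis is $1$ on two blocks, so every closure threshold of length $d \ge 2$ lies in a single block. After removing constant and twin coordinates (Observations~\ref{obs:constant-removal} and~\ref{obs:twin-removal}), the generators $g_1, g_2$, which might be singletons or $\bzero$, can be replaced by block copies of genuine hypotheses. This gives $\thdclosure{\cH'} \le \thdclosure{\cH}$.

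Adding more polarity copies per coordinate cannot substitute for this, because a shift can realign all of them to $h \oplus g$. What is needed is block-supported copies together with singletons, which together yield the support-size dichotomy.
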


Before we prove this lemma, let us note that, by applying \Cref{lem:red-sli-to-closuretd} and \Cref{lem:tdclosure-to-etd} to \Cref{thm:gapsli-np-hard}, we have established \Cref{thm:etd-np-hard}.

\begin{proof}[Proof of \Cref{lem:tdclosure-to-etd}]
Assume w.l.o.g. that $\thd{\cH} \geq 2$ and that $\cX$ does not have constant or twin coordinates (w.r.t. $\cH$); otherwise, from Observations~\ref{obs:constant-removal} and~\ref{obs:twin-removal}, we may simply remove them. Let $t = |\cX|$.
%First, let $\cX_{nc} \subseteq \cX$ denote the set of non-constant coordinates with respect to $\cH$. Let $\cH_{nc} = \cH|_{\cX_{nc}}$. By \Cref{obs:constant-removal}, we have $\exthd{\cH_{nc}} = \exthd{\cH}$. 
We construct $\cX', \cH'$ as follows.
\begin{itemize}
\item Let $\cX' = \cX^{(1)} \cup \cdots \cup \cX^{(t)}$ where each $\cX^{(i)}$ denotes a copy of $\cX$. We use $x^{(i)}$ to denote the copy of $x \in \cX$ in $\cX^{(i)}$.
\item The class $\cH'$ consists of the following hypotheses:
\begin{itemize}
\item The all zero hypothesis $\bzero$.
\item The singleton hypotheses $\bone_{x'}$ for all $x' \in \cX'$.
\item For each $h \in \cH$ and $i \in [t]$, we create a hypothesis $h^{(i)}$ where
\begin{align*}
h^{(i)}(x^{(j)}) =
\begin{cases}
h(x) &\text{ if } j = i, \\
0 &\text{ otherwise.}
\end{cases}
\end{align*}
We refer to $h^{(i)}$ as the $i$-th copy of $h \in \cH$.
\end{itemize}
\end{itemize}
It is clear that the reduction runs in $\poly(|\cH|, |\cX|, t)$ time, and that $|\cX'| \leq t \cdot |\cX|$ and $|\cH'| \leq t \cdot (|\cH| + |\cX|) + 1$. We will next prove the completeness and soundness of the reduction.

\paragraph{(Completeness)} 
We will prove that, if $\thd{\cH} \geq k_1$, then $\exthd{\cH'} \geq k_1$.
To see that this is the case, consider any $f: {\cX'} \to \{0, 1\}$. We consider two cases:
\begin{itemize}
\item Case I: $\|f\| \geq k_1$. In this case, $f \oplus \bone_{x}$ for all $x \in \supp(f)$ become co-singletons when restricted to $\supp(f)$. From this and from $f = f \oplus \bzero \in (\cH')^f$, we have $\thdclosure{(\cH')^f} \geq \|f\| \geq k_1$.
\item Case II: $\|f\| < k_1$. Since $t \geq k_1$, there is $i \in [t]$ with $\supp(f) \cap \cX^{(i)} = \emptyset$. Thus, we have
\begin{align*}
\thdclosure{(\cH')^f} \geq \thdclosure{(\cH')^f|_{\cX^{(i)}}}
=  \thdclosure{\cH'|_{\cX^{(i)}}} \geq k_1,
\end{align*}
where the inequalities follow from Observations~\ref{obs:restriction-doesnot-decrease} and \ref{obs:superset}, respectively.
\end{itemize}
Thus, in both cases, we have $\thdclosure{(\cH')^f} \geq k_1$. This implies $\exthd{\cH'} \geq k_1$ as desired.

\paragraph{(Soundness)}
%\paragraph{Proof of $\exthd{\cH'} \leq \thdclosure{\cH}$.} 
We would like to show that, if $\thdclosure{\cH} < k_2$, then $\exthd{\cH'} < k_2$.
In fact, we will show an even stronger result that $\thdclosure{\cH'} \leq \thdclosure{\cH}$. The statement is trivial if $\thdclosure{\cH'} = 1$. Suppose that $\thdclosure{\cH'} = d \geq 2$ where $s'_1, \dots, s'_{d + 1} \in \cH'$ and $x'_1, \dots, x'_d \in \cX'$ are the witness (according to \Cref{lem:closure_Threshold}). First, notice that $x'_1, \dots, x'_d$ must be from the same copy, i.e. $x'_1, \dots, x'_d \in \cX^{(i)}$ for some $i \in [t]$. This is simply because there is no hypothesis that assigns 1 to coordinates from different copies. Let $x_1, \dots, x_d \in \cX$ denote the corresponding elements to $x'_1, \dots, x'_d$ from the underlying base space $\cX$; that is, $x'_1 = x_1^{(i)}, \dots, x'_d = x_d^{(i)}$.

We claim that we may w.l.o.g. take $s'_1, \dots, s'_{d + 1}$ to be the $i$-th copy of some hypotheses from $\cH$. We note that $s'_3, \dots, s'_{d + 1}$ must assign 1 to at least two coordinates in $\cX^{(i)}$; thus, they must be $i$-th copy of some hypotheses from $\cH$. As for $s'_1, s'_2$, we may select them as follows. Since $x'_1, x'_2$ are not twins in $\cH$, there exists $h \in \cH$ such that $h(x_1) \ne h(x_2)$. Consider two cases: 
\begin{itemize}
\item Case I: $h(x_1) = 1$. Since $x_1$ is not a constant coordinate, there exists another hypothesis $h'$ such that $h'(x_1) = 0$. Thus, we may take $s'_2 = h^{(i)}$ and $s'_1 = (h')^{(i)}$ respectively. 
\item Case II: $h(x_1) = 0$. In this case, we may swap $x'_1, x'_2$ and use the same argument as above.
\end{itemize}

Thus, we may assume that $s'_1 = s^{(i)}_1, \dots, s'_{d+1} = s^{(i)}_{d + 1}$ for some $s_1, \dots, s_{d + 1} \in \cH$. This implies that $s_1, \dots, s_{d+1}$ and $x_1, \dots, x_d$ are witness (according to \Cref{lem:closure_Threshold}) for $\thdclosure{\cH}$. This means that $\thdclosure{\cH} \geq d = \thdclosure{\cH'}$ as claimed.
\end{proof}
\section{Hardness of Approximation}

In this section, we will prove hardness of approximation for the problems of interest. We will reduce from the hardness of approximation results of the Maximum Balanced Biclique problem. We will again use the gap version of the problem, as stated below. 

\begin{center}
\fbox{
\begin{minipage}{0.95\textwidth}
\textbf{Problem:} \gapbicliqueprob{k_1}{k_2} \\
\textbf{Input:} A graph $G = (V, E)$ and positive integers $k_1 \geq k_2$ \\
\textbf{YES Case:} $G$ contains a clique of size $k_1$ \\
\textbf{NO Case:} $B[G^o]$ does \emph{not} contain a balanced biclique of size $k_2$ 
\end{minipage}
}
\end{center}

We remark that, similar to \exactbicliqueprob in the previous section, this gap version is slightly stronger as it requires that the graph is $B[G^o]$ and that the biclique in the YES case is form from the $k$-clique of the underlying graph $G$. Indeed, as we explain below, not all hardness for Maximum Balanced Biclique can be written in this form.

We observe that the same reduction as in \Cref{thm:gapsli-np-hard} (but without any repetition) already give a reduction from this problem to the \gapsliprob{\cdot}{\cdot}, except with a loss of a factor of 2 in the gap. We remark that this loss is exactly why repetition was needed in the previous section, since we start off without any gap.

\begin{lemma} \label{lem:biclique-to-sli}
There is a polynomial-time reduction from \gapbicliqueprob{k_1}{k_2} to \gapsliprob{k_1}{2k_2}.
\end{lemma}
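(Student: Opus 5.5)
We use the construction from the proof of \Cref{thm:gapsli-np-hard} with $t = 1$. Given an instance $(G = (V, E), k_1, k_2)$ of \gapbicliqueprob{k_1}{k_2} with $V = \{v_1, \dots, v_n\}$, we let $G' = (L', R', E')$ have $L' = \{a_1, \dots, a_n\}$ and $R' = \{b_1, \dots, b_n\}$. We put an edge $(a_j, b_{j'})$ iff $j > j'$ and either $j = j'$ or $(v_j, v_{j'}) \in E$. Since $j > j'$ excludes $j = j'$, this reduces to: $j > j'$ and $(v_j, v_{j'}) \in E$. We output $(G', k_1, 2k_2)$, which is clearly polynomial time.

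\textbf{Completeness.} Let $v_{i_1}, \dots, v_{i_{k_1}}$ be a $k_1$-clique in $G$ with $i_1 < \cdots < i_{k_1}$. Take $a_{i_1}, \dots, a_{i_{k_1}}$ and $b_{i_1}, \dots, b_{i_{k_1}}$.
\begin{itemize}
\item For $p > q$ we have $i_p > i_q$ and $(v_{i_p}, v_{i_q}) \in E$, so $(a_{i_p}, b_{i_q}) \in E'$.
\item For $p \le q$ we have $i_p \le i_q$, so by construction $(a_{i_p}, b_{i_q}) \notin E'$.
\end{itemize}
Hence this is a ladder of order $k_1$, and $\li{G'} \ge k_1$.

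\textbf{Soundness.} We argue contrapositively. Suppose $G'$ has a semi-ladder $u_1, \dots, u_{2k_2}$, $w_1, \dots, w_{2k_2}$. Let $A' = \{u_{k_2+1}, \dots, u_{2k_2}\}$ and $B' = \{w_1, \dots, w_{k_2}\}$. Every index in $A'$ exceeds every index in $B'$, so by the semi-ladder definition $A', B'$ form a biclique in $G'$.

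\begin{itemize}
\item The $u_i$ are distinct: if $u_i = u_j$ with $i > j$, then $(u_i, w_j) \in E$ and $(u_j, w_j) \notin E$ would contradict each other. Likewise the $w_j$ are distinct. So $|A'| = |B'| = k_2$.
\item Let $A, B \subseteq V$ be the vertices underlying $A'$ and $B'$. Each edge $(a_j, b_{j'})$ of $G'$ has $(v_j, v_{j'}) \in E$, which is also an edge of $B[G^o]$. Thus $A \times \{1\}, B \times \{2\}$ is a biclique in $B[G^o]$ of size $k_2$ on each side.
\end{itemize}
This contradicts the NO case of \gapbicliqueprob{k_1}{k_2}, so $\sli{G'} < 2k_2$ as required.

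The only subtle points are distinctness of the semi-ladder vertices and the fact that no repetition is needed: the factor of $2$ lost in the gap comes exactly from taking half of the semi-ladder.
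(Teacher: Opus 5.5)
Your proposal is correct and follows the paper's proof essentially verbatim. You use the same unrepeated construction (an edge $(a_j,b_{j'})$ iff $j>j'$ and $(v_j,v_{j'})\in E$), the same clique-to-ladder completeness argument, and the same soundness step of splitting the semi-ladder in half to get a biclique that transfers to $B[G^o]$. Your explicit check that the semi-ladder vertices are distinct, so the biclique really has $k_2$ vertices per side, is a detail the paper leaves implicit.
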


\begin{proof}
Given an instance $G = (V, E)$ of \gapbicliqueprob{k_1}{k_2} where $V = \{v_1, \dots, v_n\}$, we create an instance $G' = (L', R', E')$ for \gapsliprob{k_1}{2k_2} as follows.
\begin{itemize}
\item Let each of $L'$ and $R'$ be a copy of $V$. We use $a_j$ (resp. $b_j$) to denote the copy of $v_j \in V$ in $L$ (resp. $R$). 
\item There is an edge between $(a_j, b_{j'})$ iff $j > j'$ and $(v_j, v_{j'}) \in E$.
\end{itemize}
It is clear that the reduction runs in polynomial time.

\paragraph{(Completeness)} Suppose that $v_{i_1}, \dots, v_{i_{k_1}}$ forms a clique in $G$ where $i_1 < \cdots < i_{k_1}$. Then, $a_{i_1}, a_{i_2}, \dots, a_{i_{k_1}}$ and $b_{i_1}, b_{i_2}, \dots, b_{i_{k_1}}$ form a ladder of order $k_1$ in $G'$.

\paragraph{(Soundness)} Suppose contrapositively that $G'$ contains a semi-ladder of order $2k_2$: $a_{i_1}, \dots, a_{i_{2k_2}},$ $b_{j_1}, \dots, b_{j_{2k_2}}$. Let $A' = \{a_{i_{k_2 + 1}}, \dots, a_{i_{2k_2}}\}$ and $B' = \{b_{j_1}, \dots, b_{j_{k_2}}\}$; they form a biclique in $G'$. Thus, $\{(v_{i_{k_2+1}},1), \dots, (v_{i_{2k_2}}, 1)\}$ and $\{(v_{j_{1}},2), \dots, (v_{j_{k_2}}, 2)\}$ form a biclique in $B[G^o]$, as desired.
\end{proof}

While Maximum Balanced Biclique is \emph{not} known to be NP-hard to approximate, hardness of approximation results are known under other (stronger) assumptions~\cite{Fei02,FK04,BhangaleGHKK16,Khot06,Manurangsi17,Manurangsi17-icalp,ChalermsookCKLM20,ManurangsiRS21}. Unfortunately, some of these constructions (e.g. \cite{Khot06,BhangaleGHKK16,Manurangsi17-icalp}) are not of the form \gapbicliqueprob{\cdot}{\cdot}. Nevertheless, some of the others~\cite{Fei02,Manurangsi17,ChalermsookCKLM20,ManurangsiRS21} can be written in this form. 

To simplify the presentation, we will just focus on two known results from \cite{Manurangsi17,ChalermsookCKLM20}, both of which can be easily verified\footnote{In both of \cite{Manurangsi17,ChalermsookCKLM20}, the graph $G$ is created explicitly and it is shown that, in the YES case, the graph contains a $k$-clique while, in the NO case, the graph does not contain a large biclique. (Note that, in \cite{Manurangsi17}, the NO case is even stronger as $n^{-o(1)}$-dense subgraphs are ruled out.)} to be of this form. 

First, under the Gap Exponential Time Hypothesis (Gap-ETH), the problem is hard to approximate to within a factor of $n^{o(1)}$, as stated more precisely below.

\begin{theorem}[\cite{Manurangsi17}] \label{thm:biclique-inapprox}
Assuming Gap-ETH, for any function $g$ such that $g = o(1)$, there is no polynomial-time algorithm for \gapbicliqueprob{k}{k/n^{g(n)}}.
\end{theorem}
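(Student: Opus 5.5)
This theorem is imported from \cite{Manurangsi17}, so the plan is not to reprove it from scratch. Instead, I would check that the reduction there already yields instances of the stronger form \gapbicliqueprob{k}{k/n^{g(n)}}, with YES case ``$G$ has a $k$-clique'' and NO case ``$B[G^o]$ has no balanced biclique of size $k/n^{g(n)}$''.

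\textbf{Construction.} Start from a Gap-ETH-hard 3CNF $\varphi$ on $N$ variables, which is either satisfiable or at most $(1-\delta)$-satisfiable. Choose $k = k(N)$ growing slowly, with $k = \omega(1)$ but small enough that $n = 2^{O(N/k)\cdot\text{polylog}}$-type sizes make the reduction subexponential. Partition the variables (or clauses) into $k$ random-like blocks of size about $N/k$. The vertices of $G$ are the pairs (block, partial assignment satisfying the clauses touching that block). Two vertices are adjacent iff their partial assignments are consistent. This is the Densest-$k$-Subgraph construction of \cite{Manurangsi17}, so $n = k\cdot 2^{O(N/k)}$.

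\textbf{Completeness.} A satisfying assignment restricted to the blocks gives $k$ pairwise consistent vertices, hence a $k$-clique in $G$. This is exactly the required YES case.

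\textbf{Soundness.} This is the main obstacle. Suppose $B[G^o]$ has a balanced biclique $(A\times\{1\}, B\times\{2\})$ with $|A| = |B| = k/n^{g(n)}$. Every vertex of $A$ must then be equal or consistent to every vertex of $B$, and the self-loops in $G^o$ only permit $A\cap B$ overlaps. I would reuse the argument of \cite{Manurangsi17}, which rules out any subgraph on roughly $k/n^{o(1)}$ vertices with density $n^{-o(1)}$. A biclique $A\cup B$ of this size has density at least about $1/4$ in $G$ once the overlap is handled, so it is ruled out. The key step of that argument is a counting/decoding lemma: many mutually consistent partial assignments spread across many blocks can be decoded into a global assignment satisfying more than a $1-\delta$ fraction of clauses, contradicting the NO case of Gap-ETH. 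The subtlety is blocks containing several vertices of $A$ or $B$. I would handle this by noting that two vertices in the same block are consistent only if they are identical, so pigeonholing gives many distinct blocks.

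\textbf{Parameters and conclusion.} I would choose $k$ so that $n^{g(n)}$ is subpolynomial, with $g = o(1)$ arbitrary, and so that the running time $\poly(n) = 2^{o(N)}$. A polynomial-time algorithm for \gapbicliqueprob{k}{k/n^{g(n)}} would then refute Gap-ETH.
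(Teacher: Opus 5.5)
Your top-level step is the one the paper relies on. The paper does not reprove the theorem: it cites \cite{Manurangsi17} and notes in a footnote that the graph there has a $k$-clique in the YES case and no $n^{-o(1)}$-dense subgraph in the NO case, while a balanced biclique in $B[G^o]$ gives exactly such a dense subgraph. However, the construction and parameters you actually write down are not those of \cite{Manurangsi17}, and they break the argument.

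(i) Suppose the variables are partitioned into disjoint blocks and adjacency means only consistency. Then two vertices in different blocks share no variables and are always adjacent, so $G$ is complete $k$-partite and has a $k$-clique even when $\varphi$ is far from satisfiable. Manurangsi's graph is different. It has a vertex for each assignment to each of the $\binom{N}{\ell}$ sets of $\ell=\Omega(\sqrt N)$ variables, edges require consistency and satisfaction of every clause inside the union, and $k=\binom{N}{\ell}$. The paper's Appendix B is the $\ell=2$ version of this.

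(ii) With $k$ slowly growing and $n=k\cdot 2^{\Theta(N/k)}$, you have $\log k=o(\log n)$. So for slowly decaying $g$ (e.g.\ $g(n)=1/\log\log n$) you get $n^{g(n)}\gg k$. The NO threshold $k/n^{g(n)}$ is then below $1$, the promise problem is trivial, and nothing is proved. The theorem needs $k\ge n^{g(n)}$. In Manurangsi's graph this holds because $n\le 2^{\ell}\binom{N}{\ell}$ with $\ell=o(N)$, so $k=n^{1-o(1)}$. Your remark about choosing $k$ so that $n^{g(n)}$ is subpolynomial has no content, since that is automatic for $g=o(1)$.

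(iii) The pigeonhole step treats the biclique as if it were a clique. Vertices on the same side of a biclique need not be adjacent, so one side may contain up to $2^{\Theta(N/k)}$ vertices from a single block. Pigeonholing then yields only $|A|/2^{\Theta(N/k)}$ distinct blocks.

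The fix is to re-derive nothing and use Manurangsi's soundness as a black box, coupling its $o(1)$ to $g$. Under Gap-ETH, for every $f=o(1)$, it is hard to distinguish ``$G$ has a $k$-clique'' from ``every $k$-vertex subgraph of $G$ has density at most $n^{-f(n)}$''.

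Suppose $B[G^o]$ has a balanced biclique $A\times\{1\}$, $B\times\{2\}$ with $|A|=|B|=t=\lceil k/n^{g(n)}\rceil$; note $t\to\infty$ since $k=n^{1-o(1)}$. Every $(a,b)\in A\times B$ with $a\neq b$ is an edge of $G$. There are at least $t^2-t$ such ordered pairs, and each edge arises from at most two of them. Hence $G[A\cup B]$ has at least $t(t-1)/2$ edges on at most $2t$ vertices.

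Pad $A\cup B$ to $k$ vertices; if $2t>k$, first pass to a sub-biclique with sides of size $\lfloor k/2\rfloor$. This gives a $k$-subgraph of density at least $\min\{\tfrac12 n^{-2g(n)},\tfrac15\}$, using $\frac{t(t-1)}{k(k-1)}\ge \frac{t^2}{2k^2}$. That exceeds $n^{-f(n)}$ for $f(n)=2g(n)+1/\log\log n$, which is still $o(1)$, a contradiction. This padding step, with $f$ tied to $g$, is what your phrase ``rules out subgraphs on $k/n^{o(1)}$ vertices with density $n^{-o(1)}$'' silently needs. The decoding lemma you sketch is not required.
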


For the parameterized regime, the problem is known to be hard to approximation to $o(k)$ factor under Gap-ETH (where $k$ is the parameter):

\begin{theorem}[\cite{ChalermsookCKLM20}] \label{thm:biclique-fpt-inapprox}
Assuming Gap-ETH, for any function $g$ such that $g = o(1)$, there is no FPT algorithm for \gapbicliqueprob{k}{k/g(k)}.
\end{theorem}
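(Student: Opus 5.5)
This result is imported from \cite{ChalermsookCKLM20}. The plan is therefore not a new reduction. Instead, I would recall their Gap-ETH reduction and check that its soundness analysis still applies to bicliques of $B[G^o]$ rather than of $G$. The only difference between the two graphs is the added diagonal edges $((v,1),(v,2))$.

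\textbf{The construction.} I would recall it in the following form. Start from a 3CNF formula $\varphi$ on $n$ variables that, under Gap-ETH, is either satisfiable or at most $(1-\delta)$-satisfiable, and assume w.l.o.g. bounded occurrence. Partition the variables into $k$ blocks of size about $n/k$, possibly after first applying the disperser-based grouping that CCKLM use to amplify the gap. Create one vertex for each (block, partial assignment to that block) that satisfies all clauses lying inside the block. Join two vertices exactly when they belong to different blocks and their assignments, taken together, violate no clause. The graph has size $2^{O(n/k)}\cdot k$, so an FPT algorithm in $k$ contradicts Gap-ETH once $k$ is chosen as a slowly growing function of $n$.

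\textbf{Completeness.} This step is immediate. A satisfying assignment restricts to one vertex per block, and these $k$ vertices are pairwise adjacent, giving a $k$-clique.

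\textbf{Soundness, and where the diagonal matters.} This is the main step. Suppose $B[G^o]$ contains a balanced biclique $(A\times\{1\}, B\times\{2\})$ with $|A|=|B|=k/g(k)$. I would rerun CCKLM's argument, which only uses two facts.
\begin{itemize}
\item First, for every $u\in A$ and $v\in B$, the assignments of $u$ and $v$ are consistent. They either sit in different blocks and do not conflict, or they are the same vertex; a diagonal edge yields exactly this "trivial consistency."
\item Second, two distinct vertices in the same block are never adjacent in $B[G^o]$, since they are neither equal nor joined by an edge of $G$. Hence if $u\in A$ and $v\in B$ lie in the same block, then $u=v$.
\end{itemize}
It follows that, in any block touched by both $A$ and $B$, each of $A$ and $B$ contains exactly one vertex of that block, and it is the same vertex. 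Consequently the vertices of $A\cup B$ in such shared blocks define a single well-defined partial assignment.

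From here the goal is to recover CCKLM's conclusion. Their counting/disperser argument takes the many distinct blocks covered by $A$ and by $B$, together with pairwise consistency across $A\times B$, and produces an assignment satisfying more than a $(1-\delta)$ fraction of clauses. That contradicts the NO case whenever the biclique size is $k/g(k)$ with $g=o(1)$.

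\textbf{Expected obstacle.} The point I expect to need care is the case where $A$ and $B$ share vertices, since there the diagonal edges substitute for real edges of $G$. The second fact above confines every such overlap to equal vertices, so the shared vertices only ever give consistent assignments. This is exactly what the original proof needs, so the bound goes through unchanged. Finally, I would note that the footnote's remark applies here: CCKLM build $G$ explicitly, with a $k$-clique in the YES case, so the YES side of \gapbicliqueprob{k}{k/g(k)} matches their statement directly.
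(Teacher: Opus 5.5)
\textbf{Gap in the soundness step.} The gap is not the diagonal edges. Your two facts only constrain blocks that are touched by \emph{both} $A$ and $B$. In a biclique, vertices on the same side need not be adjacent. So $A$ may contain many vertices of a block that $B$ avoids, i.e.\ many different partial assignments to the same variables. In the extreme, all of $A$ sits in one block and all of $B$ in another. Then nothing forces $A$ and $B$ to ``cover many distinct blocks'', and there is no single assignment to extract.

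The argument you describe (many distinct blocks plus pairwise consistency gives a good global assignment) is a \emph{clique} soundness argument. There, pairwise adjacency inside the solution is exactly what forces distinct blocks, and this does not transfer to bicliques. For the FGLSS-style graph you recall, the needed NO-case property is actually false. The NO promise only says the formula is globally far from satisfiable. Nothing prevents two blocks from each having many (even exponentially many in the block size) locally satisfying assignments that are mutually compatible. For example, take assignments that agree with a fixed $\tau$ satisfying the clauses inside the union of the two blocks, and differ only on variables occurring in no such clause. These give a balanced biclique in $G$, hence in $B[G^o]$, far larger than the NO threshold. So ``the bound goes through unchanged'' is unfounded. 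What the statement needs is CCKLM's actual biclique result, whose NO-case analysis must exclude exactly these multi-label configurations. Your proposal neither cites that result precisely nor reproduces it.

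\textbf{What the paper does, and the diagonal edges.} The paper does not re-derive anything. It imports the theorem and notes in a footnote that the graph $G$ built by CCKLM has a $k$-clique in the YES case and no large biclique in the NO case. Given that, the diagonal edges you worried about can be handled for any $G$ at the cost of a factor $2$.

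Let $(A\times\{1\},B\times\{2\})$ be a $K_{t,t}$ in $B[G^o]$, and split $C=A\cap B$ into halves $C_1,C_2$. Then $(A\setminus C)\cup C_1$ and $(B\setminus C)\cup C_2$ are disjoint. Every pair across them consists of distinct vertices $u\in A$, $v\in B$, and is therefore an edge of $G$. Both sets have size at least $\lfloor t/2\rfloor$. This factor-$2$ loss is harmless here.

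Finally, as literally written, $g=o(1)$ gives $k/g(k)>k$, which violates $k_1\ge k_2$. The meaningful reading is an approximation ratio $g(k)=o(k)$, i.e.\ a NO threshold that is only $\omega(1)$. Under that reading, the single-block bicliques above are even more clearly fatal for the construction you recalled.
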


From the above results, we can apply our reductions to prove hardness of approximation results for all problems of interest, as specified in more detail below.

\paragraph{Ladder Index and Semi-Ladder Index.} \Cref{thm:ladder-inapprox} follows immediately from plugging in the reduction from \Cref{lem:biclique-to-sli} to \Cref{thm:biclique-inapprox}. Similarly, \Cref{thm:ladder-fpt-inapprox} follows immediately from plugging in the reduction from \Cref{lem:biclique-to-sli} to \Cref{thm:biclique-fpt-inapprox}.

\paragraph{Threshold Dimension.} \Cref{thm:td-inapprox} follows immediately from plugging in the reductions from \Cref{lem:biclique-to-sli} and \Cref{lem:red-sli-to-closuretd} to \Cref{thm:biclique-inapprox}. Similarly, \Cref{thm:td-fpt-inapprox} follows immediately from plugging in the reductions from \Cref{lem:biclique-to-sli} and \Cref{lem:red-sli-to-closuretd} to \Cref{thm:biclique-fpt-inapprox}.

\paragraph{Extended Threshold Dimension.} \Cref{thm:etd-inapprox} follows immediately from plugging in the reductions from \Cref{lem:biclique-to-sli}, \Cref{lem:red-sli-to-closuretd} and \Cref{lem:tdclosure-to-etd} to \Cref{thm:biclique-inapprox}. Similarly, \Cref{thm:etd-fpt-inapprox} follows from plugging in the reductions from \Cref{lem:biclique-to-sli}, \Cref{lem:red-sli-to-closuretd} and \Cref{lem:tdclosure-to-etd} to \Cref{thm:biclique-fpt-inapprox}.

%Before we discuss the hardness results, it is worth combining three reductions together in a single 

\section{co-NP-hardness of Extended Threshold Dimension}
\label{sec:conp}

Finally, we prove the co-NP-hardness of \etdprob (\Cref{thm:etd-co-np-hard}). To do this, we will reduce from a set splitting problem, as defined below:

\begin{center}
\fbox{
\begin{minipage}{0.95\textwidth}
\textbf{Problem:} \baltsetsplit \\
\textbf{Input:} Subsets $S_1, \dots, S_N \subseteq [M]$ each of size 4. \\
\textbf{Question:} Is $T \subseteq [M]$ of size $M/2$ such that $|S_i \cap T| = 2$ for all $i \in [N]$.
\end{minipage}
}
\end{center}

It is well known that this problem is NP-hard:

\begin{lemma}[\cite{Guruswami03}] \label{lem:balsetsplit-nphard}
\baltsetsplit is NP-hard.
\end{lemma}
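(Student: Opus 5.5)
The statement is the NP-hardness of \baltsetsplit, attributed to \cite{Guruswami03}. My plan is to reduce from a standard NP-hard splitting problem, namely \tsetsplit: given 4-sets $S_1,\dots,S_N\subseteq[M]$, decide whether some $T\subseteq[M]$ satisfies $|S_i\cap T|=2$ for all $i$. This problem is NP-hard by a folklore reduction from NAE-3SAT or positive 1-in-3 style problems. If I did not want to take it as given, I would derive it from \textsc{Positive NAE-4-SAT} with exactly-two constraints, or from the Schaefer dichotomy: the relation ``exactly two of four true'' is neither $0$-valid, $1$-valid, Horn, dual-Horn, affine, nor bijunctive, so its CSP is NP-hard.

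The remaining step is enforcing the balance condition $|T|=M/2$. I would do this by doubling with complements. Introduce a second copy $[M]'$ and, for each $j\in[M]$, a new 4-set that ties $j$ to its copy $j'$ so that exactly one of them lies in $T$. To build this tie, add two fresh elements $p_j,q_j$ and the sets $\{j,j',p_j,q_j\}$ and $\{p_j,q_j,r_j,s_j\}$, with enough extra fresh gadget elements that the gadget forces $j\in T\iff j'\notin T$.

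A cleaner route is to keep the original sets on $[M]$ and add the complemented copies $S_i'$ on $[M]'$. If $T$ splits every $S_i$, then $[M]\setminus T$ also splits every $S_i$ (each has size 4, so exactly two elements lie outside $T$). Hence $T\cup([M]\setminus T)'$ splits all original sets and all copied sets, and it has size exactly $M$, which is half of $2M$. Conversely, any balanced splitting of the new instance restricts to a splitting of the original sets. No coupling gadget is needed, because complement symmetry alone gives balance.

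The main step I expect to require care is the base NP-hardness of unbalanced \tsetsplit. I would cite it, or verify it via the Schaefer classification as above. The doubling argument itself is routine.
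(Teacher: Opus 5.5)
Your final argument is correct and matches the paper's proof. You reduce from \tsetsplit (cited from \cite{Guruswami03}) and place a copy of every $S_i$ on each of two disjoint copies of the universe. A split $T$ maps to $T \cup ([M]\setminus T)'$, which has size $M$ and splits every set because $|S_i \setminus T| = 2$; for soundness, a balanced split of the new instance is restricted to the first copy. The coupling-gadget idea in your second paragraph is underspecified and also unnecessary, so you can drop it.
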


We note that the above formulation is not exactly the same as stated in \cite{Guruswami03}, which does not contain the ``balancedness'' condition that $|T| = M/2$. Nevertheless, it is not hard to check that the reduction of \cite{Guruswami03} already satisfies balancedness. However, we opt to state the (simple) reduction in Appendix~\ref{app:balsetsplit} for completeness.

\begin{proof}[Proof of \Cref{thm:etd-co-np-hard}]
We reduce from \baltsetsplit to the complementary of \etdprob. Let $S_1, \dots, S_N \subseteq [M]$ be the input to \baltsetsplit. Let $\cX = [M]$ and we define our hypothesis class $\cH$ as follows. First, let $\cHbase$ be the class that contains the following functions: $\bzero, \bone_{x}$ for all $x \in [M]$, $\bone_{S_i}$ for all $i \in [N]$, and $\bone_{S_i \setminus \{x\}}$ for all $i \in [N]$ and $x \in S_i$. Then, let $\cH = \cHbase \cup (\cHbase)^{\bone}$. Finally, let $k = M/2 + 1$.

\paragraph{(Completeness)} Suppose that there exists $T \subseteq [M]$ of size $M/2$ such that $|S_i \cap T| = 2$. We claim that $\exthd{\cH} \leq k$; more specifically, $\thdclosure{\cH^{\bone_T}} \leq k$. To prove this, it suffices\footnote{This is simply because the generator $g_{d + 1}$ from \Cref{lem:closure_Threshold} always satisfies $\|g_{d+1}\|\geq d$.} to show that, for any $h \in \cH^{\bone_T}$, we have $\|h\| \leq k$. Notice that $\cH^{\bone_T} = \cHbase^{\bone_T} \cup \cHbase^{\bone_{\oT}}$. Thus, due to symmetry (between $T$ and $\oT$), it is in turn sufficient to prove that $\|h\| \leq k$ for all $h \in \cHbase^{\bone_T}$. Note that $h \in \cHbase^{\bone_T}$ is equal to $g \oplus \bone_T$ for some $g \in \cHbase$. We consider four cases:
\begin{itemize}
\item \textbf{Case I: } $g = \bzero$. We simply have $\|h\| = |T| = M/2$.
\item \textbf{Case II: } $g = \bone_x$. We simply have $\|h\| \leq |T| + 1 = M/2 + 1$.
\item \textbf{Case III: } $g = \bone_{S_i}$. Since $|S_i \cap T| = |S_i \cap \oT|$, we have $\|h\| = |T| = M/2$.
\item \textbf{Case IV: } $g = \bone_{S_i \setminus \{x\}}$. We have $\|h\| = \|(\bone_{S_i} \oplus \bone_T) \oplus \bone_x\| \leq \|\bone_{S_i} \oplus \bone_T\| + 1 \leq M/2 + 1$, where the last inequality follows from the previous case.
\end{itemize}
Thus, in all cases we have $\|h\| \leq M/2 + 1 = k$, which implies that $\exthd{\cH} \leq k$ as desired.

\paragraph{(Soundness)} Suppose contrapositively that $\exthd{\cH} \leq k = M/2 + 1$; that is, there exists $f \in \{0, 1\}^{\cX}$ such that $\thdclosure{\cH^f} \leq M/2 + 1$. By our definition of $\cH$, we have that $\cH^{f} = \cH^{\bone \oplus f}$. Thus, we may assume w.l.o.g. that $\|f\| \geq M/2$. Let $W = \supp(f)$, and $x_1, \dots, x_{\|f\|}$ be elements of $W$ (in arbitrary order). Consider the following cases:
\begin{itemize}
\item \textbf{Case I: } $\|f\| \geq M/2 + 2$. We claim that this case is impossible, i.e. $\thdclosure{\cH^f}$ must be at least $M/2 + 2$. To see this, consider the generator $g_{\|f\| + 1} = f \oplus \bzero$ and $g_i = f \oplus \bone_{x_i}$ for all $i \in [\|f\|]$. It is simple to verify that these satisfy the conditions in \Cref{lem:closure_Threshold}. Thus, we must have $\thdclosure{\cH^f} \geq \|f\| \geq M/2 + 2$.
\item \textbf{Case II: } $\|f\| = M/2 + 1$. Again, we claim that this case is impossible, i.e. $\thdclosure{\cH^f} \geq M/2 + 2$. To see this, additionally let $x_{\|f\| + 1}$ be any element of $[M] \setminus \supp(f)$. Then, let the generators be $g_{\|f\|+2} = f \oplus \bone_{x_{\|f\| + 1}}, g_{\|f\| + 1} = f \oplus \bzero$ and $g_i = f \oplus \bone_{x_i}$ for all $i \in [\|f\|]$. Again, these satisfy the conditions in \Cref{lem:closure_Threshold} and, thus, $\thdclosure{\cH^f} \geq \|f\| + 1 \geq M/2 + 2$.
\item \textbf{Case III: } $\|f\| = M/2$. We claim that $|S_i \cap W| = 2$ for all $i \in [N]$. Suppose for the sake of contradiction that $|S_i \cap W| \ne 2$ for some $i \in [N]$. We may assume w.l.o.g. that\footnote{Otherwise, we can consider $f \oplus 1$ instead of $f$, which ``flips'' $W$ to $\oW$.} $|S_i \cap W| \leq 1$. Let $x_1, \dots, x_{M/2 - 1}$ be distinct elements of $W \setminus S_i$ and $x_{M/2}, x_{M/2 + 1}, x_{M/2 + 2}$ be distinct elements of $S_i \setminus W$. Consider the following generator:
\begin{itemize}
\item $g_{M/2 + 3} = \bone_{S_i} \oplus f$,
\item $g_{j} = \bone_{S_i \setminus \{x_j\}} \oplus f$ for $j = M/2, M/2 + 1, M/2 + 2$, and,
\item $g_{\ell} = \bone_{x_\ell} \oplus f$ for $\ell = 1, \dots, M/2 - 1$.
\end{itemize}
Again, it is straightforward to verify that these satisfy the conditions in \Cref{lem:closure_Threshold}. This implies $\thdclosure{\cH^f} \geq \|f\| + 1 \geq M/2 + 2$, a contradiction.
\end{itemize}
Thus, we can conclude that $|W| = M/2$ and $|S_i \cap W| = 2$ for all $i \in [N]$ as desired.
\end{proof}

An interesting consequence of the proof above is that, if one can find $f^* = \argmin_{f} \thdclosure{\cH^f}$ in polynomial time, then P = NP. This is because such $f^*$ always corresponds to a solution for the \baltsetsplit problem. (Note that this consequence does not immediately follows from \Cref{thm:etd-co-np-hard}, which only implies the same result under the assumption NP $\ne$ co-NP.)

\begin{corollary}
Unless P = NP, there is no polynomial-time algorithm for computing $\argmin_{f} \thdclosure{\cH^f}$.
\end{corollary}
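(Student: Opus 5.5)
The plan is to use the reduction in the proof of \Cref{thm:etd-co-np-hard} to turn a hypothetical polynomial-time algorithm $\mathcal{A}$ that outputs some $f^* \in \argmin_{f} \thdclosure{\cH^f}$ into a polynomial-time decision procedure for \baltsetsplit. By \Cref{lem:balsetsplit-nphard} this gives P $=$ NP. The key point is that $\mathcal{A}$ returns a \emph{certificate}, not just a value, and that certificate can be checked deterministically. So we never have to decide the $\Pi_2$-type question ``is $\exthd{\cH} \le k$?'' directly.

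Given an instance $S_1, \dots, S_N \subseteq [M]$, we build $\cH = \cHbase \cup (\cHbase)^{\bone}$ exactly as in the proof of \Cref{thm:etd-co-np-hard}, with $k = M/2 + 1$, and run $\mathcal{A}$ to get $f^*$. Let $W = \supp(f^*)$. We then check directly whether $|W| = M/2$ and $|S_i \cap W| = 2$ for all $i \in [N]$, or the same conditions for $\oW$. We output YES iff one of these checks passes. This is clearly polynomial time.

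Correctness in each direction comes from the two halves of the proof of \Cref{thm:etd-co-np-hard}.
\begin{itemize}
\item If the algorithm outputs YES, then $W$ or $\oW$ is a valid balanced splitting, so the instance is a YES instance. This direction is trivial.
\item Conversely, suppose a balanced splitting $T$ exists. The completeness part of that proof shows $\thdclosure{\cH^{\bone_T}} \le k$. Since $f^*$ is a minimizer, $\thdclosure{\cH^{f^*}} \le k$ as well. We then apply the soundness argument to $f = f^*$. Using $\cH^{f} = \cH^{\bone \oplus f}$, we may flip $f^*$ if needed so that $\|f^*\| \ge M/2$. Cases I and II rule out $\|f^*\| \ge M/2 + 1$, and Case III shows that $W$ (or $\oW$, depending on the flip) satisfies $|W| = M/2$ and $|S_i \cap W| = 2$ for all $i$. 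So the check succeeds.
\end{itemize}

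The only step needing care is that the soundness argument must yield a splitting from \emph{this particular} $f^*$, not merely show that some splitting exists. Re-reading the Case III argument confirms this. It shows that any $f$ with $\|f\| = M/2$ and $\thdclosure{\cH^f} \le M/2 + 1$ has a support meeting every $S_i$ in exactly two elements; the w.l.o.g.\ flip to $\oW$ preserves this property, since $|S_i| = 4$. With that verified, the corollary follows.
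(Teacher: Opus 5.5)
Your proposal is correct and follows the paper's own argument. The paper notes that any minimizer $f^*$ must correspond to a solution of \baltsetsplit (via the completeness bound $\thdclosure{\cH^{\bone_T}} \le M/2+1$ and the soundness case analysis in the proof of \Cref{thm:etd-co-np-hard}); you have simply made the decision procedure explicit and checked that Case III applies to this particular $f^*$, not just to some shift.
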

\section{Conclusion and Open Questions}
\label{sec:conclusion_and_open_questions}

In this work, we show computational hardness (of approximation) for the problem of computing the threshold dimension, the extended threshold dimension, the ladder index and the semi-ladder index. Given that the problem of computing the extended threshold dimension is both NP-hard and co-NP-hard, it remains an interesting question to prove a completeness result (e.g. with respect to the class $\Pi_2$) for this problem. Another interesting direction is to improve the hardness of approximation factor to $|\cX|^{1 - o(1)}$ and $|\cH|^{1 - o(1)}$ for (Extended) Threshold dimension, or to $n^{1 - o(1)}$ for (Semi-)Ladder Index. While $n^{1 - o(1)}$-factor inapproximability for Maximum Balanced Biclique is known~\cite{BhangaleGHKK16,Manurangsi17-icalp}, these hardness results are \emph{not} in the form of \tgapbicliqueprob used in our reduction. Therefore, we cannot directly apply our reduction to these results.

\paragraph{Acknowledgment.} I would like to thank Daniil Dmitriev and Amartya Sanyal for their helpful feedback on a previous version of this preprint.

\bibliography{ref}
\bibliographystyle{alpha}

\appendix
\section{NP-hardness of \baltsetsplit}
\label{app:balsetsplit}

In \cite{Guruswami03}, the NP-hardness result was actually stated for the following problem, which differs from \baltsetsplit in that it does not contain the condition $|T| = M/2$.

\begin{center}
\fbox{
\begin{minipage}{0.95\textwidth}
\textbf{Problem:} \tsetsplit \\
\textbf{Input:} Subsets $S_1, \dots, S_N \subseteq [M]$ each of size 4. \\
\textbf{Question:} Is $T \subseteq [M]$ such that $|S_i \cap T| = 2$ for all $i \in [N]$.
\end{minipage}
}
\end{center}

\begin{theorem}[\cite{Guruswami03}]
\tsetsplit is NP-hard.
\end{theorem}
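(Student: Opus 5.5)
We reduce from monotone \textsc{NAE-3SAT}, i.e.\ $2$-colorability of $3$-uniform hypergraphs. Given sets $C_1, \dots, C_m \subseteq [n]$ of size $3$, the question is whether some $T \subseteq [n]$ has $1 \leq |C_j \cap T| \leq 2$ for every $j$. This problem is NP-hard by Lovász (1973), and also by Schaefer's dichotomy theorem. The point of the reduction is that the condition ``not all equal on three elements'' can be converted into ``exactly two out of four'' by padding each constraint with one fresh element that is private to that constraint.

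\textbf{Construction.} Set $M = n + m$. Introduce one fresh element $z_j = n + j$ for each $j \in [m]$, and output the sets $S_j = C_j \cup \{z_j\}$ for $j \in [m]$. Each $S_j$ has size exactly $4$, because $C_j$ consists of three distinct ground elements and $z_j \notin [n]$. The reduction clearly runs in polynomial time. (If the source instance has repeated elements in a clause, we first discard or normalize such clauses in the standard way.)

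\textbf{Completeness.} Suppose $T \subseteq [n]$ has $|C_j \cap T| \in \{1, 2\}$ for all $j$. Define $T' = T \cup \{z_j : |C_j \cap T| = 1\}$. If $|C_j \cap T| = 1$, then $z_j \in T'$, so $|S_j \cap T'| = 1 + 1 = 2$. If $|C_j \cap T| = 2$, then $z_j \notin T'$, so $|S_j \cap T'| = 2$. Hence every $S_j$ is split $2$--$2$.

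\textbf{Soundness.} Suppose $T' \subseteq [M]$ satisfies $|S_j \cap T'| = 2$ for all $j$, and let $T = T' \cap [n]$. Then $|C_j \cap T| = 2 - \bone[z_j \in T'] \in \{1, 2\}$. So $T$ is a not-all-equal assignment of the original instance.

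The only step needing care is the choice of starting problem. The padding trick works cleanly only because the source constraints are monotone, so that no complemented literals appear. If one instead starts from \textsc{NAE-3SAT} with negated literals, the plan is to first eliminate negations. For each variable $x$, introduce a twin $x'$ that will represent $\neg x$, and enforce $x' = \neg x$ with monotone NAE constraints. One way is to add the clauses $\{x, x', y_x\}$ and $\{x, x', \bar y_x\}$, together with a gadget forcing $\bar y_x \neq y_x$. A simpler option is to enforce $x' = \neg x$ directly with $4$-sets of the form $\{x, x', a, b\}$, where $a, b$ are a fresh pair already forced to receive opposite values by a separate $4$-set. Verifying that such a gadget is sound in both directions is the main obstacle. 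To avoid it, I would rely on the cited NP-hardness of monotone \textsc{NAE-3SAT}, which makes the above three-step argument complete.
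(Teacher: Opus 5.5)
Your reduction is correct, but it is a different route from the paper: the paper gives no argument for this statement and simply cites Guruswami. Your reduction works because $z_j$ lies only in $S_j$. So for $T = T' \cap [n]$ you have $|S_j \cap T'| = |C_j \cap T| + \bone[z_j \in T']$, and a free private bit can bring the count to exactly $2$ precisely when $|C_j \cap T| \in \{1,2\}$. Hence proper $2$-colorings of the $3$-uniform hypergraph correspond exactly to $2$--$2$ splittings of the padded sets.

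The citation buys brevity. Your route buys a short, self-contained proof that rests only on the classical NP-completeness of $2$-coloring $3$-uniform hypergraphs (Lov\'asz). It also delivers exactly the non-gap hardness that the paper pushes through the doubling reduction of Appendix~\ref{app:balsetsplit} and into the co-NP-hardness proof.

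Two small clean-ups:
\begin{itemize}
\item If you start from Lov\'asz's formulation, hyperedges are genuine $3$-sets, so the parenthetical about repeated elements can be dropped. Discarding a clause $\mathrm{NAE}(x,x,y)$ would be wrong anyway, since it encodes $x \neq y$.
\item The closing discussion of negated literals is unnecessary, and as sketched both gadgets only defer the task of forcing a disequality. In particular, a single extra $4$-set $\{a,b,c,d\}$ cannot force $a \neq b$. If you ever want that route, the three sets $\{x,y,a,b\}$, $\{a,b,c,d\}$, $\{x,y,c,d\}$ with fresh $a,b,c,d$ do the job: the three constraints together force each of $\{x,y\}$, $\{a,b\}$, $\{c,d\}$ to meet $T$ in exactly one element.
\end{itemize}
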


We provide below a simple reduction from \tsetsplit to the variant we used in \Cref{sec:conp} (\baltsetsplit).

\begin{proof}[Proof of \Cref{lem:balsetsplit-nphard}]
We reduce from \tsetsplit. Let $(S_1, \dots, S_N)$ where $S_1, \dots, S_N \subseteq [M]$ denote an input instance of \tsetsplit. We construct an instance of \baltsetsplit as follows:
\begin{itemize}
\item Let the universe be $[M']$ for $M' = 2M$, where each $j \in [M]$ has two ``copies'' $j, j + M \in [M']$.
\item For every $i \in [N]$, create two subsets $S^{(0)}_i, S^{(1)}_i$ where $S^{(\ell)}_i = \{j + \ell M \mid j \in S_i\}$. In other words, create two copies of $S_i$ corresponding to each copy of the elements.
\end{itemize}
The output instance for \baltsetsplit is $S^{(0)}_1, S^{(1)}_1, \dots, S^{(0)}_N, S^{(1)}_N \subseteq [M']$. This reduction clearly runs in polynomial time. We now prove its completeness and soundness.

\paragraph{(Completeness)} Suppose that there exists $T \subseteq [M]$ such that $|S_i \cap T| = 2$ for all $i \in [N]$. Let $T' = T \cup \{j + M \mid j \in [M] \setminus T\}$. It is simple to see that $|T'| = M'/2$ and $|S^{(0)}_i \cap T'| = |S^{(1)}_i \cap T'| = 2$ for all $i \in [N]$.

\paragraph{(Soundness)} Suppose contrapositively that there exists $T' \subseteq [M']$ such that $|S^{(0)}_i \cap T'| = |S^{(1)}_i \cap T'| = 2$ for all $i \in [N]$. Let $T = T' \cap [M]$. We have $|S_i \cap T| = |S^{(0)}_i \cap T'| = 2$ for all $i \in [N]$.
\end{proof}
\section{NP-hardness of \exactbicliqueprob}
\label{app:biclique}

In this section, we prove the NP-hardness of \exactbicliqueprob. (\Cref{lem:np-exactbiclique})
The reduction here is essentially the same as that from \cite{Manurangsi17} but with partial assignment to only 2 variables per vertex (as opposed to $\Omega(\sqrt{n})$ variables as in \cite{Manurangsi17}).

\begin{proof}[Proof of \Cref{lem:np-exactbiclique}]
We reduce from 3-SAT. Let $\Psi$ be an input formula on variable set $X = \{x_1, \dots, x_n\}$. We construct the graph $G = (V, E)$ as follows:
\begin{itemize}
\item Let $k = \binom{n}{2}$.
\item Let $V$ be the set of partial assignments to 2 variables; each vertex of $V$ is $\{(x_{i_1}, b_{i_1}), (x_{i_2}, b_{i_2})\}$ where $i_1, i_2 \in [n]$ are distinct and $b_{i_1}, b_{i_2} \in \{0, 1\}$.
\item Add an edge between every pair of vertices $\{(x_{i_1}, b_{i_1}), (x_{i_2}, b_{i_2})\}$ and $\{(x_{i'_1}, b_{i'_1}), (x_{i'_2}, b_{i'_2})\}$ such that (i) the assignments are consistent (i.e. if $i_j = i'_{j'}$, then $b_{i_j} = b_{i'_{j'}}$) and (ii) every clause of $\Psi$ whose variables all belong to the set $\{x_{i_1}, x_{i_2}, x_{i'_1}, x_{i'_2}\}$ is satisfied by the (partial) assignment.
\end{itemize}
This reduction runs in polynomial time. We will next prove its completeness and soundness.

\paragraph{(Completeness)} If there exists a satisfying assignment $\phi: X \to \{0, 1\}$ for the formula $\Psi$, then we can simply let $S$ be the set of vertices $\{(x_{i_1}, \phi(x_{i_1})), (x_{i_2}, \phi(x_{i_2}))\}$ for all distinct $i_1, i_2 \in [n]$. It is clear that this forms a $k$-clique in $G$.  

\paragraph{(Soundness)} Suppose contrapositively that there exists a biclique in $B[G^o]$ with at least $k^2$ edges. Let this biclique be defined by $A \subseteq L$ and $B \subseteq R$. By the definition of $B[G^o]$, $A$ and $B$ correspond to sets of vertices in $V$, which we will denote as $U \subseteq V$ and $W \subseteq V$. The assumption that $A \times B$ has at least $k^2$ edges implies $|U| \cdot |W| \geq k^2 = \binom{n}{2}^2$. Furthermore, for every $u \in U$ and $w \in W$, either $u = w$ or the edge $(u, w)$ must exist in $G$.

To formalize the components of these vertices, we define a \textit{literal assignment} as a pair $(x_i, b) \in X \times \{0, 1\}$, representing assigning the truth value $b$ to the variable $x_i$. Every vertex in $V$ is a set of exactly two literal assignments for distinct variables. 

Let $S_U = \bigcup_{u \in U} u$ be the set of all literal assignments present across all vertices in $U$, and let $S_W = \bigcup_{w \in W} w$ be the corresponding set for $W$. 

%\textbf{Claim 1: Bounding the total number of unique literal assignments.}
For each variable $x_i \in X$, let $c_i \in \{0, 1, 2\}$ be the number of truth values assigned to $x_i$ in $S_U$, and let $d_i \in \{0, 1, 2\}$ be the number of truth values assigned to $x_i$ in $S_W$. 
Because $U \times W$ forms a biclique, every $u \in U$ must be consistent with every $w \in W$. Therefore, it is impossible for $S_U$ to contain $(x_i, b)$ while $S_W$ contains $(x_i, 1-b)$, as that would imply the existence of some $u \in U$ and $w \in W$ that contradict each other, preventing the edge $(u, w)$ from existing in $G$. 
This implies that if $c_i > 0$ and $d_i > 0$, they must agree on the exact same single truth value, meaning $c_i = d_i = 1$. Thus, we have $c_i + d_i \leq 2$. Summing over all $n$ variables yields
\begin{align} \label{eq:total-single-literal}
|S_U| + |S_W| = \sum_{i=1}^n c_i + \sum_{i=1}^n d_i \leq 2n.
\end{align}

Notice that $U \subseteq \binom{S_U}{2}$ and $W \subseteq \binom{S_W}{2}$. Thus, using our initial lower bound, we have
\begin{align} \label{eq:prod}
\binom{n}{2}^2 \leq |U| \cdot |W| \leq \binom{|S_U|}{2} \binom{|S_W|}{2}.
\end{align}
For any non-negative integers $x, y$ satisfying $x + y \leq 2n$, the product $\binom{x}{2} \binom{y}{2}$ is strictly maximized when $x = y = n$. Thus, both \eqref{eq:total-single-literal} and \eqref{eq:prod} must be equalities. For \eqref{eq:prod} to be an equality, we must have $U = \binom{S_U}{2}$ and $W = \binom{S_W}{2}$. The latter implies\footnote{Otherwise, if we assume w.l.o.g. that $c_i \geq 2$, then $(x_i, 0), (x_i, 1) \in S_U$ and thus $\{(x_i, 0), (x_i, 1)\} \in \binom{S_U}{2} \setminus U$.} that $c_i \leq 1, d_i \leq 1$ for all $i \in [n]$. Meanwhile, for \eqref{eq:total-single-literal} to be an equality, we must have $c_i + d_i = 2$. This implies that $c_i = d_i = 1$ for all $i \in [n]$. In other words, $S_U = S_W$, and this set contains exactly one literal assignment for every variable in $X$.  Let us call the corresponding assignment $\phi^*$.

Since $U = \binom{S_U}{2}$ and $W = \binom{S_W}{2}$, the sets $U$ and $W$ must contain every possible pair of literal assignments from $\phi^*$. 

Consider any clause in $\Psi$. Let the three variables involved in this clause be $x_p, x_q,$ and $x_r$. 
Consider the following two vertices:
\begin{align*}
u &= \{(x_p, \phi^*(x_p)), (x_q, \phi^*(x_q))\}, \\
w &= \{(x_p, \phi^*(x_p)), (x_r, \phi^*(x_r))\}.
\end{align*}
Since $U$ and $W$ contain all pairs from $\Psi$, we know $u \in U$ and $w \in W$. For $U \times W$ to be a biclique, the edge $(u, w)$ must exist in $G$. From our definition of $G$, this implies that $\phi^*$ satisfies this clause.

Thus, $\phi^*$ is a satisfying assignment for $\Psi$, which concludes our proof.
\end{proof}

\section{On Existence of Intersection-Closed Representation}

In this section, we study a related question of whether, for any class $\cH \subseteq \{0, 1\}^{\cX}$, it is always possible to find a shift $f \in \{0, 1\}^{\cX}$ such that the $f$-representation class $\cH^f$ is intersection-closed, i.e. $\overline{\cH^f} = \cH^f$. We show that this is false in general, but deciding its possibility admits a polynomial-time algorithm. This answers another question asked by \cite{DmitrievFHS26}. We remark that these same results were also independently discovered by Sanyal~\cite{sanyal2026private}.

\subsection{A Counterexample to Universal Existence}

\begin{theorem}
There exists a finite hypothesis class $\cH \subseteq \{0, 1\}^\cX$ such that for every shift $f: \cX \to \{0, 1\}$, the $f$-representation $\cH^f$ is not intersection-closed.
\end{theorem}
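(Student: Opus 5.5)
The plan is to exhibit an explicit small class and verify the claim by hand. The first step is to reformulate intersection-closedness of $\cH^f$ directly in terms of $\cH$. For $g, h \in \cH$, consider $(f \oplus g) \wedge (f \oplus h)$ and evaluate it coordinatewise:
\begin{itemize}
\item at a coordinate $x$ with $f(x) = 0$ it equals $g(x) \wedge h(x)$;
\item at a coordinate $x$ with $f(x) = 1$ it equals $\neg g(x) \wedge \neg h(x)$, which is $f(x) \oplus (g(x) \vee h(x))$.
\end{itemize}
Hence $\cH^f$ is intersection-closed iff, for all $g, h \in \cH$, the hypothesis $m_f(g,h)$ belongs to $\cH$, where $m_f(g,h)$ takes the AND of $g,h$ on $f^{-1}(0)$ and the OR on $f^{-1}(1)$. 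In particular, where $g$ and $h$ agree, $m_f(g,h)$ copies their common value. Where they disagree, $m_f(g,h)$ copies $f$.

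Next, I would take $\cX = \{1,2,3\}$ and $\cH = \{\bone_1, \bone_2, \bone_3\}$, the three singletons, and write $f = (f_1, f_2, f_3)$. Apply the criterion to the pair $\bone_2, \bone_3$. They agree (with value $0$) on coordinate $1$ and disagree on coordinates $2,3$, so $m_f(\bone_2,\bone_3) = (0, f_2, f_3)$. This must lie in $\cH$, and the only members with first coordinate $0$ are $\bone_2$ and $\bone_3$. This forces $f_2 + f_3 = 1$. Symmetrically, the pairs $\{\bone_1,\bone_3\}$ and $\{\bone_1,\bone_2\}$ force $f_1 + f_3 = 1$ and $f_1 + f_2 = 1$.

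Summing the three equations gives $2(f_1+f_2+f_3) = 3$, which is impossible: the left side is even and the right side is odd. Therefore no shift $f$ makes $\cH^f$ intersection-closed, which proves the theorem.

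The only step needing care is the coordinatewise identity in the first paragraph, in particular the case $f(x) = 1$. I would check it on all four combinations of $(g(x), h(x))$.
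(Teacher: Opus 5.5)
Your proof is correct, but it takes a different route from the paper's, both in the counterexample and in how it is verified.

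The paper takes $\cH = \{0,1\}^3 \setminus \{(0,0,0),(1,1,1)\}$, which consists of all six vectors of weight one or two. It observes that every shift $\cH^f$ is again the cube minus an antipodal pair. It then goes through the four possible antipodal pairs and, in each case, exhibits two members whose intersection is a missing point.

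You instead use the three singletons. You first prove the general identity $(f \oplus g) \wedge (f \oplus h) = f \oplus m_f(g,h)$. This turns closedness of $\cH^f$ into closedness of $\cH$ itself under the twisted meet $m_f$, which copies agreed values and copies $f$ where $g$ and $h$ disagree. Each pair of singletons then forces $f_i \neq f_j$, so $f$ would have to properly 2-color a triangle, and your parity count rules this out.

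The paper's argument is a direct enumeration that relies on the shift-symmetry of its particular class. Yours uses a smaller class (three hypotheses rather than six) and avoids casework entirely. Your criterion is also reusable: it turns the existence of a good shift into a system of local constraints on $f$, where each pair $g,h$ constrains $f$ only on the coordinates where $g$ and $h$ disagree.
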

\begin{proof}
Let $\mathcal{X} = \{1, 2, 3\}$. %We identify subsets of $\mathcal{X}$ with their indicator functions as binary strings in $\{0, 1\}^3$. 
For convenience, we simply write functions on $\cX$ as binary strings in $\{0, 1\}^3$.
Define the hypothesis class $\mathcal{H}$ as the complete hypercube except the two antipodal points:
$$\mathcal{H} = \{0, 1\}^3 \setminus \{(0,0,0), (1,1,1)\}$$
For any arbitrary shift $f \in \{0, 1\}^3$, the shifted class $\mathcal{H}^f = \{h \oplus f \mid h \in \mathcal{H}\}$ preserves the structural property of missing exactly two antipodal points. Specifically, $\mathcal{H}^f = \{0, 1\}^3 \setminus \{a_f, b_f\}$, where the missing elements satisfy $a_f \oplus b_f = (1, 1, 1)$.
This limits $\{a_f, b_f\}$ to exactly four possible pairs. We show that in every case, we can find two elements $u, v \in \mathcal{H}^f$ whose intersection evaluates to one of the missing elements, thereby violating intersection-closure:
\begin{itemize}
    \item \textbf{Case 1: $\{a_f, b_f\} = \{(0,0,0), (1,1,1)\}$.} Let $u = (0,0,1)$ and $v = (0,1,0)$. Both belong to $\mathcal{H}^f$, but $u \wedge v = (0,0,0) \notin \mathcal{H}^f$.
    \item \textbf{Case 2: $\{a_f, b_f\} = \{(0,0,1), (1,1,0)\}$.} Let $u = (0,1,1)$ and $v = (1,0,1)$. Both belong to $\mathcal{H}^f$, but $u \wedge v = (0,0,1) \notin \mathcal{H}^f$.
    \item \textbf{Case 3: $\{a_f, b_f\} = \{(0,1,0), (1,0,1)\}$.} Let $u = (0,1,1)$ and $v = (1,1,0)$. Both belong to $\mathcal{H}^f$, but $u \wedge v = (0,1,0) \notin \mathcal{H}^f$.
    \item \textbf{Case 4: $\{a_f, b_f\} = \{(1,0,0), (0,1,1)\}$.} Let $u = (1,0,1)$ and $v = (1,1,0)$. Both belong to $\mathcal{H}^f$, but $u \wedge v = (1,0,0) \notin \mathcal{H}^f$.
\end{itemize}
Thus, for every possible $f \in \{0, 1\}^3$, the representation $\mathcal{H}^f$ is not intersection-closed.
\end{proof}

\subsection{Polynomial-Time Algorithm}

While the existence of an intersection-closed representation is not always guaranteed, finding one (when it exists) is computationally tractable, contrasting the hardness of computing the Extended Threshold dimension we proved earlier.

\begin{theorem} \label{thm:poly_time_closure}
Given a hypothesis class $\mathcal{H} \subseteq \{0, 1\}^\mathcal{X}$, there is a polynomial-time algorithm to determine whether there exists a shift $f \in \{0, 1\}^\mathcal{X}$ such that $\mathcal{H}^f$ is intersection-closed.
\end{theorem}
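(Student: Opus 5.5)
We will reduce the question to 2-SAT. Introduce a Boolean variable $f_x$ for each $x \in \cX$, representing the shift $f(x)$. The class $\cH^f$ is intersection-closed iff for every pair $h, h' \in \cH$, the function $(h \oplus f) \wedge (h' \oplus f)$ equals $g \oplus f$ for some $g \in \cH$. It suffices to consider pairs, since closure under pairwise intersection implies closure under arbitrary nonempty intersections by induction. The difficulty is the existential quantifier over $g$, which on its face makes the condition a large disjunction rather than a conjunction of local constraints.

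\textbf{Key step: forcing the witness $g$.} Fix a pair $h, h'$. On coordinates $x$ with $h(x) = h'(x)$, the value $(h(x) \oplus f_x) \wedge (h'(x) \oplus f_x)$ equals $h(x) \oplus f_x$. The witness $g$ must therefore satisfy $g(x) = h(x)$ on the agreement set $A = \{x : h(x) = h'(x)\}$, independently of $f$. On the disagreement set $D$, exactly one of $h(x) \oplus f_x$ and $h'(x) \oplus f_x$ is $0$, so the intersection is $0$ there. We thus need $g(x) = f_x$ for all $x \in D$.

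Hence the condition for the pair is: there exists $g \in \cH$ with $g|_A = h|_A$ and $g|_D = f|_D$. In other words, the restriction $f|_D$ must lie in the set $P_{h,h'} = \{ g|_D : g \in \cH, g|_A = h|_A \}$. This set is explicitly computable. For the whole problem, we need $f|_{D_{h,h'}} \in P_{h,h'}$ for all pairs. This is a CSP with polynomially many constraints, but of arbitrary arity, so it is not yet obviously tractable.

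\textbf{Main obstacle: showing these constraints are tractable.} I would try to show that each $P_{h,h'}$ has a structure that makes the conjunction solvable, ideally that it is expressible in 2-CNF or is closed under the majority operation. A tempting route is to use the pair $h, h'$ themselves to derive consequences: taking the pair in both orders, or triples, may show that a solution forces $f$ to agree with $h$ or with $h'$ on the "pattern" of $D$.

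A more robust alternative avoids this dependence on the structure of $\cH$. First guess the value of $f$ on a single coordinate; alternatively, note that $\cH^f$ intersection-closed with $f$ fixed is checkable directly. Then observe that fixing $f(x_0)$ propagates: for a pair with $D \ni x_0$, the membership $f|_D \in P$ often pins down the remaining values. Concretely, I would attempt to prove that $P_{h,h'}$ is closed under the coordinatewise majority of three elements whenever a global solution exists. If this holds, the whole instance is a majority-closed CSP, which reduces to 2-SAT via its binary projections and is solvable in polynomial time.

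If majority closure fails, I would fall back to showing that it suffices to check the constraint only for pairs with $|D| \le 2$, combined with a closure argument. That reduction would yield a 2-SAT instance directly: each constraint becomes "$(f_x, f_y) \in$ allowed set". This is the step I expect to be hardest. I would first test it on the counterexample class above to calibrate.
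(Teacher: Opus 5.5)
\textbf{There is a genuine gap.} Your derivation of the pairwise condition $f|_{D}\in P_{h,h'}$ is correct. However, the proposal stops exactly where the proof has to happen. You never show that the resulting arbitrary-arity CSP is tractable, and both routes you sketch fail as stated.

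\emph{Majority-closure.} It fails even when a solution exists. The class $\cH=\{0,1\}^3\setminus\{111\}$ is already intersection-closed, so $f=\bzero$ works. Yet for $h=100$, $h'=011$ you get $A=\emptyset$, $D=\cX$ and $P_{h,h'}=\cH$. This set contains $110$, $101$, $011$ but not their coordinatewise majority $111$.

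\emph{Pairs with $|D|\le 2$.} Restricting to these is also insufficient. Replace each coordinate of the paper's counterexample class $\{0,1\}^3\setminus\{000,111\}$ by three identical copies. Any two distinct hypotheses then differ in at least three coordinates. So no nontrivial constraint with $|D|\le 2$ arises, and your 2-SAT instance is vacuously satisfiable. Yet no valid shift exists: intersection-closure of $\cH^f$ would pass to the restriction onto one copy of each original coordinate, which is the original counterexample under the restricted shift.

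\textbf{Missing idea.} Look at the intersection of the whole class instead of pairs. If $\cH^f$ is intersection-closed, then $m=\bigwedge_{h\in\cH}(h\oplus f)\in\cH^f$, so $m=u\oplus f$ for some $u\in\cH$. Your own forcing computation shows that $u$ agrees with $f$ on every non-constant coordinate. Every $x\in\cH^f$ dominates $m$, so XOR with $m$ is subtraction, and $(x-m)\wedge(y-m)=(x\wedge y)-m$. Hence $\cH^u=\{x\oplus m : x\in\cH^f\}$ is itself intersection-closed (Lemma~\ref{lem:shift_in_H}). So only the $|\cH|$ candidate shifts $u\in\cH$ need to be tested, each directly in polynomial time. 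No 2-SAT or polymorphism argument is required.
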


In fact, the above theorem is a simple consequence of the following lemma, which shows that if such a shift $f$ exists, then at least one such shift belongs to $\cH$.

\begin{lemma} \label{lem:shift_in_H}
If there exists a shift $f \in \{0, 1\}^\mathcal{X}$ such that $\mathcal{H}^f$ is intersection-closed, then there exists some $u \in \mathcal{H}$ such that $\mathcal{H}^u$ is intersection-closed.
\end{lemma}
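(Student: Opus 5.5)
The plan is to take the shift $u$ to be the hypothesis in $\cH$ that corresponds to the \emph{minimum} element of the intersection-closed class $\cH^f$. Re-shifting by this minimum amounts to deleting coordinates that every member of the class already has set to $1$, and deleting such coordinates should not break intersection-closure.

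\textbf{Step 1 (find the minimum element).} Since $\cX$ is finite, $\cH^f$ is finite (and nonempty, assuming $\cH \neq \emptyset$). Let $m = \bigwedge_{g \in \cH^f} g$. Because $\cH^f$ is intersection-closed, i.e.\ $\overline{\cH^f} = \cH^f$, we get $m \in \cH^f$. So $m = f \oplus u$ for some $u \in \cH$. By construction $m \le g$ pointwise for every $g \in \cH^f$.

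\textbf{Step 2 (relate the two representations).} For any $h \in \cH$,
\[
u \oplus h = (f \oplus m) \oplus h = (f \oplus h) \oplus m .
\]
Hence $\cH^u = \{g \oplus m \mid g \in \cH^f\}$. Since $m \le g$, we have $g \oplus m = g \wedge \neg m$, where $\neg m$ is the pointwise complement. Concretely, $g \oplus m$ is $g$ with the coordinates in $\supp(m)$ zeroed out.

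\textbf{Step 3 (closure is preserved).} It suffices to check closure under pairwise intersection, since closure under arbitrary nonempty finite families then follows by induction. Take $g, g' \in \cH^f$. Then
\[
(g \oplus m) \wedge (g' \oplus m) = (g \wedge g') \wedge \neg m = (g \wedge g') \oplus m ,
\]
where the last equality uses $m \le g \wedge g'$. Since $g \wedge g' \in \cH^f$, the right-hand side lies in $\cH^u$. Thus $\cH^u$ is intersection-closed.

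The only step needing care is recognizing that the minimum element exists and lies in the class, and choosing $u$ accordingly. After that, the identity $g \oplus m = g \wedge \neg m$ for $m \le g$ makes the verification routine, so I do not expect a real obstacle.

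\Cref{thm:poly_time_closure} then follows directly: try each of the $|\cH|$ shifts $u \in \cH$, and check pairwise intersection-closure of $\cH^u$ in polynomial time.
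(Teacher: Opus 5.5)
Your proposal is correct and follows essentially the same argument as the paper. You take $u$ so that $f \oplus u$ is the minimum element $m$ of $\cH^f$, note that $\cH^u = \{g \oplus m \mid g \in \cH^f\}$, and use that $g \oplus m$ simply zeroes out $\supp(m)$ (the paper writes this as $g - m$) to transfer pairwise intersection-closure.
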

\begin{proof}
Suppose there exists a shift $f \in \{0, 1\}^\mathcal{X}$ such that $\mathcal{H}^f$ is intersection-closed.
Since $\mathcal{H}^f$ is closed under intersection, it has a unique minimum element $m = \bigwedge_{x \in \mathcal{H}^f} x \in \mathcal{H}^f$.
Since $m \in \mathcal{H}^f$, we can write $m = u \oplus f$ for some $u \in \mathcal{H}$. We claim that $\mathcal{H}^u$ is also intersection-closed.

To see this, note that any element in $\mathcal{H}^u$ can be written as $h \oplus u = h \oplus m \oplus f = (h \oplus f) \oplus m$ for some $h \in \mathcal{H}$.
Since $h \oplus f \in \mathcal{H}^f$, we have $\mathcal{H}^u = \{ x \oplus m \mid x \in \mathcal{H}^f \}$.
Furthermore, because $m$ is the minimum element of $\mathcal{H}^f$, we have $m_i \le x_i$ for all $x \in \mathcal{H}^f$ and $i \in \mathcal{X}$, which implies $x \oplus m = x - m$.
Therefore, for any $x, y \in \mathcal{H}^f$, we have:
\[
(x \oplus m) \wedge (y \oplus m) = (x - m) \wedge (y - m) = (x \wedge y) - m = (x \wedge y) \oplus m.
\]
Since $\mathcal{H}^f$ is intersection-closed, we have $x \wedge y \in \mathcal{H}^f$, which implies $(x \wedge y) \oplus m \in \mathcal{H}^u$.
Thus, $\mathcal{H}^u$ is intersection-closed.
\end{proof}

\Cref{thm:poly_time_closure} now follows immediately.

\begin{proof}[Proof of \Cref{thm:poly_time_closure}]
By \Cref{lem:shift_in_H}, our algorithm can proceed as follows: Iterate through each $u \in \mathcal{H}$, and check whether $\cH^u$ is intersection-closed. Note that the check can be performed in polynomial-time by iterating through all $a, b \in \cH^u$ and check whether $a \wedge b$ belongs to $\cH^u$.
\end{proof}

% ENDALLOWEDIT

\end{document}